\documentclass[11pt]{article}

\usepackage[a4paper,margin=1in]{geometry}
\usepackage{amsmath,amssymb,amsfonts,amsthm,mathtools,bm}
\usepackage{authblk}
\usepackage{booktabs,longtable,array,multirow}
\usepackage{enumitem}
\usepackage{algorithm}
\usepackage{algpseudocode}
\usepackage{hyperref}
\usepackage{xcolor}
\usepackage{caption}
\usepackage{bbm}
\usepackage{subcaption}
\usepackage{graphicx}
\usepackage{graphicx}
\usepackage{booktabs}
\usepackage{longtable}
\usepackage{array}
\usepackage{adjustbox}
\usepackage{multirow}
\usepackage{xcolor}
\usepackage{caption}
\usepackage{subcaption}
\hypersetup{
    colorlinks=true
}

\newtheorem{assumption}{Assumption}
\newtheorem{definition}{Definition}
\newtheorem{problem}{Problem}

\newtheorem{lemma}{Lemma}
\newtheorem{theorem}{Theorem}

\newcommand{\E}{\mathbb{E}}
\newcommand{\Prob}{\mathbb{P}}
\newcommand{\Var}{\mathrm{Var}}
\newcommand{\CVaR}{\mathrm{CVaR}}

\newcommand{\argmin}{\operatorname*{arg\,min}}

\newcommand{\one}{\mathbf{1}}
\newcommand{\cD}{\mathcal{D}}
\newcommand{\cG}{\mathcal{G}}
\newcommand{\cX}{\mathcal{X}}

\newcommand{\R}{\mathbb{R}}

\title{ Bayesian Multi-Topology Express Transportation Network Design
under Posterior Predictive Demand, Sorting-Efficiency and Delivery-Time Uncertainty}

\author[1,2]{Debashis Chatterjee\thanks{Corresponding author: \href{mailto:debashis.chatterjee@visva-bharati.ac.in}{debashis.chatterjee@visva-bharati.ac.in}}}
\affil[1]{Department of Statistics, Siksha-Bhavana (Institute of Science), Visva-Bharati University, Santiniketan, West Bengal, India}
\date{}

\begin{document}
\maketitle

\begin{abstract}
Express transportation network design is intrinsically uncertain because origin--destination demand, travel time, transportation cost, hub congestion, and realized sorting productivity fluctuate over time. Existing bi-objective express transportation network design models typically optimize operating cost and maximum arrival time under fixed input parameters, and then compare a finite set of admissible topological structures. Such deterministic designs may be cost-effective under nominal data but fragile under demand surges, route disruption, and hub-level productivity variation. This paper develops a Bayesian multi-topology express transportation network design framework in which origin--destination demand, travel time, cost multipliers, and effective hub sorting reliability are learned from historical or benchmark-calibrated data and propagated through posterior predictive scenarios. For each topology---fully connected, single-allocation hub-and-spoke, multiple-allocation hub-and-spoke, $R$-allocation hub-and-spoke, and their direct-link hybrid analogues---posterior scenarios are embedded into a mixed-integer design evaluation. The proposed design criterion combines posterior expected operating cost, posterior conditional value-at-risk of maximum arrival time, service-level reliability, hub hold-time reliability, and emission-aware penalties. A Bayesian multi-structure parallel design methodology is developed by integrating posterior simulation, sample-average approximation, topology-specific network optimization, Pareto comparison, and Bayes-risk selection. Theoretical results establish existence of a Bayes-optimal topology-design pair, almost-sure convergence of posterior scenario risks, consistency of scenario-optimal designs under a strict risk gap, and stability of topology selection under posterior concentration. Simulation experiments and a real CAB hub-location benchmark case study show that the proposed Bayesian method can deliberately trade a modest cost increase for substantial reductions in future arrival-time tail risk and major improvements in hold-time reliability. The framework therefore provides a statistically principled and operationally interpretable alternative to deterministic multi-topology express transportation network design.
\end{abstract}

\noindent\textbf{Keywords:} Bayesian network design; express logistics; hub location; posterior predictive optimization; multi-topology transportation networks; conditional value-at-risk; chance constraints; sample-average approximation; sorting-efficiency uncertainty; reliability-aware optimization.

\noindent\textbf{MSC 2020 classification:} 62C10; 62F15; 90B06; 90B10; 90C11; 90C15; 90C90.

\section{Introduction}

The rapid growth of courier and express-delivery systems has made transportation network design a central problem in logistics, operations research and data-driven industrial engineering. An express transportation network must decide which locations should operate as hubs, how non-hub nodes should be allocated to hubs, which direct links should be opened, how flows should be routed, and what sorting capacities should be installed at hubs. These decisions jointly determine operating cost, service time, resilience to congestion, and customer-perceived delivery reliability.

A recent deterministic formulation by Zhong, Wang, Li and Garc\'ia \cite{Zhong2023} considered a bi-objective express transportation network design problem under multiple topological structures. Their model compares seven commonly used network topologies: fully connected (FC), single-allocation hub-and-spoke (SAHS), multiple-allocation hub-and-spoke (MAHS), $R$-allocation hub-and-spoke (RAHS), direct-connected single-allocation hub-and-spoke (DSAHS), direct-connected multiple-allocation hub-and-spoke (DMAHS), and direct-connected $R$-allocation hub-and-spoke (DRAHS). Their main methodological contribution is a multi-structure parallel design methodology in which each topology is optimized separately, and then compared using a preference-based multi-objective selection rule. The model also treats hub sorting efficiency as a decision variable, which is highly relevant to the courier industry.

However, deterministic network design has a fundamental limitation: the values inserted into the optimization model are rarely known with certainty. Origin--destination parcel flows vary by weekday, season, festival, region, weather, macroeconomic activity and platform campaign. Travel times fluctuate because of traffic, road disruption, rainfall, and vehicle availability. Hub sorting productivity may degrade under labor shortage, machine downtime, congestion, or peak-season overload. Transportation cost and effective vehicle capacity can also vary. Thus, a topology that is optimal for a single nominal data table may be statistically fragile. A more defensible design should optimize with respect to the posterior predictive distribution of future logistics conditions.

The purpose of this paper is to formulate a Bayesian version of multi-topology express transportation network design. The central idea is to preserve the operational richness of the multi-topology mixed-integer design framework, but replace deterministic inputs by posterior predictive distributions. The proposed Bayesian model learns demand, travel time, cost and sorting productivity from historical data; generates posterior predictive scenarios; solves topology-specific scenario-based mixed-integer optimization problems; and selects a final topology using posterior Bayes risk, posterior Pareto dominance and service-level reliability. This converts deterministic network comparison into a Bayesian decision problem.

The main contributions of this paper are as follows.
\begin{enumerate}[label=(C\arabic*)]
    \item We propose a Bayesian multi-topology express transportation network design problem in which the topology class itself is a decision dimension.
    \item We formulate hierarchical Bayesian models for origin--destination demand, travel time, transportation cost and realized hub sorting capacity.
    \item We replace deterministic cost and maximum-arrival-time objectives by posterior expected cost and posterior tail risk of maximum arrival time, measured by conditional value-at-risk.
    \item We introduce Bayesian hold-time and service-level chance constraints, giving probabilistic guarantees for hub processing and late-delivery control.
    \item We develop a Bayesian multi-structure parallel design methodology based on posterior simulation, sample-average approximation and mixed-integer optimization.
    \item We prove existence, scenario convergence and topology-selection stability results under transparent assumptions.
\end{enumerate}

The paper is methodological. It is intended as a rigorous Bayesian extension of deterministic multi-topology express transportation network design rather than a mere parametric perturbation of an existing model.

\section{Related Literature}

The proposed model lies at the intersection of hub location, stochastic network design, Bayesian decision theory, and risk-averse stochastic optimization.

Hub location models originate from the recognition that flow consolidation through hub facilities can generate economies of scale. O'Kelly \cite{OKelly1987} formulated a quadratic integer programming model for interacting hub facilities. Campbell \cite{Campbell1992} studied location-allocation with transshipments and transportation economies of scale. Surveys and perspective papers by Alumur and Kara \cite{AlumurKara2008}, Farahani et al. \cite{Farahani2013}, Gelareh and Nickel \cite{GelarehNickel2011}, and Alumur et al. \cite{Alumur2021} provide broad accounts of hub location models, classifications and applications. Allocation strategies in hub networks, including single, multiple and restricted allocation, were systematically discussed by Yaman \cite{Yaman2011}. Capacitated and vehicle-type-aware hub network designs are treated by Serper and Alumur \cite{SerperAlumur2016}. The design of courier-service networks and frequency-delay models is considered by Lin, Zhao and Lin \cite{Lin2020}. Flexible multimodal hub network design is studied by Real et al. \cite{Real2021}. Stochastic single-allocation hub location is investigated by Rostami et al. \cite{Rostami2021}, while hazardous-material hub network design under uncertainty is considered by Zahiri and Suresh \cite{Zahiri2021}.

The deterministic baseline most directly related to this paper is Zhong et al. \cite{Zhong2023}, who developed a bi-objective mixed-integer nonlinear optimization formulation for express transportation networks under multiple topological structures. Their paper is important because it recognizes that topology choice is not a secondary modeling detail but a primary design decision. It also makes sorting efficiency a decision variable. The present paper builds on that insight, but moves from deterministic parameter optimization to posterior predictive decision optimization.

Bayesian decision theory provides the philosophical and mathematical basis for selecting actions under uncertainty by minimizing posterior expected loss \cite{Berger1985,Gelman2013}. Posterior simulation and Markov chain Monte Carlo are standard tools for evaluating posterior summaries and predictive distributions \cite{GelfandSmith1990,Gelman2013}. Stochastic programming provides the optimization counterpart, especially when uncertain data enter the objective or constraints \cite{Prekopa1995,Shapiro2009,RuszczynskiShapiro2003}. The sample-average approximation method justifies replacing an expectation over random scenarios by an empirical average over simulated scenarios \cite{Kleywegt2002}. Chance-constrained programming goes back to Charnes and Cooper \cite{CharnesCooper1959}; finite-scenario mixed-integer formulations for probabilistic constraints are developed, among others, by Luedtke, Ahmed and Nemhauser \cite{Luedtke2010}. Risk-averse optimization using conditional value-at-risk follows Rockafellar and Uryasev \cite{RockafellarUryasev2000}. Robust optimization gives an alternative non-Bayesian uncertainty paradigm \cite{BenTal2009}; the present work is Bayesian rather than worst-case robust because uncertainty is learned and updated through a posterior distribution.

\section{Deterministic Multi-Topology Network Design: Baseline}

Let $N=\{1,\ldots,n\}$ denote the set of demand nodes and let $H\subseteq N$ denote the set of candidate hubs. Let $M=\{1,\ldots,m\}$ denote vehicle types. For each ordered origin--destination pair $(i,j)\in N\times N$, let $w_{ij}$ denote parcel flow and $d_{ij}$ denote distance. A network topology class is denoted by
\[
\cG=\{G_1,\ldots,G_7\},
\]
where the seven elements correspond to FC, SAHS, MAHS, RAHS, DSAHS, DMAHS and DRAHS.

For a topology $G_g$, a deterministic design vector is written as
\[
x_g=(X,Y,Z,S,f,e),
\]
where:
\begin{itemize}
    \item $X_{ik}\in\{0,1\}$ equals one if node $i$ is assigned to hub $k$;
    \item $Y_{ijkl}\in\{0,1\}$ equals one if flow from $i$ to $j$ passes through hubs $k$ and $l$;
    \item $Z_{kl}\in\{0,1\}$ equals one if the inter-hub link $(k,l)$ is opened;
    \item $S_{ij}\in\{0,1\}$ equals one if the direct non-hub link $(i,j)$ is opened;
    \item $f_{ikl}\ge 0$ is the flow from origin $i$ passing through hub arc $(k,l)$;
    \item $e_k\ge 0$ is the installed sorting capacity at hub $k$.
\end{itemize}

The deterministic model minimizes two objectives:
\[
\min_{x_g\in\cX_g}\{C_g(x_g), A_g(x_g)\},
\]
where $C_g(x_g)$ is operating cost and $A_g(x_g)$ is maximum arrival time. The feasible region $\cX_g$ depends on topology-specific allocation, routing, flow-conservation and direct-link constraints. For example, under SAHS each non-hub node is allocated to exactly one hub:
\[
\sum_{k\in H}X_{ik}=1,\qquad i\in N,
\]
whereas under MAHS this equality is relaxed to permit multiple allocations. Under RAHS, one imposes
\[
\sum_{k\in H}X_{ik}\le R,\qquad i\in N.
\]
Hybrid structures allow selected direct links through $S_{ij}$, whereas pure hub-and-spoke structures disallow such direct non-hub connections.

The deterministic model is useful, but it assumes fixed future values of $w_{ij}$, travel time, cost and sorting performance. The Bayesian model below replaces these fixed quantities by posterior predictive random variables.

\section{Bayesian Data-Generating Model}

Suppose historical data are observed over periods $t=1,\ldots,T$. The data may include parcel counts, realized travel times, link costs, hub throughput, hub processing times, weather indicators and calendar variables. Denote the full dataset by
\[
\cD_T=\{w_{ij,t},\tau_{ij,t},c_{ij,t},Q_{k,t},B_{k,t},z_{ij,t},r_{ij,t}: i,j\in N,\ k\in H,\ t=1,\ldots,T\},
\]
where $Q_{k,t}$ is processed hub load and $B_{k,t}$ is observed processing or sorting time at hub $k$ in period $t$.

Let $\Theta$ denote all unknown statistical parameters. The posterior distribution is
\[
\pi(\Theta\mid \cD_T)
=
\frac{L(\cD_T\mid \Theta)\pi(\Theta)}
{\int L(\cD_T\mid \theta)\pi(\theta)\,d\theta}.
\]
The posterior predictive distribution of future logistics conditions is
\[
p(\Xi^{\mathrm{new}}\mid \cD_T)
=
\int p(\Xi^{\mathrm{new}}\mid \Theta)\pi(\Theta\mid \cD_T)\,d\Theta,
\]
where $\Xi^{\mathrm{new}}$ collects future demand, travel time, cost and realized hub sorting productivity.

\subsection{Origin--destination demand model}

Parcel demand is nonnegative, integer-valued and often overdispersed. We therefore use a negative-binomial hierarchical model:
\[
w_{ij,t}\mid \lambda_{ij,t},\phi_w
\sim
\mathrm{NegBin}(\lambda_{ij,t},\phi_w),
\]
parameterized so that
\[
\E(w_{ij,t}\mid\lambda_{ij,t})=\lambda_{ij,t},
\qquad
\Var(w_{ij,t}\mid\lambda_{ij,t})=
\lambda_{ij,t}+\frac{\lambda_{ij,t}^{2}}{\phi_w}.
\]
The log-intensity is
\[
\log \lambda_{ij,t}
=
\alpha_w
+
a_i^{(o)}
+
a_j^{(d)}
+
b_{r(i),r(j)}
+
\bm\beta_w^\top z_{ij,t}
+
u_t.
\]
Here $a_i^{(o)}$ is an origin effect, $a_j^{(d)}$ is a destination effect, $b_{r(i),r(j)}$ is a region-pair effect, $z_{ij,t}$ is a covariate vector, and $u_t$ is a temporal effect. We use the hierarchical priors
\[
a_i^{(o)}\sim N(0,\sigma_o^2),
\qquad
a_j^{(d)}\sim N(0,\sigma_d^2),
\qquad
b_{rs}\sim N(0,\sigma_b^2),
\]
and
\[
u_t=\rho u_{t-1}+\epsilon_t,\qquad
\epsilon_t\sim N(0,\sigma_u^2),\qquad |\rho|<1.
\]
Weakly informative priors may be used:
\[
\alpha_w\sim N(0,10^2),\quad
\bm\beta_w\sim N(\bm 0,10^2I),\quad
\sigma_o,\sigma_d,\sigma_b,\sigma_u\sim \mathrm{half}\text{-}t_\nu(0,s),
\quad
\phi_w\sim \mathrm{Gamma}(a_\phi,b_\phi).
\]

\subsection{Travel-time model}

Let $\tau_{ij,t}$ be the realized travel time from $i$ to $j$. A lognormal model is used:
\[
\log \tau_{ij,t}\mid \mu^\tau_{ij,t},\sigma_\tau^2
\sim
N(\mu^\tau_{ij,t},\sigma_\tau^2),
\]
where
\[
\mu^\tau_{ij,t}
=
\alpha_\tau
+
\delta_\tau\log d_{ij}
+
\bm\beta_\tau^\top r_{ij,t}
+
\xi_{ij}.
\]
Here $r_{ij,t}$ may contain rainfall, traffic intensity, road-class indicators, holiday effects or fuel-supply disruption indicators. The dyad effect $\xi_{ij}$ accounts for persistent link-level heterogeneity:
\[
\xi_{ij}\sim N(0,\sigma_\xi^2).
\]

\subsection{Transportation-cost model}

Let $c_{ijm,t}$ denote the variable cost per trip or per unit distance on arc $(i,j)$ using vehicle type $m$. A simple Bayesian model is
\[
\log c_{ijm,t}
\sim
N(\mu^c_{ijm,t},\sigma_c^2),
\]
where
\[
\mu^c_{ijm,t}
=
\alpha_c+\alpha_m^{(v)}+\delta_c\log d_{ij}
+\bm\beta_c^\top h_{ij,t}.
\]
The vector $h_{ij,t}$ may contain fuel price, toll index, road disruption, inflation index, and vehicle-type indicators. More complex models may allow $c_{ijm,t}$ to be correlated with travel time.

\subsection{Bayesian hub sorting-efficiency model}

Let $e_k$ denote installed sorting capacity at hub $k$, chosen by the optimization model. The actual realized fraction of installed capacity available in period $t$ is modeled by a random reliability factor $R_{k,t}\in(0,1]$:
\[
\widetilde e_{k,t}=e_k R_{k,t}.
\]
We model
\[
\mathrm{logit}(R_{k,t})
\sim
N(\mu_{R,k},\sigma_R^2),
\qquad
\mu_{R,k}\sim N(\mu_R,\sigma_{\mu R}^2).
\]
Let $L_{k,t}(x_g,w_t)$ denote the random parcel load assigned to hub $k$ under topology $G_g$ and decision $x_g$. A congestion-adjusted sorting time is modeled as
\[
H_{k,t}(x_g,\Xi_t)
=
\frac{L_{k,t}(x_g,w_t)}{\widetilde e_{k,t}}
+
\gamma_k
\left\{
\frac{L_{k,t}(x_g,w_t)}{\widetilde e_{k,t}}
\right\}^{2},
\]
where $\gamma_k\ge 0$ is a hub-specific congestion coefficient. The quadratic term is a parsimonious way to express nonlinear delay under high utilization; if congestion information is not available, one may set $\gamma_k=0$ and recover a linear sorting-time model.

\section{Bayesian Multi-Topology Design Variables and Feasible Sets}

For each topology $G_g$, let $\cX_g$ denote the feasible set of mixed-integer design decisions satisfying the corresponding structural constraints. The global design space is the disjoint union
\[
\cX=\bigcup_{g=1}^{7}\{g\}\times \cX_g.
\]
A complete design is a pair $(g,x_g)$, where $g$ selects the topology class and $x_g$ selects hubs, links, assignments, flows and capacities under that class.

For compactness, we write the general constraints as:
\[
x_g\in\cX_g
\quad\Longleftrightarrow\quad
\begin{cases}
\text{demand satisfaction constraints,}\\
\text{flow conservation constraints,}\\
\text{hub-opening and assignment compatibility constraints,}\\
\text{allocation-strategy constraints,}\\
\text{direct-link admissibility constraints,}\\
\text{binary and nonnegativity restrictions.}
\end{cases}
\]

The essential topology-dependent constraints are summarized in Table~\ref{tab:topology}.

\begin{table}[htbp]
\centering
\caption{Topology-dependent structural restrictions.}
\label{tab:topology}
\begin{tabular}{lll}
\toprule
Topology & Allocation rule & Direct non-hub links \\
\midrule
FC & no hub allocation required & all OD pairs direct \\
SAHS & $\sum_{k\in H}X_{ik}=1$ & not allowed \\
MAHS & multiple hub assignments allowed & not allowed \\
RAHS & $\sum_{k\in H}X_{ik}\le R$ & not allowed \\
DSAHS & $\sum_{k\in H}X_{ik}=1$ & allowed selectively \\
DMAHS & multiple hub assignments allowed & allowed selectively \\
DRAHS & $\sum_{k\in H}X_{ik}\le R$ & allowed selectively \\
\bottomrule
\end{tabular}
\end{table}

\section{Posterior Predictive Objectives}

For a future scenario $\Xi$, define the random operating cost of design $(g,x_g)$ by
\[
C_g(x_g,\Xi)
=
F_g(x_g)
+
V_g(x_g,\Xi)
+
S_g(x_g,\Xi)
+
K_g(x_g),
\]
where $F_g$ is fixed facility and line-opening cost, $V_g$ is transportation cost, $S_g$ is sorting cost, and $K_g$ is the cost of installing sorting capacity.

The posterior expected operating cost is
\[
\mathcal{C}_g(x_g)
=
\E\{C_g(x_g,\Xi)\mid \cD_T\}.
\]

Let $A_{ij,g}(x_g,\Xi)$ denote the arrival time for OD pair $(i,j)$ under topology $G_g$. For a hub-routed shipment using first hub $k$ and second hub $l$, a representative expression is
\[
A_{ij,g}(x_g,\Xi)
=
\tau_{ik}
+
H_{k}(x_g,\Xi)
+
\tau_{kl}
+
H_l(x_g,\Xi)
+
\tau_{lj},
\]
with terms omitted or modified according to topology. For a direct link,
\[
A_{ij,g}^{\mathrm{direct}}(x_g,\Xi)=\tau_{ij}+s_i,
\]
where $s_i$ is local service time. The maximum arrival time is
\[
A_g(x_g,\Xi)
=
\max_{i,j\in N} A_{ij,g}(x_g,\Xi).
\]

The posterior mean maximum arrival time is
\[
\mathcal{A}_g(x_g)
=
\E\{A_g(x_g,\Xi)\mid \cD_T\}.
\]
However, express logistics is more sensitive to late-tail performance than to average performance. 
We therefore define the posterior tail-risk objective as the conditional value-at-risk of the 
posterior predictive maximum arrival time:
\begin{equation}
\label{eq:posterior_tail_objective}
\mathcal{T}_{g,\alpha}(x_g)
=
\CVaR_\alpha\!\left(A_g(x_g,\Xi)\mid \cD_T\right),
\qquad 0<\alpha<1.
\end{equation}
Following the Rockafellar--Uryasev optimization representation of conditional value-at-risk
\cite{RockafellarUryasev2000}, equation~\eqref{eq:posterior_tail_objective} can be written as
\begin{equation}
\label{eq:posterior_cvar_ru}
\mathcal{T}_{g,\alpha}(x_g)
=
\min_{\zeta_g\in\R}
\left[
\zeta_g
+
\frac{1}{1-\alpha}
\E\left\{
\left(A_g(x_g,\Xi)-\zeta_g\right)_+
\mid \cD_T
\right\}
\right],
\end{equation}
where \((u)_+=\max\{u,0\}\).  The scalar \(\zeta_g\) is an auxiliary
VaR-type threshold for topology \(G_g\).  Thus, minimizing
\(\mathcal{T}_{g,\alpha}(x_g)\) penalizes not only the typical maximum
arrival time, but also the expected excess delay in the worst
\(100(1-\alpha)\%\) posterior predictive delivery scenarios.
\section{Bayesian Chance Constraints}

The deterministic hold-time constraint requires hub processing time to be below a fixed threshold. In the Bayesian model the processing time is random. Hence, for every opened hub $k$, impose
\[
\Prob\left\{
H_k(x_g,\Xi)\le d_t
\mid \cD_T
\right\}
\ge 1-\varepsilon_k,
\qquad k\in H.
\]
Similarly, if $T^\star$ is a target maximum delivery time, impose
\[
\Prob\left\{
A_g(x_g,\Xi)\le T^\star
\mid \cD_T
\right\}
\ge 1-\varepsilon_T.
\]
These constraints have direct managerial interpretation: after learning from historical data, the chosen design must satisfy a specified posterior reliability level.

\begin{definition}[Posterior service reliability]
For a topology-design pair $(g,x_g)$ and service target $T^\star$, define
\[
R_T(g,x_g)
=
\Prob\{A_g(x_g,\Xi)\le T^\star\mid \cD_T\}.
\]
A design is $(T^\star,\varepsilon_T)$-reliable if $R_T(g,x_g)\ge 1-\varepsilon_T$.
\end{definition}

\section{Bayesian Bi-Objective and Scalar Bayes-Risk Formulations}

The Bayesian bi-objective problem is
\begin{equation}
\label{eq:biobj}
\min_{g\in\{1,\ldots,7\},\,x_g\in\cX_g}
\left\{
\mathcal{C}_g(x_g),
\mathcal{T}_{g,\alpha}(x_g)
\right\},
\end{equation}
subject to Bayesian chance constraints.

For decision-making, one may scalarize the two objectives by posterior Bayes risk:
\begin{equation}
\label{eq:bayesrisk}
\rho(g,x_g)
=
p_c\,\frac{\mathcal{C}_g(x_g)-C_{\min}}{C_{\max}-C_{\min}}
+
p_t\,\frac{\mathcal{T}_{g,\alpha}(x_g)-T_{\min}}{T_{\max}-T_{\min}}
+
p_r\,\Prob\{A_g(x_g,\Xi)>T^\star\mid \cD_T\},
\end{equation}
where $p_c,p_t,p_r\ge0$ and $p_c+p_t+p_r=1$. The constants $C_{\min},C_{\max},T_{\min},T_{\max}$ are normalization constants computed over a candidate solution set. If normalization is not desired, the unscaled version
\[
\rho_0(g,x_g)
=
\omega_c\mathcal{C}_g(x_g)
+
\omega_t\mathcal{T}_{g,\alpha}(x_g)
+
\omega_r\Prob\{A_g(x_g,\Xi)>T^\star\mid \cD_T\}
\]
may be used.

\begin{problem}[Bayesian multi-topology express transportation network design]
Find
\[
(\widehat g,\widehat x)
\in
\argmin_{g,x_g}
\rho(g,x_g)
\]
subject to $x_g\in\cX_g$ and the posterior chance constraints.
\end{problem}

\section{Posterior Scenario Approximation}

Draw $B$ posterior predictive scenarios
\[
\Xi^{(1)},\ldots,\Xi^{(B)}
\sim p(\Xi\mid \cD_T).
\]
The posterior expected cost is approximated by
\[
\widehat{\mathcal{C}}_{g,B}(x_g)
=
\frac{1}{B}\sum_{b=1}^{B}C_g(x_g,\Xi^{(b)}).
\]
The CVaR term is approximated by introducing variables $\zeta_g$ and $u_b$:
\[
\widehat{\CVaR}_{\alpha,B}
=
\min_{\zeta_g,u_1,\ldots,u_B}
\left[
\zeta_g+\frac{1}{(1-\alpha)B}\sum_{b=1}^{B}u_b
\right],
\]
subject to
\[
u_b\ge A_g(x_g,\Xi^{(b)})-\zeta_g,\qquad
u_b\ge0,\qquad b=1,\ldots,B.
\]
The finite-scenario optimization problem is therefore:
\begin{align}
\min_{x_g,\zeta_g,u}\quad
&
p_c\frac{1}{B}\sum_{b=1}^{B}C_g(x_g,\Xi^{(b)})
+
p_t\left[
\zeta_g+\frac{1}{(1-\alpha)B}\sum_{b=1}^{B}u_b
\right]
+
p_r\frac{1}{B}\sum_{b=1}^{B}v_b
\label{eq:saa}\\
\text{s.t.}\quad
&
u_b\ge A_g(x_g,\Xi^{(b)})-\zeta_g,\quad u_b\ge0,
\label{eq:cvarcon}\\
&
v_b\ge \one\{A_g(x_g,\Xi^{(b)})>T^\star\},
\label{eq:lateindicator}\\
&
x_g\in\cX_g,
\end{align}
with scenario versions of the chance constraints:
\[
\frac{1}{B}\sum_{b=1}^{B}
\one\{H_k(x_g,\Xi^{(b)})\le d_t\}
\ge 1-\varepsilon_k,
\]
and
\[
\frac{1}{B}\sum_{b=1}^{B}
\one\{A_g(x_g,\Xi^{(b)})\le T^\star\}
\ge 1-\varepsilon_T.
\]
In practice, indicator constraints can be implemented using binary variables and big-$M$ formulations.

\section{Bayesian Multi-Structure Parallel Design Methodology}

The proposed Bayesian algorithm is called Bayesian Multi-Structure Parallel Design Methodology (B-MS-PDM).

\begin{algorithm}[htbp]
\caption{Bayesian Multi-Structure Parallel Design Methodology}
\label{alg:bmspdm}
\begin{algorithmic}[1]
\Require Historical logistics data $\cD_T$; topology set $\cG$; posterior scenario size $B$; risk level $\alpha$; reliability tolerances $\varepsilon_k,\varepsilon_T$.
\State Fit Bayesian demand, travel-time, cost and sorting-productivity models.
\State Draw posterior samples $\Theta^{(1)},\ldots,\Theta^{(B)}\sim \pi(\Theta\mid \cD_T)$.
\State Generate posterior predictive scenarios $\Xi^{(1)},\ldots,\Xi^{(B)}$.
\For{$g=1,\ldots,7$}
    \State Construct topology-specific feasible region $\cX_g$.
    \State Solve the scenario-based mixed-integer optimization problem for topology $G_g$.
    \State Store feasible near-optimal solutions and compute posterior cost, posterior CVaR, posterior late probability and service reliability.
    \State Extract the non-dominated posterior Pareto set $\widehat P_g$.
\EndFor
\State Form $\widehat P=\bigcup_{g=1}^{7}\widehat P_g$.
\State Select $(\widehat g,\widehat x)$ by minimum posterior Bayes risk or by maximum posterior probability of being best.
\Ensure Bayesian topology-design pair $(\widehat g,\widehat x)$ and posterior reliability summaries.
\end{algorithmic}
\end{algorithm}

A useful diagnostic is the posterior probability that topology $G_g$ is best:
\[
\widehat\pi_g^{\mathrm{best}}
=
\frac{1}{B}\sum_{b=1}^{B}
\one\left[
g=
\argmin_{\ell\in\{1,\ldots,7\}}
\rho_b(\ell,\widehat x_\ell)
\right],
\]
where $\rho_b$ is the scenario-specific loss. This quantity replaces deterministic topology ranking by Bayesian model-based topology evidence.


\section{Theoretical Properties}
\label{sec:theory}

This section records theoretical properties of the proposed Bayesian multi-topology design problem.  The purpose is not to claim computational tractability for arbitrary large-scale instances, but to establish that the posterior decision problem is mathematically well-defined and that the finite posterior-scenario approximation used in the algorithm is statistically consistent.

Let
\[
\mathfrak X
=
\bigcup_{g=1}^{7}\{g\}\times \mathcal X_g
\]
denote the global topology-design space.  A generic feasible design is written as \(z=(g,x_g)\in\mathfrak X\).  Let \(\Xi\sim P_T(\cdot)=P(\cdot\mid\mathcal D_T)\) denote one posterior predictive logistics scenario, including future OD demand, travel times, cost multipliers, and realized hub reliability.  For a design \(z\), let \(C(z,\Xi)\) be operating cost, \(A(z,\Xi)\) the maximum arrival time, and \(H_k(z,\Xi)\) the hub hold-time load at candidate hub \(k\).  For a fixed risk level \(\alpha\in(0,1)\), define
\[
\mathcal T_\alpha(z)
=
\mathrm{CVaR}_\alpha\{A(z,\Xi)\mid \mathcal D_T\}.
\]
The posterior Bayes risk is
\begin{equation}
\label{eq:formal_bayes_risk}
\rho_T(z)
=
w_C\,\mathbb E_T\{C(z,\Xi)\}
+
w_A\,\mathcal T_\alpha(z)
+
w_L\,P_T\{A(z,\Xi)>T^\star\}
+
w_H\sum_{k\in H}P_T\{H_k(z,\Xi)>d_t\},
\end{equation}
where \(w_C,w_A,w_L,w_H\ge0\), and \(\mathbb E_T\) denotes expectation under the posterior predictive distribution \(P_T\).

\begin{assumption}[Finite engineering design space]
\label{ass:finite_design}
For each topology \(G_g\), the feasible set \(\mathcal X_g\) is nonempty and finite.  This holds, for example, when all binary topology, hub, allocation, routing and direct-link decisions are finite, and when installed sorting capacities are selected from a bounded engineering grid.
\end{assumption}

\begin{assumption}[Posterior integrability]
\label{ass:integrability}
For every feasible design \(z\in\mathfrak X\),
\[
\mathbb E_T|C(z,\Xi)|<\infty,
\qquad
\mathbb E_T|A(z,\Xi)|<\infty,
\]
and, for every candidate hub \(k\in H\),
\[
P_T\{|H_k(z,\Xi)|<\infty\}=1.
\]
\end{assumption}

\begin{assumption}[Positive reliability thresholds]
\label{ass:thresholds}
The service threshold \(T^\star\) and hub hold threshold \(d_t\) are finite constants.  The violation indicators
\[
\mathbbm 1\{A(z,\Xi)>T^\star\},
\qquad
\mathbbm 1\{H_k(z,\Xi)>d_t\}
\]
are measurable for every \(z\in\mathfrak X\) and \(k\in H\).
\end{assumption}

\begin{definition}[Bayes-optimal topology-design pair]
\label{def:bayes_optimal}
A feasible design \(z_T^\star=(g_T^\star,x_{g_T^\star}^\star)\in\mathfrak X\) is called Bayes-optimal if
\[
z_T^\star\in \arg\min_{z\in\mathfrak X}\rho_T(z).
\]
\end{definition}

\begin{theorem}[Existence of a Bayes-optimal topology-design pair]
\label{thm:existence}
Under Assumptions~\ref{ass:finite_design}--\ref{ass:thresholds}, the Bayesian multi-topology express transportation network design problem admits at least one Bayes-optimal topology-design pair.
\end{theorem}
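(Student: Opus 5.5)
The argument reduces existence to minimizing a real-valued function over a finite nonempty set. By Assumption~\ref{ass:finite_design}, each $\mathcal X_g$ is nonempty and finite. Hence the global space $\mathfrak X=\bigcup_{g=1}^{7}\{g\}\times\mathcal X_g$ is a finite disjoint union of seven nonempty finite sets, so it is itself nonempty and finite. It then suffices to show that $\rho_T(z)$ in \eqref{eq:formal_bayes_risk} is a well-defined finite real number for every $z\in\mathfrak X$. A finite nonempty set of reals has a minimum, and any minimizer is Bayes-optimal in the sense of Definition~\ref{def:bayes_optimal}.

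I will verify finiteness term by term. The first term $\mathbb E_T\{C(z,\Xi)\}$ is finite because $\mathbb E_T|C(z,\Xi)|<\infty$ by Assumption~\ref{ass:integrability}. The probability terms $P_T\{A(z,\Xi)>T^\star\}$ and $P_T\{H_k(z,\Xi)>d_t\}$ are well defined because the indicators are measurable with $T^\star,d_t$ finite (Assumption~\ref{ass:thresholds}), and each lies in $[0,1]$. The sum over $k\in H$ is finite since $H\subseteq N$ is finite. The weights are nonnegative finite constants, so the weighted sum is finite once the CVaR term is.

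The only step needing an argument is showing $\mathcal T_\alpha(z)\in\mathbb R$; I expect this to be the main, though mild, obstacle. I will use the Rockafellar--Uryasev representation \eqref{eq:posterior_cvar_ru}, setting
\[
\Phi(\zeta)=\zeta+\tfrac{1}{1-\alpha}\mathbb E_T\{(A(z,\Xi)-\zeta)_+\},
\]
which is finite for every $\zeta$ because $(A-\zeta)_+\le |A|+|\zeta|$ and $\mathbb E_T|A|<\infty$.

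\emph{Upper bound.} Taking $\zeta=0$ gives $\mathcal T_\alpha(z)\le \Phi(0)\le \tfrac{1}{1-\alpha}\mathbb E_T|A|<\infty$.

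\emph{Lower bound.} Since $(u)_+\ge u$ and $1/(1-\alpha)\ge 1$, for $\zeta$ with $A-\zeta$ arbitrary we get $\Phi(\zeta)\ge \zeta+\tfrac{1}{1-\alpha}(\mathbb E_T A-\zeta)$. A cleaner route is Jensen's inequality: $\Phi(\zeta)\ge \zeta+(\mathbb E_T A-\zeta)_+\ge \mathbb E_T A$. Hence $\mathcal T_\alpha(z)\ge \mathbb E_T A(z,\Xi)>-\infty$.

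\emph{Attainment.} The infimum is attained at the $\alpha$-quantile of $A(z,\Xi)$ under $P_T$. This quantile is a finite real number because $A(z,\Xi)$ is an almost surely finite random variable.

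Therefore $\rho_T(z)\in\mathbb R$ for all $z\in\mathfrak X$, and $\min_{z\in\mathfrak X}\rho_T(z)$ exists. This yields $z_T^\star=(g_T^\star,x^\star_{g_T^\star})$. The chance constraints, if imposed, only restrict the design to a subset of $\mathfrak X$. That subset remains finite, and it is nonempty provided some design satisfies the constraints.
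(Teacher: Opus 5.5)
Your proof is correct and follows the same route as the paper: $\mathfrak X$ is finite and nonempty, each term of $\rho_T(z)$ is finite, the CVaR term is bounded above by taking $\zeta=0$ in the Rockafellar--Uryasev formula, and a real-valued function on a finite set attains its minimum. Your Jensen bound $\mathcal T_\alpha(z)\ge\mathbb E_T A(z,\Xi)$ supplies a step the paper leaves implicit, since the paper only bounds CVaR from above; you should, however, delete your first attempted bound $\zeta+\frac{1}{1-\alpha}(\mathbb E_T A-\zeta)$, because it tends to $-\infty$ as $\zeta\to\infty$ and so gives no lower bound on the infimum.
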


For posterior-scenario computation, let
\[
\Xi^{(1)},\ldots,\Xi^{(B)}
\stackrel{\mathrm{iid}}{\sim}P_T
\]
be posterior predictive scenarios.  Define the empirical cost and violation probabilities by
\[
\widehat C_B(z)=\frac1B\sum_{b=1}^{B}C(z,\Xi^{(b)}),
\]
\[
\widehat p_{L,B}(z)=\frac1B\sum_{b=1}^{B}\mathbbm 1\{A(z,\Xi^{(b)})>T^\star\},
\]
and
\[
\widehat p_{H,k,B}(z)=\frac1B\sum_{b=1}^{B}\mathbbm 1\{H_k(z,\Xi^{(b)})>d_t\}.
\]
The empirical CVaR estimator is
\begin{equation}
\label{eq:empirical_cvar_formal}
\widehat{\mathcal T}_{\alpha,B}(z)
=
\min_{\zeta\in\mathbb R}
\left[
\zeta+
\frac{1}{(1-\alpha)B}
\sum_{b=1}^{B}
\{A(z,\Xi^{(b)})-\zeta\}_+
\right].
\end{equation}
The empirical posterior risk is
\begin{equation}
\label{eq:empirical_risk_formal}
\widehat\rho_{T,B}(z)
=
w_C\widehat C_B(z)
+
w_A\widehat{\mathcal T}_{\alpha,B}(z)
+
w_L\widehat p_{L,B}(z)
+
w_H\sum_{k\in H}\widehat p_{H,k,B}(z).
\end{equation}

\begin{lemma}[Consistency of the empirical CVaR functional]
\label{lem:cvar_consistency}
Let \(Y\) be a real-valued random variable satisfying \(\mathbb E|Y|<\infty\), and let \(Y_1,Y_2,\ldots\) be iid copies of \(Y\).  Define
\[
\mathrm{CVaR}_\alpha(Y)
=
\min_{\zeta\in\mathbb R}
\left[
\zeta+\frac{1}{1-\alpha}\mathbb E(Y-\zeta)_+
\right],
\]
and
\[
\widehat{\mathrm{CVaR}}_{\alpha,B}
=
\min_{\zeta\in\mathbb R}
\left[
\zeta+\frac{1}{(1-\alpha)B}\sum_{b=1}^{B}(Y_b-\zeta)_+
\right].
\]
Then
\[
\widehat{\mathrm{CVaR}}_{\alpha,B}
\longrightarrow
\mathrm{CVaR}_\alpha(Y)
\quad\text{almost surely as }B\to\infty.
\]
\end{lemma}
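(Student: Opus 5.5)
Write $\phi(\zeta)=\zeta+\frac{1}{1-\alpha}\mathbb E(Y-\zeta)_+$ and $\widehat\phi_B(\zeta)=\zeta+\frac{1}{(1-\alpha)B}\sum_{b=1}^B(Y_b-\zeta)_+$, so that $\mathrm{CVaR}_\alpha(Y)=\min_\zeta\phi(\zeta)$ and $\widehat{\mathrm{CVaR}}_{\alpha,B}=\min_\zeta\widehat\phi_B(\zeta)$. Both functions are convex in $\zeta$, and both are Lipschitz with constant $\max\{1,\alpha/(1-\alpha)\}$, because $\zeta\mapsto(y-\zeta)_+$ is $1$-Lipschitz. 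The plan has three steps. First, obtain uniform almost-sure convergence of $\widehat\phi_B$ to $\phi$ on a compact interval. Second, show that with probability one the minimizers of $\widehat\phi_B$ eventually lie in a fixed compact interval. Third, conclude by the elementary inequality $|\min_K f-\min_K g|\le\sup_K|f-g|$.

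\emph{Uniform convergence.} For each fixed $\zeta$ we have $\mathbb E(Y-\zeta)_+\le\mathbb E|Y|+|\zeta|<\infty$, so the strong law of large numbers gives $\widehat\phi_B(\zeta)\to\phi(\zeta)$ almost surely. Intersecting over the countably many rational $\zeta$ gives one full-probability event on which this holds for all rational $\zeta$ simultaneously. Fix a compact interval $K=[a,c]$ and $\delta>0$, and take a finite rational $\delta$-grid of $K$. The common Lipschitz constant $L$ gives $\sup_K|\widehat\phi_B-\phi|\le\max_{\text{grid}}|\widehat\phi_B-\phi|+2L\delta$. Letting $B\to\infty$ and then $\delta\downarrow0$ yields $\sup_K|\widehat\phi_B-\phi|\to0$ almost surely. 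Alternatively, since convexity is available, one may cite the standard fact that pointwise convergence of convex functions on a dense set is uniform on compacts.

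\emph{Localization of minimizers.} This is the main obstacle, since $\zeta$ ranges over all of $\mathbb R$. I would use the explicit structure of the minimizers. The minimizers of $\phi$ form the $\alpha$-quantile interval of $Y$. The empirical objective $\widehat\phi_B$ is piecewise linear, and it is minimized at an empirical $\alpha$-quantile, namely $Y_{(\lceil\alpha B\rceil)}$. To see this, note that its right derivative is $1-\frac{1}{(1-\alpha)B}\#\{b:Y_b>\zeta\}$, which changes sign exactly there.

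Now choose $a<q^-_\alpha$ and $c>q^+_\alpha$ so that $F(a)<\alpha$ and $F(c)>\alpha$, where $q^\pm_\alpha$ are the lower and upper $\alpha$-quantiles of $Y$. By the strong law, $F_B(a)\to F(a)$ and $F_B(c)\to F(c)$ almost surely, where $F_B$ is the empirical distribution function. Hence eventually $F_B(a)<\alpha<F_B(c)$, which forces the empirical $\alpha$-quantile into $[a,c]$. Therefore, almost surely and for all large $B$, $\min_{\mathbb R}\widehat\phi_B=\min_K\widehat\phi_B$. Also $\min_{\mathbb R}\phi=\min_K\phi$, because the quantile interval lies inside $K$.

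A fallback argument uses convexity alone. Pick $a<c$ with $\phi(a),\phi(c)>\min\phi+\eta$ and with a minimizer $\zeta^\ast\in(a,c)$. Uniform convergence on $K$ gives, eventually, $\widehat\phi_B(a),\widehat\phi_B(c)>\widehat\phi_B(\zeta^\ast)$. By convexity, $\widehat\phi_B$ is then at least $\widehat\phi_B(\zeta^\ast)$ everywhere outside $(a,c)$, so its infimum is attained inside $K$. This version avoids the quantile computation entirely.

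\emph{Conclusion.} On the full-probability event constructed above,
\[
\bigl|\widehat{\mathrm{CVaR}}_{\alpha,B}-\mathrm{CVaR}_\alpha(Y)\bigr|
=\bigl|\min_K\widehat\phi_B-\min_K\phi\bigr|
\le\sup_K|\widehat\phi_B-\phi|\longrightarrow0 .
\]
The only integrability used is $\mathbb E|Y|<\infty$, which is needed for the strong law applied to $(Y-\zeta)_+$ and for finiteness of $\phi$. This is exactly the hypothesis of the lemma. Applying the lemma to $Y=A(z,\Xi)$ is justified by Assumption~\ref{ass:integrability}.
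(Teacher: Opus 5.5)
Your proof is correct, but it takes a different route from the paper's.

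The paper argues at the level of distributions. It uses three ingredients:
\begin{itemize}
\item $W_1(F_B,F)\to0$ almost surely;
\item the quantile representation $\mathrm{CVaR}_\alpha(F)=\frac{1}{1-\alpha}\int_\alpha^1F^{-1}(u)\,du$;
\item the identity $W_1(F_B,F)=\int_0^1|F_B^{-1}-F^{-1}|\,du$.
\end{itemize}
Together these give the Lipschitz bound $|\mathrm{CVaR}_\alpha(F_B)-\mathrm{CVaR}_\alpha(F)|\le W_1(F_B,F)/(1-\alpha)$. The paper then notes that the empirical Rockafellar--Uryasev minimum is exactly $\mathrm{CVaR}_\alpha(F_B)$.

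You instead keep the variational formula and run a sample-average-approximation argument in $\zeta$. You apply the strong law on the rationals, use the common Lipschitz constant $\max\{1,\alpha/(1-\alpha)\}$ to get uniform convergence on compacts, and localize an empirical minimizer inside a fixed interval $[a,c]$. The localization steps check out: the choice $F(a)<\alpha<F(c)$, the order-statistic count placing $Y_{(\lceil\alpha B\rceil)}$ in $(a,c]$, and the convexity fallback are all valid.

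What your route buys is that it is more elementary and self-contained. It needs neither the equivalence of the Rockafellar--Uryasev and quantile-integral forms of CVaR nor the almost-sure $W_1$ convergence of empirical measures. It also shows directly that both minima are attained.

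What the paper's route buys is a quantitative continuity estimate that is not tied to iid sampling. The paper reuses it in Step 1 of the proof of Theorem~\ref{thm:posterior_topology_stability}, to pass from $W_1(P_T,P_0)\to0$ to convergence of the CVaR term. With your approach that step would need its own rerun of the argument. That rerun is available: $W_1$ convergence gives pointwise convergence of $\zeta\mapsto\mathbb E(Y-\zeta)_+$, since $(y-\zeta)_+$ is $1$-Lipschitz, and the same convexity localization then applies.
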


\begin{theorem}[Uniform almost-sure convergence of posterior scenario risks]
\label{thm:uniform_saa}
Under Assumptions~\ref{ass:finite_design}--\ref{ass:thresholds},
\[
\sup_{z\in\mathfrak X}
\left|
\widehat\rho_{T,B}(z)-\rho_T(z)
\right|
\longrightarrow 0
\quad\text{almost surely as }B\to\infty.
\]
\end{theorem}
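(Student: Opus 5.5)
The argument is pointwise strong laws of large numbers, upgraded to uniform convergence because the design space is finite. By Assumption~\ref{ass:finite_design}, each $\mathcal X_g$ is finite and there are seven topologies, so $\mathfrak X$ is a finite set. Hence
\[
\sup_{z\in\mathfrak X}\bigl|\widehat\rho_{T,B}(z)-\rho_T(z)\bigr| = \max_{z\in\mathfrak X}\bigl|\widehat\rho_{T,B}(z)-\rho_T(z)\bigr|,
\]
and a finite intersection of probability-one events still has probability one. It therefore suffices to fix $z\in\mathfrak X$ and show $\widehat\rho_{T,B}(z)\to\rho_T(z)$ almost surely. We then intersect the resulting null-complement events over the finitely many $z$.

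For fixed $z$, we split $\widehat\rho_{T,B}(z)-\rho_T(z)$ into the four terms of \eqref{eq:empirical_risk_formal} minus \eqref{eq:formal_bayes_risk} and bound it by the triangle inequality with weights $w_C,w_A,w_L,w_H$.

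\textbf{Cost term.} The variables $C(z,\Xi^{(b)})$ are iid with $\mathbb E_T|C(z,\Xi)|<\infty$ by Assumption~\ref{ass:integrability}. Kolmogorov's strong law therefore gives $\widehat C_B(z)\to\mathbb E_T C(z,\Xi)$ almost surely.

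\textbf{Violation terms.} The indicators $\mathbbm 1\{A(z,\Xi^{(b)})>T^\star\}$ and $\mathbbm 1\{H_k(z,\Xi^{(b)})>d_t\}$ are measurable by Assumption~\ref{ass:thresholds}. They are iid and bounded, so the strong law gives $\widehat p_{L,B}(z)\to P_T\{A>T^\star\}$ and $\widehat p_{H,k,B}(z)\to P_T\{H_k>d_t\}$ almost surely for each $k$. Since $H$ is finite, the sum over $k$ converges as well.

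\textbf{CVaR term.} We apply Lemma~\ref{lem:cvar_consistency} with $Y=A(z,\Xi)$. Its integrability hypothesis is exactly $\mathbb E_T|A(z,\Xi)|<\infty$ from Assumption~\ref{ass:integrability}, so $\widehat{\mathcal T}_{\alpha,B}(z)\to\mathcal T_\alpha(z)$ almost surely.

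Adding the four terms gives pointwise almost-sure convergence. Intersecting the at most $(3+|H|)\,|\mathfrak X|$ probability-one events, which is finitely many, yields an event of probability one on which all the convergences hold simultaneously. On that event the maximum over the finite set $\mathfrak X$ tends to zero, which proves the theorem.

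The only substantive ingredient is the CVaR convergence, and it is delegated to Lemma~\ref{lem:cvar_consistency}. If that lemma were not available, we would prove it directly. The map $\zeta\mapsto \zeta+\frac{1}{1-\alpha}\mathbb E(Y-\zeta)_+$ is convex, and its minimizers lie in a compact interval around the $\alpha$-quantile; the empirical minimizers lie, for all large $B$ almost surely, in a fixed compact interval around that quantile because the empirical quantiles converge. On this compact set, the strong law at rational points together with convexity (or the $1/(1-\alpha)$-Lipschitz property in $\zeta$) gives uniform almost-sure convergence of the empirical objective, and hence convergence of the minima. The one technical care point is that the null sets must not depend on $z$, and finiteness of $\mathfrak X$ handles this directly.
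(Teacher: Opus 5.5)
Your proposal is correct and follows the paper's proof essentially step for step. Both use the strong law for the cost and violation-indicator terms, Lemma~\ref{lem:cvar_consistency} for the CVaR term, and a finite intersection of probability-one events over the finite set $\mathfrak X$ to get uniform convergence.
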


\begin{theorem}[Consistency of scenario-optimal designs under a strict Bayes-risk gap]
\label{thm:argmin_consistency}
Assume the conditions of Theorem~\ref{thm:uniform_saa}.  Suppose the Bayes-optimal design \(z_T^\star\) is unique and satisfies the strict separation condition
\[
\Delta_T
=
\min_{z\in\mathfrak X:\,z\ne z_T^\star}
\{\rho_T(z)-\rho_T(z_T^\star)\}
>0.
\]
Let
\[
\widehat z_{T,B}\in\arg\min_{z\in\mathfrak X}\widehat\rho_{T,B}(z)
\]
be any empirical posterior-scenario optimizer.  Then
\[
\widehat z_{T,B}=z_T^\star
\]
for all sufficiently large \(B\), almost surely.
\end{theorem}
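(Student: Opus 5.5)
The plan is to apply Theorem~\ref{thm:uniform_saa} directly and then use the strict gap \(\Delta_T\) to separate the true minimizer from every competitor. Since \(\mathfrak X\) is finite and nonempty by Assumption~\ref{ass:finite_design}, the minimum defining \(\Delta_T\) runs over finitely many designs. It is attained, and by hypothesis it is strictly positive; if \(\mathfrak X=\{z_T^\star\}\) the claim is trivial. Let \(\Omega_0\) be the probability-one event on which
\[
\delta_B:=\sup_{z\in\mathfrak X}\bigl|\widehat\rho_{T,B}(z)-\rho_T(z)\bigr|\to0,
\]
as guaranteed by Theorem~\ref{thm:uniform_saa}. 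We work pathwise on \(\Omega_0\).

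Fix \(\omega\in\Omega_0\) and choose \(B_0(\omega)\) such that \(\delta_B<\Delta_T/2\) for all \(B\ge B_0(\omega)\). Take any \(z\ne z_T^\star\) and any such \(B\). Then
\[
\widehat\rho_{T,B}(z)\ge\rho_T(z)-\delta_B\ge\rho_T(z_T^\star)+\Delta_T-\delta_B>\rho_T(z_T^\star)+\delta_B\ge\widehat\rho_{T,B}(z_T^\star).
\]
So every competitor has strictly larger empirical risk than \(z_T^\star\). The empirical argmin is nonempty because it is a minimum over a finite set. It therefore equals the singleton \(\{z_T^\star\}\), and any selection \(\widehat z_{T,B}\) must coincide with \(z_T^\star\) for all \(B\ge B_0(\omega)\). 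Because this holds for every \(\omega\in\Omega_0\), the conclusion holds almost surely.

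There is no real obstacle in this argument; all of the analytic work sits in Theorem~\ref{thm:uniform_saa}. The one point that needs care is that the statement must hold for \emph{any} measurable selection of the empirical optimizer. This is exactly why I derive the strict inequality, which makes the empirical argmin a singleton, rather than a weak one, which would only show that \(z_T^\star\) is among the minimizers. I would also point out that \(B_0\) is random, so the result is an eventual almost-sure identity and not a uniform rate.
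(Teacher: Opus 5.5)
Your proposal is correct and follows essentially the same route as the paper's proof: invoke the uniform almost-sure convergence of Theorem~\ref{thm:uniform_saa}, then use the strict gap $\Delta_T$ to show that $z_T^\star$ is the unique empirical minimizer for all sufficiently large $B$. The only difference is cosmetic: you take the threshold $\Delta_T/2$ where the paper uses $\Delta_T/3$, and both give the strict inequality needed to cover any choice of empirical optimizer.
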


\begin{assumption}[Posterior predictive concentration]
\label{ass:posterior_concentration}
There exists a limiting predictive distribution \(P_0\) such that, as \(T\to\infty\), the posterior predictive distribution \(P_T=P(\cdot\mid\mathcal D_T)\) satisfies
\[
W_1(P_T,P_0)\longrightarrow 0
\]
in \(P_0\)-probability, where \(W_1\) denotes the first Wasserstein distance on the space of logistics scenarios.
\end{assumption}

\begin{assumption}[Continuity and domination of design losses]
\label{ass:continuity_domination}
For each feasible design \(z\in\mathfrak X\), the functions \(C(z,\xi)\) and \(A(z,\xi)\) are continuous in the scenario argument \(\xi\) except possibly on a \(P_0\)-null set, and are dominated by an integrable envelope with respect to all sufficiently large posterior predictive distributions.  In addition, for every \(z\in\mathfrak X\),
\[
P_0\{A(z,\Xi)=T^\star\}=0,
\qquad
P_0\{H_k(z,\Xi)=d_t\}=0
\quad\text{for all }k\in H.
\]
\end{assumption}

\begin{theorem}[Stability of topology selection under posterior concentration]
\label{thm:posterior_topology_stability}
Assume Assumptions~\ref{ass:finite_design}--\ref{ass:continuity_domination}.  Let
\[
\rho_0(z)
=
w_C\,\mathbb E_0\{C(z,\Xi)\}
+
w_A\,\mathrm{CVaR}_{\alpha,0}\{A(z,\Xi)\}
+
w_L\,P_0\{A(z,\Xi)>T^\star\}
+
w_H\sum_{k\in H}P_0\{H_k(z,\Xi)>d_t\}
\]
be the limiting predictive risk.  Suppose \(z_0^\star=(g_0^\star,x_0^\star)\) is the unique minimizer of \(\rho_0\) over \(\mathfrak X\).  Then any posterior Bayes-risk minimizer
\[
z_T^\star\in\arg\min_{z\in\mathfrak X}\rho_T(z)
\]
satisfies
\[
P\{z_T^\star=z_0^\star\}\longrightarrow 1
\quad\text{as }T\to\infty.
\]
Consequently, the selected topology \(g_T^\star\) converges in probability to the limiting optimal topology \(g_0^\star\).
\end{theorem}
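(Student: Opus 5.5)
The plan is to reduce the statement to pointwise convergence in probability of the finitely many risks $\rho_T(z)$, $z\in\mathfrak X$, and then reuse the strict-gap argument behind Theorem~\ref{thm:argmin_consistency}. By Assumption~\ref{ass:finite_design}, $\mathfrak X$ is finite and $z_0^\star$ is the unique minimizer of $\rho_0$. Hence
\[
\Delta_0=\min_{z\ne z_0^\star}\{\rho_0(z)-\rho_0(z_0^\star)\}>0 .
\]
On the event $\max_{z\in\mathfrak X}|\rho_T(z)-\rho_0(z)|<\Delta_0/2$, every $z\ne z_0^\star$ satisfies
\[
\rho_T(z)>\rho_0(z)-\Delta_0/2\ge\rho_0(z_0^\star)+\Delta_0/2>\rho_T(z_0^\star).
\]
So $z_0^\star$ is then the unique minimizer of $\rho_T$, and any $z_T^\star$ equals $z_0^\star$. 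Since a finite maximum of terms tending to zero in probability also tends to zero in probability, it suffices to show $\rho_T(z)\to\rho_0(z)$ in $P_0$-probability for each fixed $z$. The statement about $g_T^\star$ then follows because $g$ is a coordinate of $z$.

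For fixed $z$, I would argue along subsequences. Given any subsequence, Assumption~\ref{ass:posterior_concentration} yields a further subsequence along which $W_1(P_T,P_0)\to0$ almost surely. $W_1$ convergence implies weak convergence together with convergence of first moments, so on that almost-sure event $P_T\Rightarrow P_0$. I then treat the four terms of \eqref{eq:formal_bayes_risk} separately.

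\emph{Cost term.} By the continuous mapping theorem (using continuity of $C(z,\cdot)$ off a $P_0$-null set, Assumption~\ref{ass:continuity_domination}), the law of $C(z,\Xi)$ under $P_T$ converges weakly to its law under $P_0$. The integrable envelope gives uniform integrability, so $\E_T C(z,\Xi)\to\E_0 C(z,\Xi)$.

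\emph{Probability terms.} The same continuous-mapping step applies to $A(z,\Xi)$ and $H_k(z,\Xi)$. I need $H_k(z,\cdot)$ to be $P_0$-a.e.\ continuous, which I would take as implicit in Assumption~\ref{ass:continuity_domination} (it is built from loads and reliability factors). The no-atom conditions $P_0\{A=T^\star\}=0$ and $P_0\{H_k=d_t\}=0$ say the thresholds are continuity points of the limiting distribution functions. By the portmanteau theorem, $P_T\{A>T^\star\}\to P_0\{A>T^\star\}$, and likewise for each $H_k$.

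\emph{CVaR term.} This is the main obstacle. I would use the representation $\CVaR_\alpha(Y)=\frac{1}{1-\alpha}\int_\alpha^1 F_Y^{-1}(u)\,du$. Weak convergence $Y_T\Rightarrow Y_0$ gives $F_{Y_T}^{-1}(u)\to F_{Y_0}^{-1}(u)$ at almost every $u$. Uniform integrability of $Y_T=A(z,\Xi)$, from the envelope, makes $\{F_{Y_T}^{-1}\}$ uniformly integrable on $(0,1)$, since $F_{Y_T}^{-1}(U)$ has the law of $Y_T$. Vitali's theorem then gives convergence of $\int_\alpha^1 F_{Y_T}^{-1}$.

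An alternative route uses the Rockafellar--Uryasev form, as in Lemma~\ref{lem:cvar_consistency}. The maps $\zeta\mapsto\zeta+\frac{1}{1-\alpha}\E(Y-\zeta)_+$ are convex and converge pointwise by uniform integrability. Their minimizers lie in a compact set because the quantile families are tight, so the minimum values converge.

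\emph{Conclusion.} Combining the four terms with the nonnegative weights, $\rho_T(z)\to\rho_0(z)$ along the further subsequence almost surely. By the subsequence characterization, this gives convergence in probability along the full sequence, which completes the reduction in the first paragraph.
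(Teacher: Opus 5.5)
Your proposal is correct and follows the same structure as the paper's proof. Both show $\rho_T(z)\to\rho_0(z)$ in probability componentwise, using continuity plus domination for the cost, the portmanteau theorem with the no-atom conditions for the threshold probabilities, and a quantile/$W_1$ argument for CVaR, then pass to a finite maximum and close with the strict-gap argument on $\Delta_0$. Your subsequence device and the Vitali argument on quantile functions justify two steps the paper only asserts: that in-probability $W_1$ convergence carries over to the functionals, and that the law of $A(z,\Xi)$ converges in $W_1$, so the $\frac{1}{1-\alpha}$-Lipschitz bound applies. You are also right that continuity of $H_k(z,\cdot)$ is needed but only implicit in Assumption~\ref{ass:continuity_domination}.
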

\section{Simulation Verification of the Bayesian Multi-Topology Design Methodology}
\label{sec:simulation_verification}

This section verifies the proposed Bayesian multi-topology express transportation network design methodology through a controlled simulation study.  The purpose of the experiment is not merely to reproduce a deterministic network-design calculation, but to examine whether posterior-predictive decision making can select a network design that is more reliable under future operational uncertainty.  The simulation is therefore deliberately constructed to include three empirically plausible logistics regimes: ordinary operating days, sale/surge days, and storm/disruption days.  These regimes induce heterogeneous origin--destination demand, heavy-tailed travel times, uncertain hub sorting productivity, and cost fluctuations.  Such features are precisely the circumstances under which a deterministic nominal design may appear attractive under mean conditions but may perform poorly in the posterior tail.

Let \(N=\{1,\ldots,n\}\) denote the set of demand nodes and let \(H\subset N\) be the set of candidate hubs.  For every directed origin--destination pair \((i,j)\), \(i\neq j\), the historical parcel count is denoted by \(w_{ij,t}\), the realized travel time by \(\tau_{ij,t}\), and the cost multiplier by \(\kappa_{ij,t}\).  For each hub \(k\in H\), the observed hub productivity/reliability proxy is denoted by \(R_{k,t}\).  The posterior predictive simulation is based on four conjugate Bayesian components: a Gamma--Poisson model for OD demand, a lognormal Normal--Inverse-Gamma model for travel time, a Beta model for hub reliability, and a lognormal Normal--Inverse-Gamma model for cost multipliers.  Posterior predictive scenarios are then embedded into the seven topology classes considered in the network-design model:
\[
\mathrm{FC},\quad \mathrm{SAHS},\quad \mathrm{MAHS},\quad \mathrm{RAHS},\quad
\mathrm{DSAHS},\quad \mathrm{DMAHS},\quad \mathrm{DRAHS}.
\]
For each topology, a finite candidate design set is generated by varying hub subsets, sorting-capacity multipliers, and direct-connection intensities.  Each candidate is evaluated over posterior predictive scenarios using posterior expected cost, posterior conditional value-at-risk of maximum arrival time, service reliability, hub-hold reliability, and an emission index.  The selected Bayesian design is the candidate minimizing the posterior risk score
\begin{equation}
\label{eq:sim_bayes_score}
\mathcal{R}(x)
=
\omega_c\,\widetilde C(x)
+
\omega_t\,\widetilde{\mathrm{CVaR}}_{0.90}\{A(x)\}
+
\omega_e\,\widetilde E(x)
+
\lambda_s\{1-\widehat p_s(x)\}
+
\lambda_h\{1-\widehat p_h(x)\},
\end{equation}
where \(\widetilde C(x)\), \(\widetilde{\mathrm{CVaR}}_{0.90}\{A(x)\}\), and \(\widetilde E(x)\) are normalized posterior expected cost, normalized posterior tail risk of maximum arrival time, and normalized emission index, respectively.  The quantities \(\widehat p_s(x)\) and \(\widehat p_h(x)\) denote estimated posterior service reliability and hub-hold reliability.  In the reported experiment, \((\omega_c,\omega_t,\omega_e)=(0.25,0.55,0.20)\), so the decision rule is intentionally risk-aware while still penalizing cost and emissions.

\subsection{Synthetic operating environment and posterior learning}
\label{subsec:synthetic_environment}

Table~\ref{tab:synthetic_summary} summarizes the generated logistics environment.  The simulated system contains nine demand nodes and three candidate hubs.  The historical training window contains 100 days and 72 directed OD pairs.  The regime composition is intentionally unbalanced: most days are normal, but a non-negligible fraction corresponds to sale/surge and storm/disruption days.  This matters statistically because the posterior predictive distribution must learn both routine variation and rare-but-important stress behavior.

\begin{table}[H]
\centering
\caption{Summary of the synthetic logistics environment used for posterior-predictive verification.}
\label{tab:synthetic_summary}
\begin{adjustbox}{max width=0.98\textwidth}
\begin{tabular}{ll}
\toprule
Quantity & Value \\
\midrule
Number of nodes & 9 \\
Candidate hubs (1-based indices) & \(\{4,8,3\}\) \\
Historical days & 100 \\
Directed OD pairs & 72 \\
Normal/surge/storm days & normal: 67; sale/surge: 24; storm/disruption: 9 \\
Mean daily OD demand & 14.307 \\
Median daily OD demand & 10.0 \\
Maximum daily OD demand & 187 \\
Mean observed travel time (hours) & 9.572 \\
95th percentile observed travel time (hours) & 23.384 \\
Mean cost multiplier & 1.048 \\
Mean observed hub reliability & 0.876 \\
\bottomrule
\end{tabular}
\end{adjustbox}
\end{table}

The posterior summaries in Table~\ref{tab:posterior_summary} confirm that the Bayesian updating step is not a cosmetic addition.  The OD demand model learns a posterior mean daily OD intensity of 14.3069, while the travel-time posterior has mean log travel time 1.9082.  The hub reliability posterior yields an average reliability of 0.8671 across candidate hubs.  These learned posterior quantities are subsequently propagated into the decision layer rather than replaced by point estimates.

\begin{table}[H]
\centering
\caption{Manual Bayesian posterior updating summary for the simulation experiment.}
\label{tab:posterior_summary}
\begin{adjustbox}{max width=0.99\textwidth}
\begin{tabular}{llll}
\toprule
Component & Bayesian model & Key posterior quantity & Value \\
\midrule
OD demand & Gamma--Poisson posterior predictive & Mean posterior daily OD intensity & 14.3069 \\
Travel time & Lognormal Normal--Inverse-Gamma posterior & Mean posterior log travel time & 1.9082 \\
Hub reliability & Beta posterior from sorting-success pseudo-counts & Mean reliability across candidate hubs & 0.8671 \\
Cost multiplier & Lognormal Normal--Inverse-Gamma posterior & Mean posterior log cost multiplier & 0.0404 \\
\bottomrule
\end{tabular}
\end{adjustbox}
\end{table}

Figure~\ref{fig:geo_hubs} shows the spatial configuration of the synthetic logistics network and the selected candidate hub locations.  Figure~\ref{fig:demand_heatmap} displays the posterior mean OD demand matrix.  The non-uniformity of the heatmap is important: a multi-topology model is useful precisely because a symmetric or homogeneous demand field would not expose meaningful differences among hub-spoke and hybrid network structures.

\begin{figure}[H]
\centering
\includegraphics[width=0.7\textwidth]{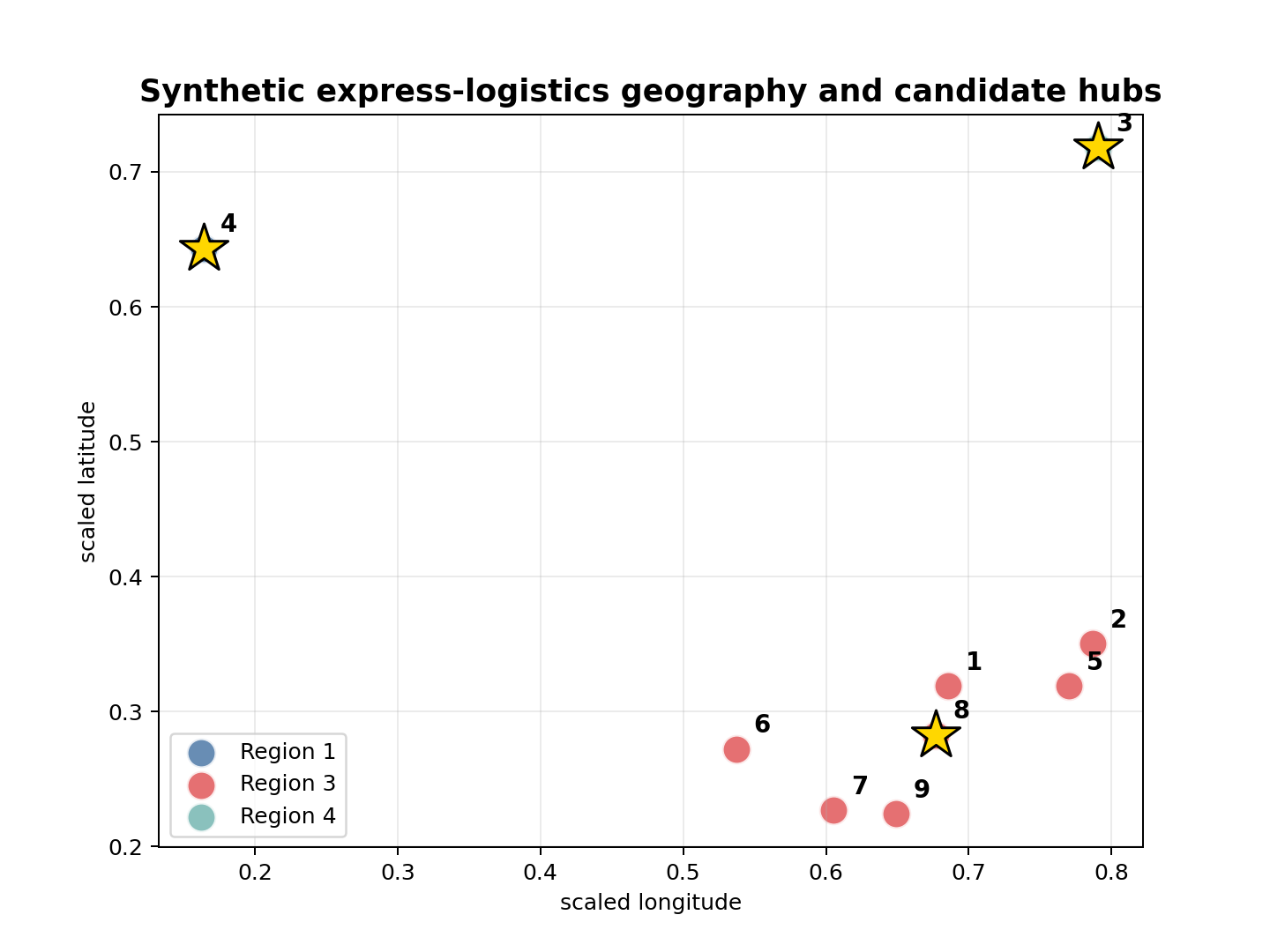}
\caption{Synthetic geography of demand nodes and candidate hubs.  The network is deliberately heterogeneous, so that topology, hub selection, and direct links have nontrivial effects on posterior cost and arrival-time risk.}
\label{fig:geo_hubs}
\end{figure}

\begin{figure}[H]
\centering
\includegraphics[width=0.82\textwidth]{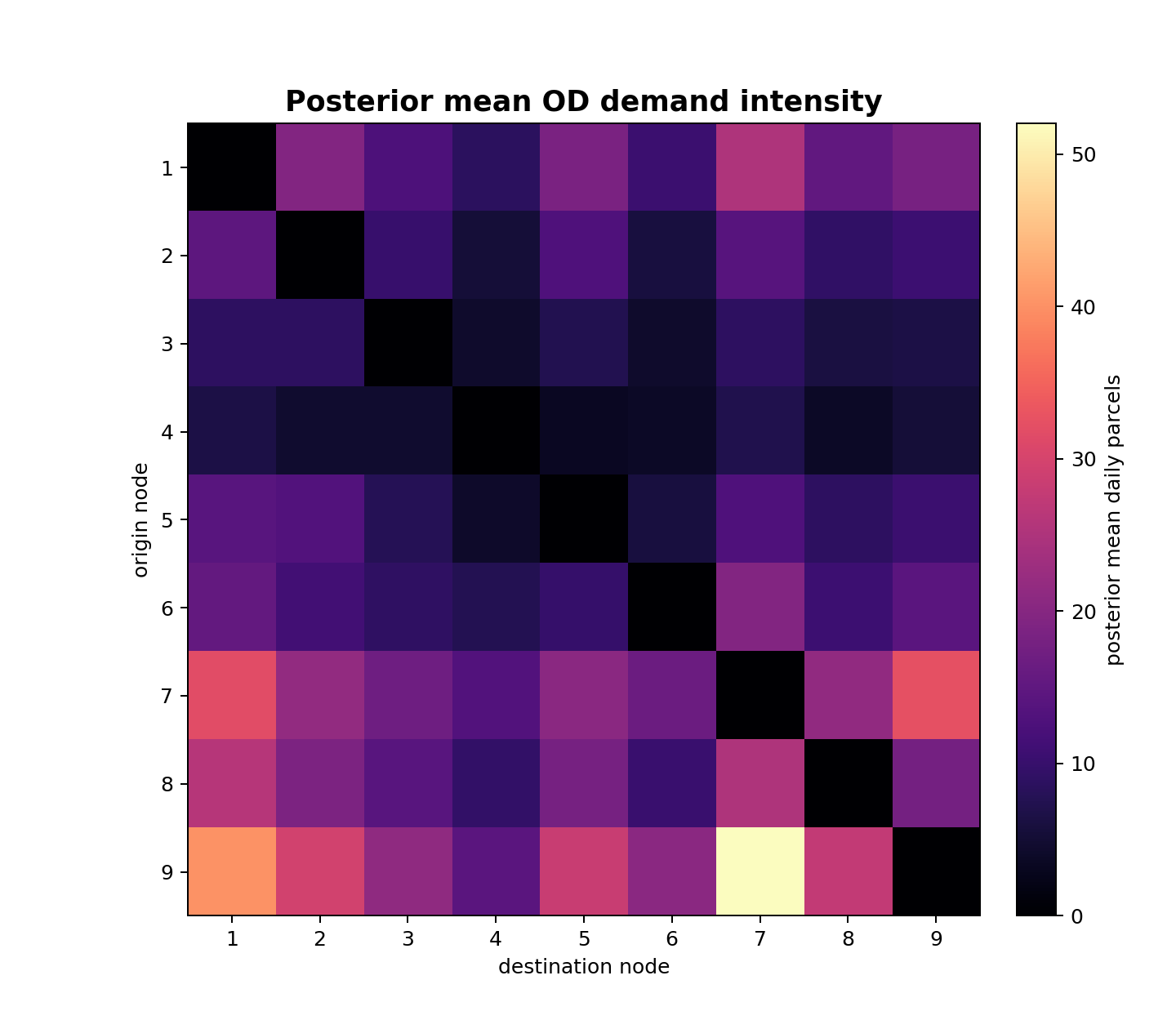}
\caption{Posterior mean directed OD demand matrix.  The heatmap shows that the demand field is anisotropic and imbalanced, which motivates a posterior-predictive multi-topology design rather than a fixed topology chosen a priori.}
\label{fig:demand_heatmap}
\end{figure}

\subsection{Candidate topology-design set}
\label{subsec:candidate_set}

Table~\ref{tab:candidate_counts} gives the number of candidate designs evaluated for each topology.  In total, 136 candidate designs are evaluated across the seven topology classes.  Hybrid topologies have larger candidate sets because they include both hub allocation decisions and direct-connection intensity decisions.

\begin{table}[H]
\centering
\caption{Number of candidate network designs evaluated by topology.}
\label{tab:candidate_counts}
\begin{tabular}{lr}
\toprule
Topology & Candidate designs \\
\midrule
DSAHS & 42 \\
DMAHS & 24 \\
DRAHS & 24 \\
SAHS & 21 \\
RAHS & 12 \\
MAHS & 12 \\
FC & 1 \\
\bottomrule
\end{tabular}
\end{table}

The posterior cost--risk tradeoff is shown in Figure~\ref{fig:tradeoff}.  The fully connected topology is not automatically optimal: although it can reduce some routing complexity, it carries a substantially higher expected operating cost and still suffers from posterior travel-time tail risk under disrupted corridors.  The best designs lie on a cost--risk frontier dominated by hub-spoke and hybrid structures.  The selected Bayesian design is a DSAHS design with hubs \(\{4,8,3\}\), sorting-capacity multiplier 1.85, and direct fraction 0.12.

\begin{figure}[H]
\centering
\includegraphics[width=0.86\textwidth]{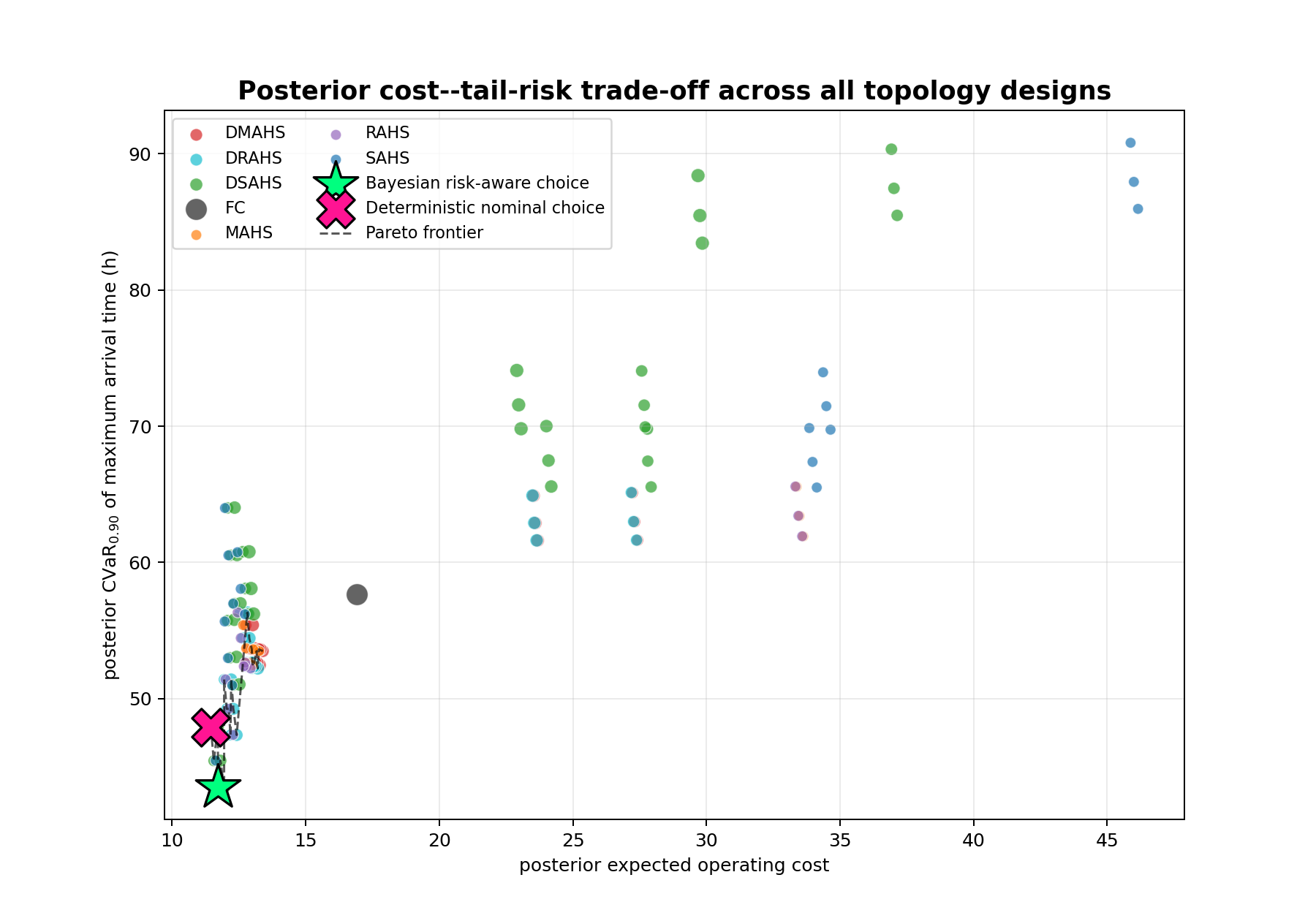}
\caption{Posterior expected cost versus posterior CVaR of maximum arrival time for all candidate topology designs.  The selected Bayesian design lies near the lower-left risk-efficient region, balancing operating cost with tail protection.}
\label{fig:tradeoff}
\end{figure}

\subsection{Posterior performance by topology}
\label{subsec:topology_performance}

Table~\ref{tab:best_by_topology} reports the best posterior design within each topology class.  The Bayesian selection is DSAHS with hubs \(4,8,3\), four direct links, capacity multiplier 1.85, and direct fraction 0.12.  It has the smallest posterior Bayes-risk score, equal to 0.004706.  Its posterior expected cost is 11.713541 million, posterior mean maximum arrival time is 32.512597 hours, and posterior CVaR of maximum arrival time is 43.465497 hours.  Both service reliability and hub-hold reliability are equal to 1.000000 in the posterior evaluation.

\begin{table}[H]
\centering
\caption{Best posterior-predictive design within each topology class.}
\label{tab:best_by_topology}
\begin{adjustbox}{max width=\textwidth}
\begin{tabular}{lrrrrrrrrr}
\toprule
Topology & Hubs & Direct & Cap. & Dir. frac. & Cost & Mean max time & CVaR max time & Service rel. & Hold rel. \\
\midrule
DSAHS & 4,8,3 & 4 & 1.85 & 0.12 & 11.713541 & 32.512597 & 43.465497 & 1.000000 & 1.000000 \\
SAHS  & 4,8,3 & 0 & 1.85 & 0.00 & 11.809265 & 32.515300 & 43.476751 & 1.000000 & 1.000000 \\
DRAHS & 4,8,3 & 4 & 1.85 & 0.12 & 12.183706 & 33.380480 & 47.313365 & 1.000000 & 1.000000 \\
RAHS  & 4,8,3 & 0 & 1.85 & 0.00 & 12.279430 & 33.375399 & 47.323192 & 1.000000 & 1.000000 \\
MAHS  & 8,3   & 0 & 1.85 & 0.00 & 12.909091 & 35.602477 & 52.521429 & 0.985714 & 1.000000 \\
DMAHS & 8,3   & 6 & 1.85 & 0.12 & 12.953518 & 35.601429 & 52.522208 & 0.985714 & 1.000000 \\
FC    & --    & 72 & 1.00 & 0.00 & 16.907638 & 40.229880 & 57.658418 & 0.971429 & 1.000000 \\
\bottomrule
\end{tabular}
\end{adjustbox}
\end{table}

Figure~\ref{fig:topology_reliability} compares the reliability of topology winners.  The selected DSAHS design achieves full posterior service and hub-hold reliability in the training posterior scenarios.  Figures~\ref{fig:arrival_boxplot} and \ref{fig:hub_delay_boxplot} further clarify the mechanism: DSAHS and SAHS control the posterior distribution of the maximum arrival time effectively, while fully connected routing remains exposed to corridor-level travel-time extremes; hub-delay risk is controlled by the capacity multiplier selected under posterior uncertainty.

\begin{figure}[H]
\centering
\includegraphics[width=0.82\textwidth]{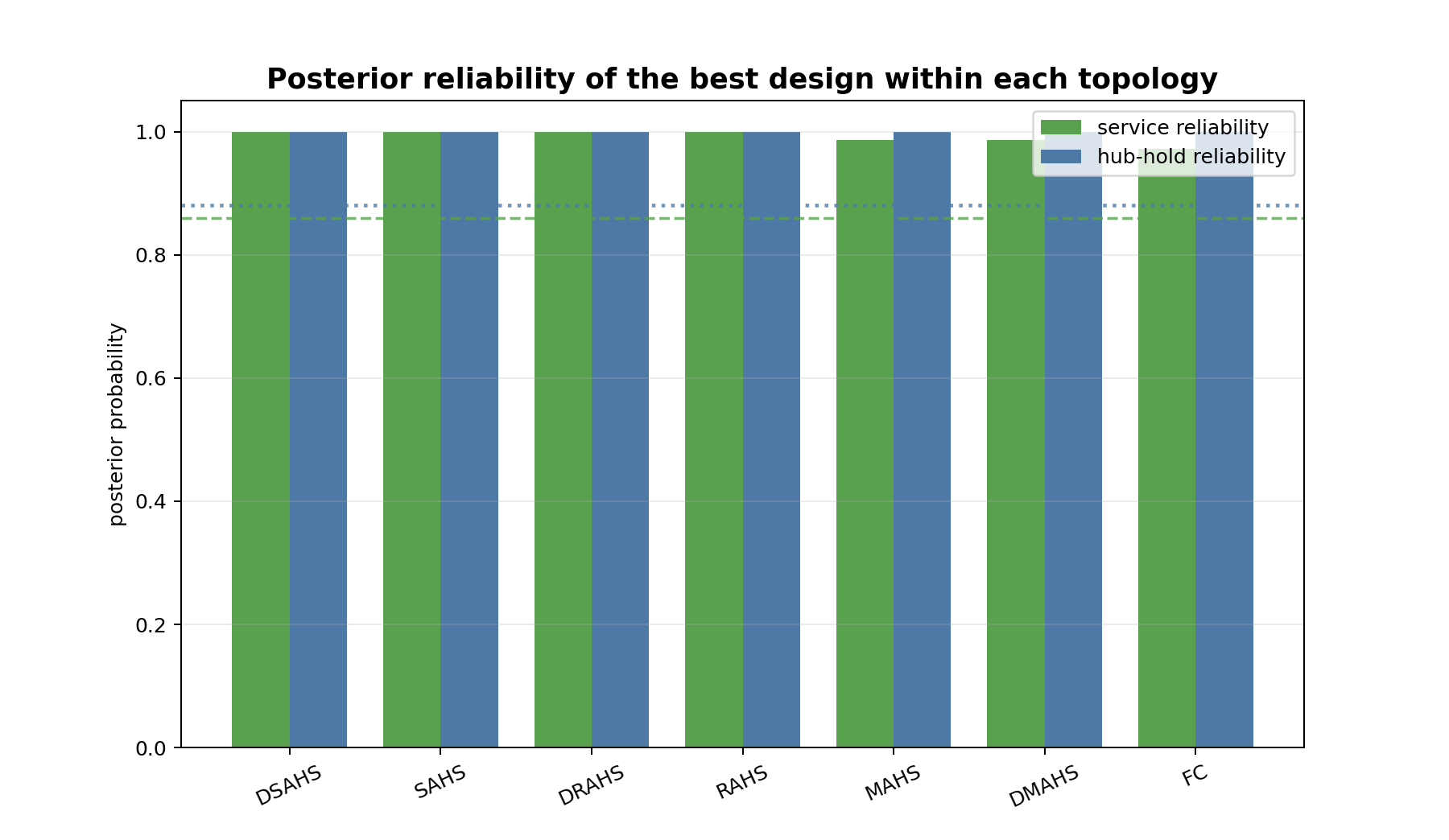}
\caption{Posterior service and hub-hold reliability for the best design within each topology.  Reliability is evaluated over posterior predictive scenarios rather than under a single nominal demand realization.}
\label{fig:topology_reliability}
\end{figure}

\begin{figure}[H]
\centering
\includegraphics[width=0.82\textwidth]{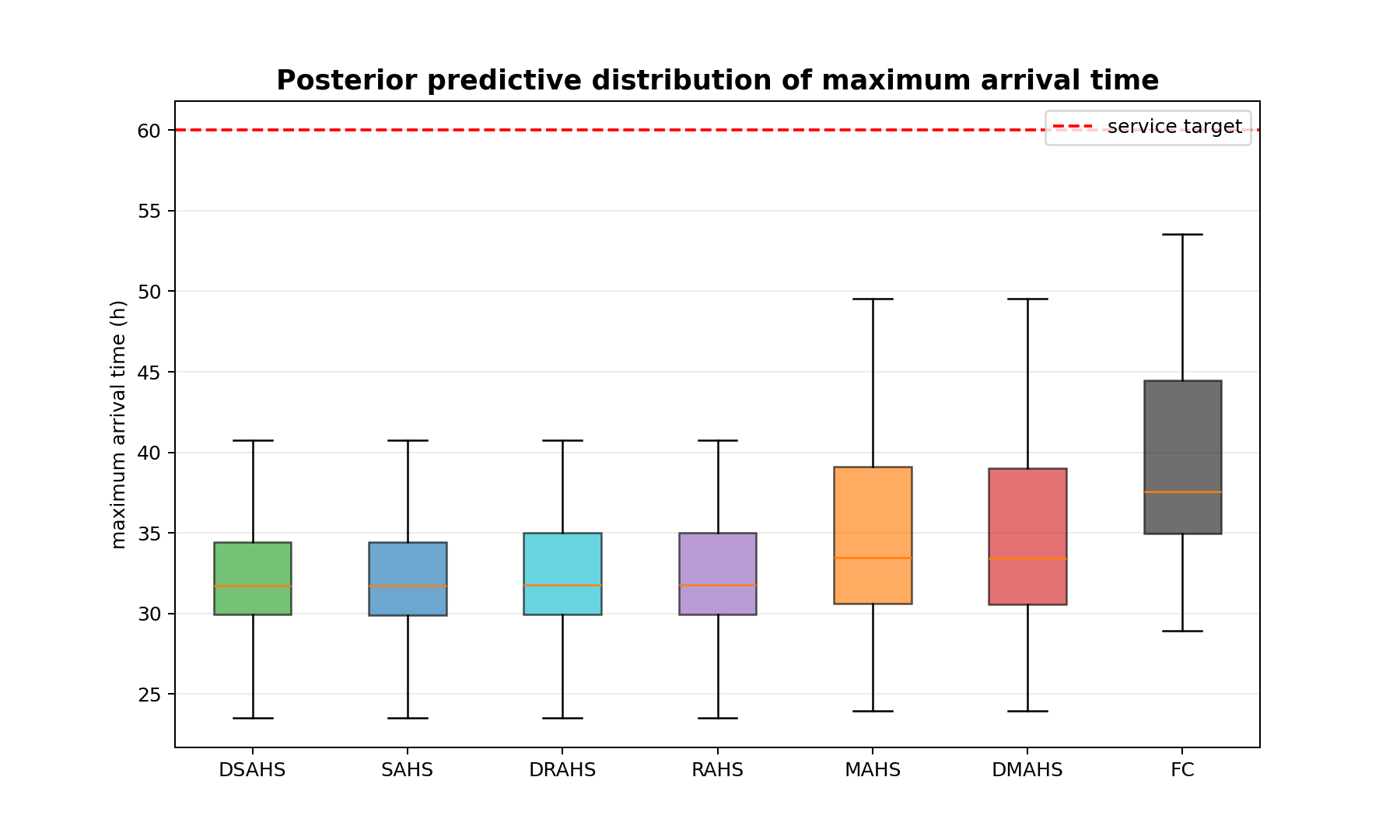}
\caption{Posterior distribution of maximum arrival time for topology winners.  The comparison highlights why tail-risk metrics are more informative than mean travel-time summaries alone.}
\label{fig:arrival_boxplot}
\end{figure}

\begin{figure}[H]
\centering
\includegraphics[width=0.82\textwidth]{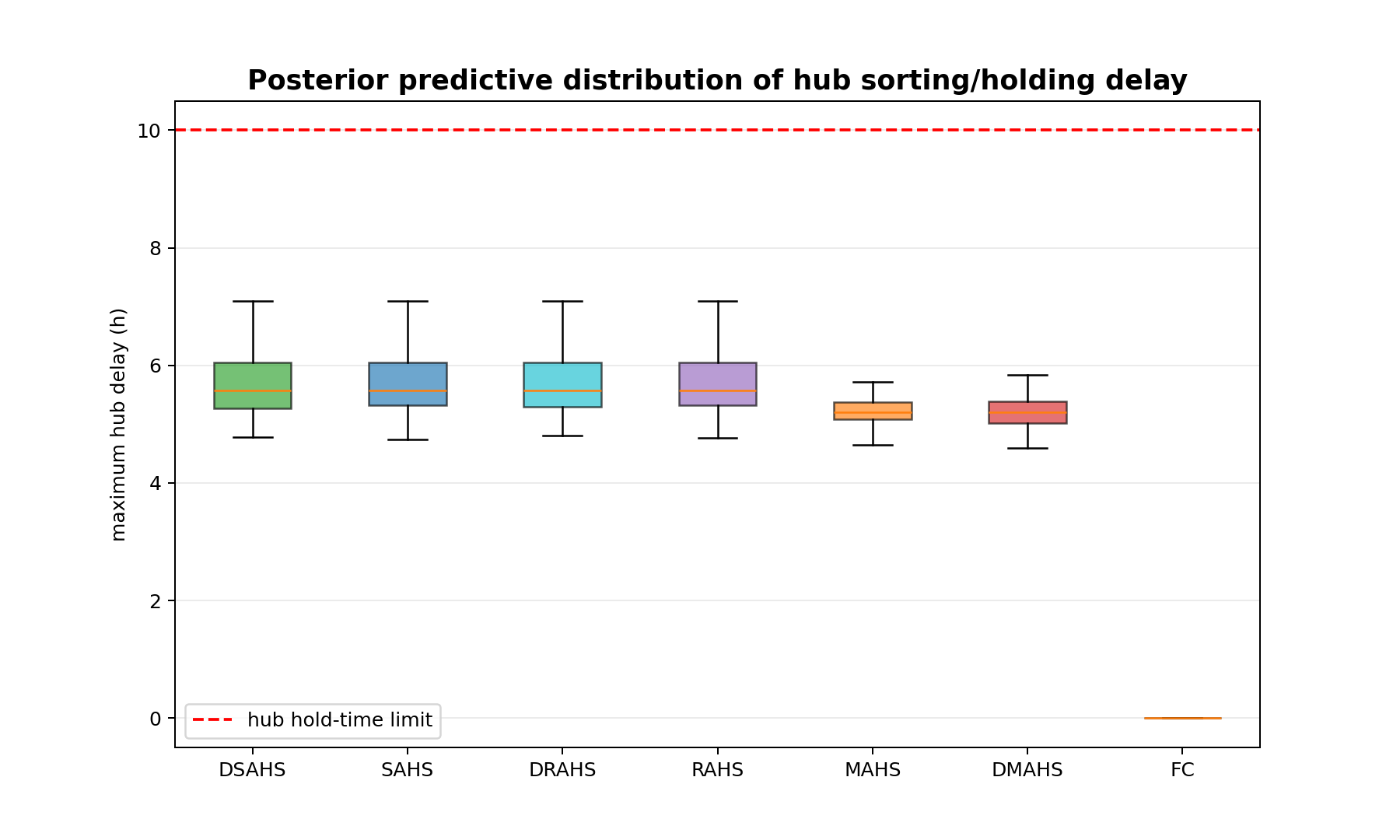}
\caption{Posterior distribution of maximum hub delay for topology winners.  The Bayesian design controls hub sorting delay by increasing capacity in a targeted way rather than simply minimizing nominal cost.}
\label{fig:hub_delay_boxplot}
\end{figure}

Table~\ref{tab:top20_designs} provides the leading posterior candidate designs.  The ranking shows a sensible pattern: increasing capacity from 1.05 to 1.40 and then 1.85 reduces posterior tail risk and improves reliability, while excessive or poorly placed hybridization increases cost without sufficient risk reduction.  The top three designs are DSAHS, SAHS, and DSAHS variants using all three hubs \(4,8,3\).

\begin{longtable}{llrrrrrrr}
\caption{Top posterior candidate designs ranked by posterior Bayes-risk score.}\label{tab:top20_designs}\\
\toprule
Topology & Hubs & Direct & Cap. & Cost & Mean max & CVaR max & Serv. rel. & Risk score \\
\midrule
\endfirsthead
\toprule
Topology & Hubs & Direct & Cap. & Cost & Mean max & CVaR max & Serv. rel. & Risk score \\
\midrule
\endhead
DSAHS & 4,8,3 & 4 & 1.85 & 11.713541 & 32.512597 & 43.465497 & 1.000000 & 0.004706 \\
SAHS  & 4,8,3 & 0 & 1.85 & 11.809265 & 32.515300 & 43.476751 & 1.000000 & 0.006584 \\
DSAHS & 4,8,3 & 9 & 1.85 & 11.944791 & 32.506473 & 43.472734 & 1.000000 & 0.007537 \\
DSAHS & 4,8,3 & 4 & 1.40 & 11.557285 & 34.556098 & 45.425638 & 1.000000 & 0.019374 \\
SAHS  & 4,8,3 & 0 & 1.40 & 11.629205 & 34.563944 & 45.441158 & 1.000000 & 0.021000 \\
DSAHS & 4,8,3 & 9 & 1.40 & 11.807136 & 34.546560 & 45.435638 & 1.000000 & 0.022454 \\
DRAHS & 4,8,3 & 4 & 1.85 & 12.183706 & 33.380480 & 47.313365 & 1.000000 & 0.042142 \\
RAHS  & 4,8,3 & 0 & 1.85 & 12.279430 & 33.375399 & 47.323192 & 1.000000 & 0.044009 \\
DRAHS & 4,8,3 & 9 & 1.85 & 12.414956 & 33.373006 & 47.318725 & 1.000000 & 0.044958 \\
SAHS  & 4,8,3 & 0 & 1.05 & 11.501332 & 36.875730 & 47.915284 & 0.985714 & 0.049033 \\
DRAHS & 4,8,3 & 4 & 1.40 & 12.042570 & 35.323447 & 49.238431 & 1.000000 & 0.056697 \\
RAHS  & 4,8,3 & 0 & 1.40 & 12.114490 & 35.316411 & 49.238086 & 1.000000 & 0.058190 \\
DRAHS & 4,8,3 & 9 & 1.40 & 12.292421 & 35.313502 & 49.238214 & 1.000000 & 0.059691 \\
SAHS  & 4,8   & 0 & 1.85 & 12.245658 & 39.012295 & 50.969732 & 1.000000 & 0.075475 \\
DSAHS & 4,8   & 6 & 1.85 & 12.290085 & 39.002648 & 50.983915 & 1.000000 & 0.075626 \\
DSAHS & 4,8,3 & 4 & 1.05 & 11.447926 & 36.869788 & 47.893999 & 0.985714 & 0.076510 \\
DSAHS & 4,8   & 12 & 1.85 & 12.514766 & 38.993884 & 51.035105 & 1.000000 & 0.078518 \\
SAHS  & 4,8   & 0 & 1.40 & 12.087397 & 40.976324 & 52.964599 & 0.985714 & 0.090412 \\
DSAHS & 4,8   & 6 & 1.40 & 12.163754 & 40.963066 & 52.988345 & 0.985714 & 0.091030 \\
RAHS  & 4,8   & 0 & 1.40 & 12.791830 & 38.569342 & 52.365938 & 0.985714 & 0.092869 \\
\bottomrule
\end{longtable}

The posterior probability of being scenario-best is shown in Table~\ref{tab:scenario_best} and Figure~\ref{fig:scenario_best}.  The selected DSAHS design is scenario-best in 91.4286\% of posterior predictive scenarios.  This is a strong statistical signal in favor of the proposed methodology: the selected design is not merely marginally better under one scalarized average; it is repeatedly best across the random future scenarios generated from the posterior.

\begin{table}[H]
\centering
\caption{Posterior probability that each topology winner is scenario-best.}
\label{tab:scenario_best}
\begin{adjustbox}{max width=\textwidth}
\begin{tabular}{llrr}
\toprule
Topology & Winner label & Posterior probability scenario-best & Mean scenario loss \\
\midrule
DSAHS & DSAHS\(|H=4-8-3|\)cap=1.85, direct=0.12 & 0.914286 & 0.047403 \\
DRAHS & DRAHS\(|H=4-8-3|\)cap=1.85, direct=0.12, \(R=2\) & 0.057143 & 0.090200 \\
DMAHS & DMAHS\(|H=8-3|\)cap=1.85, direct=0.12 & 0.014286 & 0.329390 \\
SAHS & SAHS\(|H=4-8-3|\)cap=1.85 & 0.014286 & 0.072690 \\
FC & FC\(|H=\)none & 0.000000 & 0.951294 \\
MAHS & MAHS\(|H=8-3|\)cap=1.85 & 0.000000 & 0.339490 \\
RAHS & RAHS\(|H=4-8-3|\)cap=1.85, \(R=2\) & 0.000000 & 0.114662 \\
\bottomrule
\end{tabular}
\end{adjustbox}
\end{table}

\begin{figure}[H]
\centering
\includegraphics[width=0.82\textwidth]{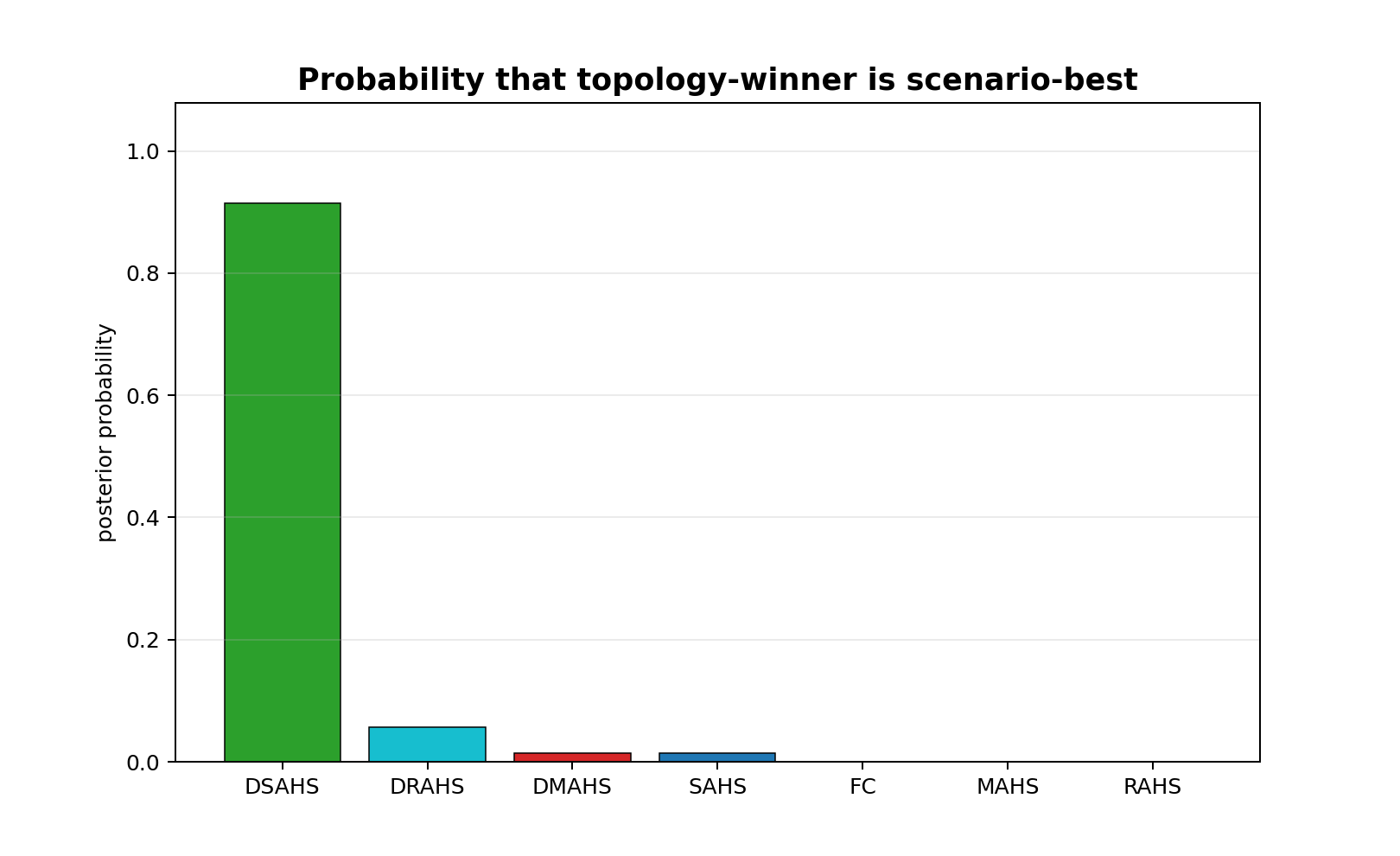}
\caption{Posterior probability that each topology winner is best in a posterior scenario.  The DSAHS winner dominates the posterior scenario-wise comparison.}
\label{fig:scenario_best}
\end{figure}

\subsection{Comparison with a deterministic nominal baseline}
\label{subsec:deterministic_baseline}

A crucial verification question is whether the Bayesian design gives a meaningful improvement over a deterministic nominal alternative.  To answer this, the experiment compares the posterior-predictive risk-aware design with a deterministic nominal mean-only baseline.  Both selected designs use the same DSAHS topology and the same hub set \(\{4,8,3\}\), but they differ in sorting-capacity multiplier.  The deterministic baseline chooses capacity multiplier 1.05, whereas the Bayesian design chooses 1.85.  Hence the Bayesian improvement is not due to changing the entire topology class, but due to the posterior-predictive recognition that the cheaper low-capacity design is fragile under future stress.

\begin{table}[H]
\centering
\caption{Selected Bayesian posterior-predictive design versus deterministic nominal mean-only baseline.}
\label{tab:bayes_vs_det}
\begin{adjustbox}{max width=\textwidth}
\begin{tabular}{llrrrrrrrr}
\toprule
Method & Topology & Hubs & Direct & Cap. & Cost & Mean max & CVaR max & Serv. rel. & Hold rel. \\
\midrule
Bayesian posterior-predictive risk-aware & DSAHS & 4,8,3 & 4 & 1.85 & 11.713541 & 32.512597 & 43.465497 & 1.000000 & 1.000000 \\
Deterministic nominal mean-only baseline & DSAHS & 4,8,3 & 4 & 1.05 & 11.447926 & 36.869788 & 47.893999 & 0.985714 & 0.842857 \\
\bottomrule
\end{tabular}
\end{adjustbox}
\end{table}

Table~\ref{tab:future_stress} reports an independent future stress test.  Under future stress scenarios, the Bayesian design has expected cost 12.538900 million, compared with 12.273285 million for the deterministic baseline.  However, the Bayesian design reduces CVaR of maximum arrival time from 84.723397 hours to 78.484113 hours and reduces the 95th percentile of maximum arrival time from 75.617107 hours to 69.957250 hours.  It also improves service reliability from 0.811111 to 0.900000 and hub-hold reliability from 0.622222 to 0.833333.  The mean maximum hub delay is nearly halved, falling from 11.710315 hours to 6.316476 hours.

\begin{table}[H]
\centering
\caption{Independent future stress-test comparison of the Bayesian and deterministic selected designs.}
\label{tab:future_stress}
\begin{adjustbox}{max width=\textwidth}
\begin{tabular}{lrrrrrr}
\toprule
Method & Cost & CVaR max time & 95\% max time & Service rel. & Hold rel. & Mean max hub delay \\
\midrule
Bayesian posterior-predictive risk-aware & 12.538900 & 78.484113 & 69.957250 & 0.900000 & 0.833333 & 6.316476 \\
Deterministic nominal mean-only baseline & 12.273285 & 84.723397 & 75.617107 & 0.811111 & 0.622222 & 11.710315 \\
\bottomrule
\end{tabular}
\end{adjustbox}
\end{table}

The improvement is summarized in Table~\ref{tab:gains}.  The Bayesian methodology reduces CVaR maximum-arrival risk by 7.3643\%, reduces the 95th percentile of maximum arrival time by 7.4849\%, improves service reliability by 8.8889 percentage points, and improves hub-hold reliability by 21.1111 percentage points.  These reliability gains are obtained at a modest expected-cost increase of 2.1642\%.

\begin{table}[H]
\centering
\caption{Measured gains of the Bayesian design relative to the deterministic nominal baseline.}
\label{tab:gains}
\begin{tabular}{lr}
\toprule
Verification metric & Bayesian versus deterministic \\
\midrule
CVaR maximum-arrival reduction (\%) & 7.364298 \\
95th percentile maximum-arrival reduction (\%) & 7.484889 \\
Service-reliability improvement (percentage points) & 8.888889 \\
Hub-hold reliability improvement (percentage points) & 21.111111 \\
Expected-cost increase for robustness (\%) & 2.164174 \\
\bottomrule
\end{tabular}
\end{table}

Figures~\ref{fig:future_arrival}, \ref{fig:future_hub_delay}, and \ref{fig:future_cost} visualize the future stress-test comparison.  The arrival-time and hub-delay distributions show a clear leftward shift for the Bayesian design in the upper tail.  The cost distribution moves only slightly upward, which is precisely the intended risk--cost tradeoff: the Bayesian design purchases operational robustness at a small and interpretable cost premium.

\begin{figure}[H]
\centering
\includegraphics[width=0.82\textwidth]{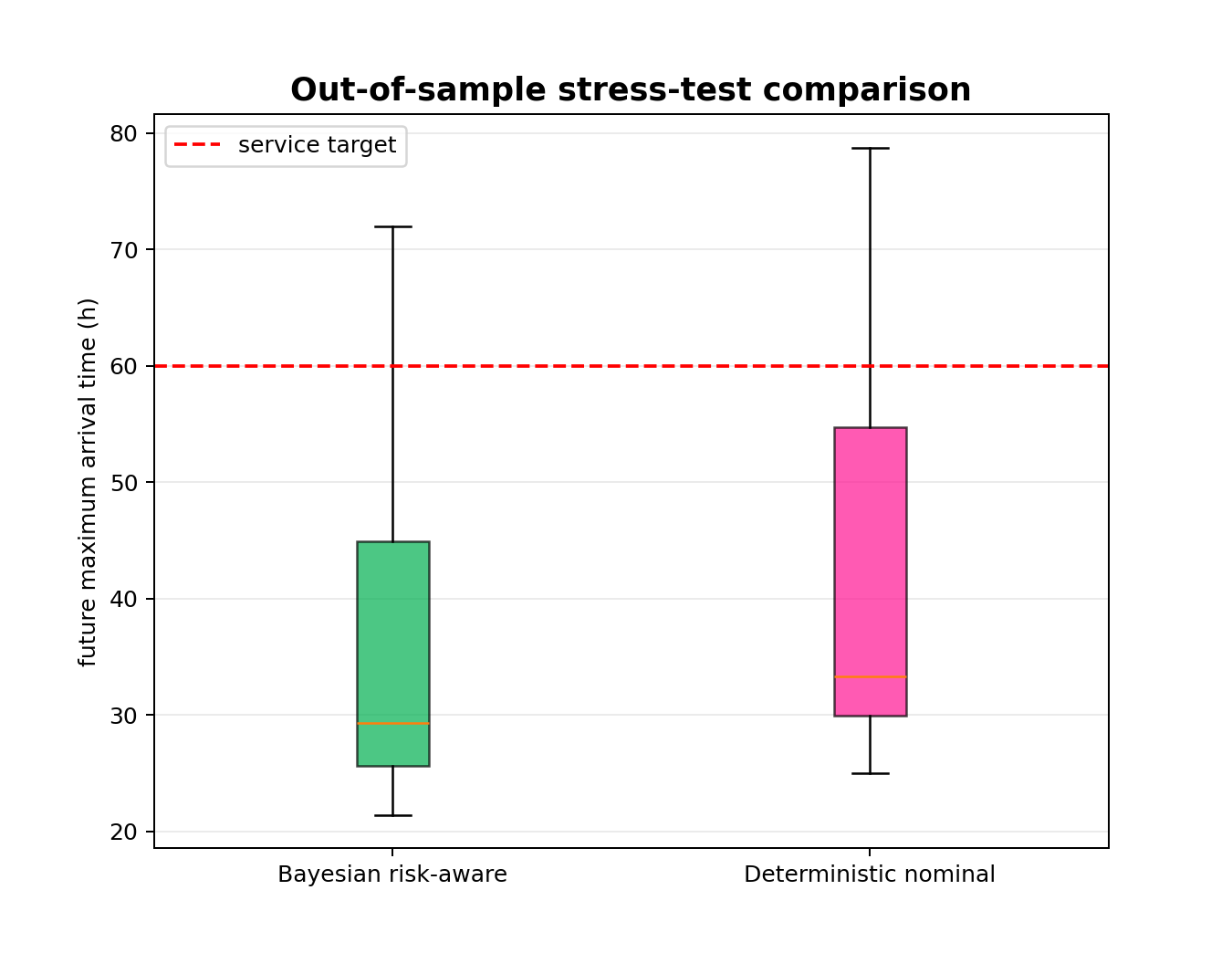}
\caption{Future stress-test distribution of maximum arrival time for the Bayesian design and deterministic baseline.  The Bayesian design has a visibly smaller upper tail.}
\label{fig:future_arrival}
\end{figure}

\begin{figure}[H]
\centering
\includegraphics[width=0.82\textwidth]{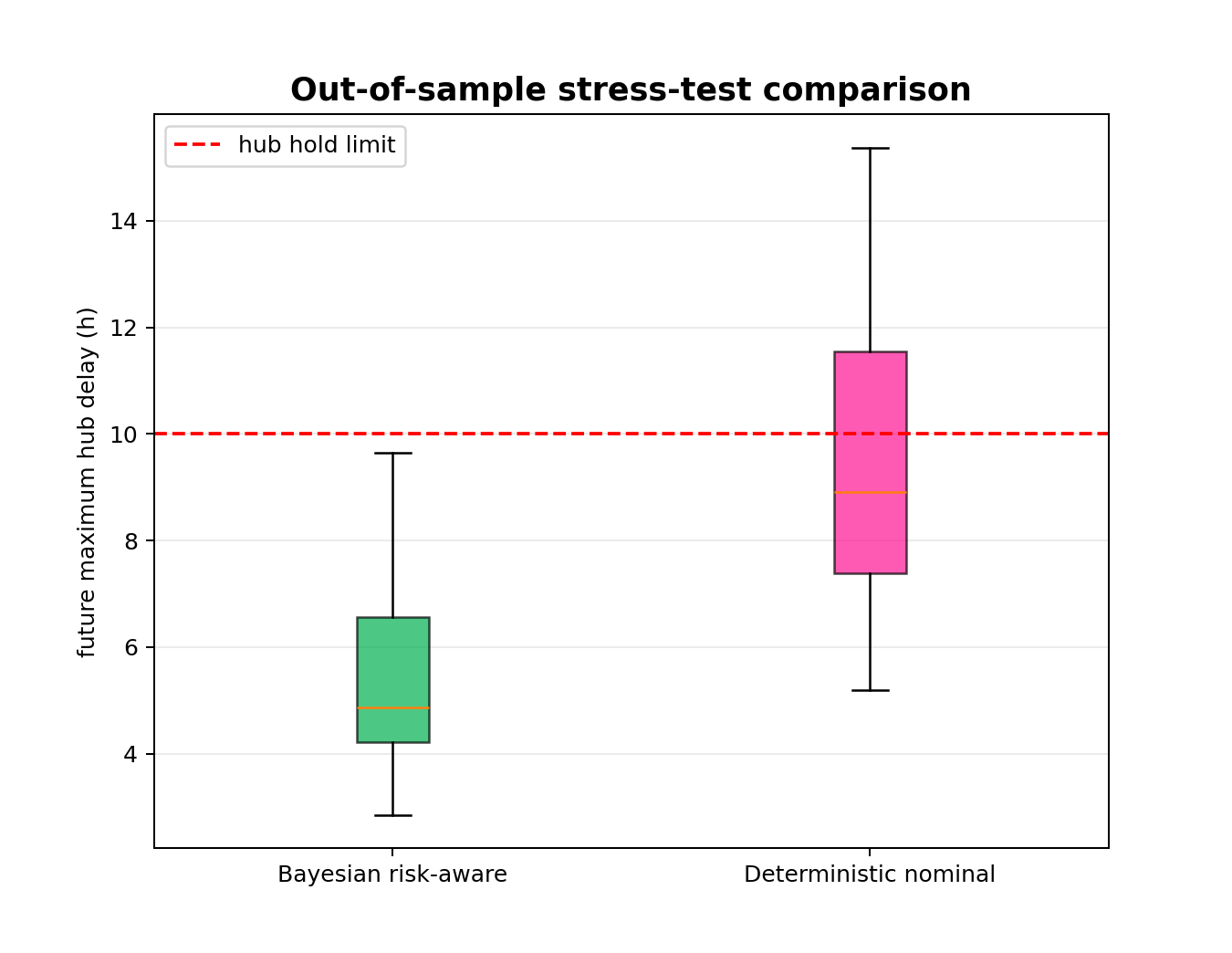}
\caption{Future stress-test distribution of maximum hub delay.  The deterministic baseline suffers from substantially larger hub-delay extremes due to underinvestment in sorting capacity.}
\label{fig:future_hub_delay}
\end{figure}

\begin{figure}[H]
\centering
\includegraphics[width=0.82\textwidth]{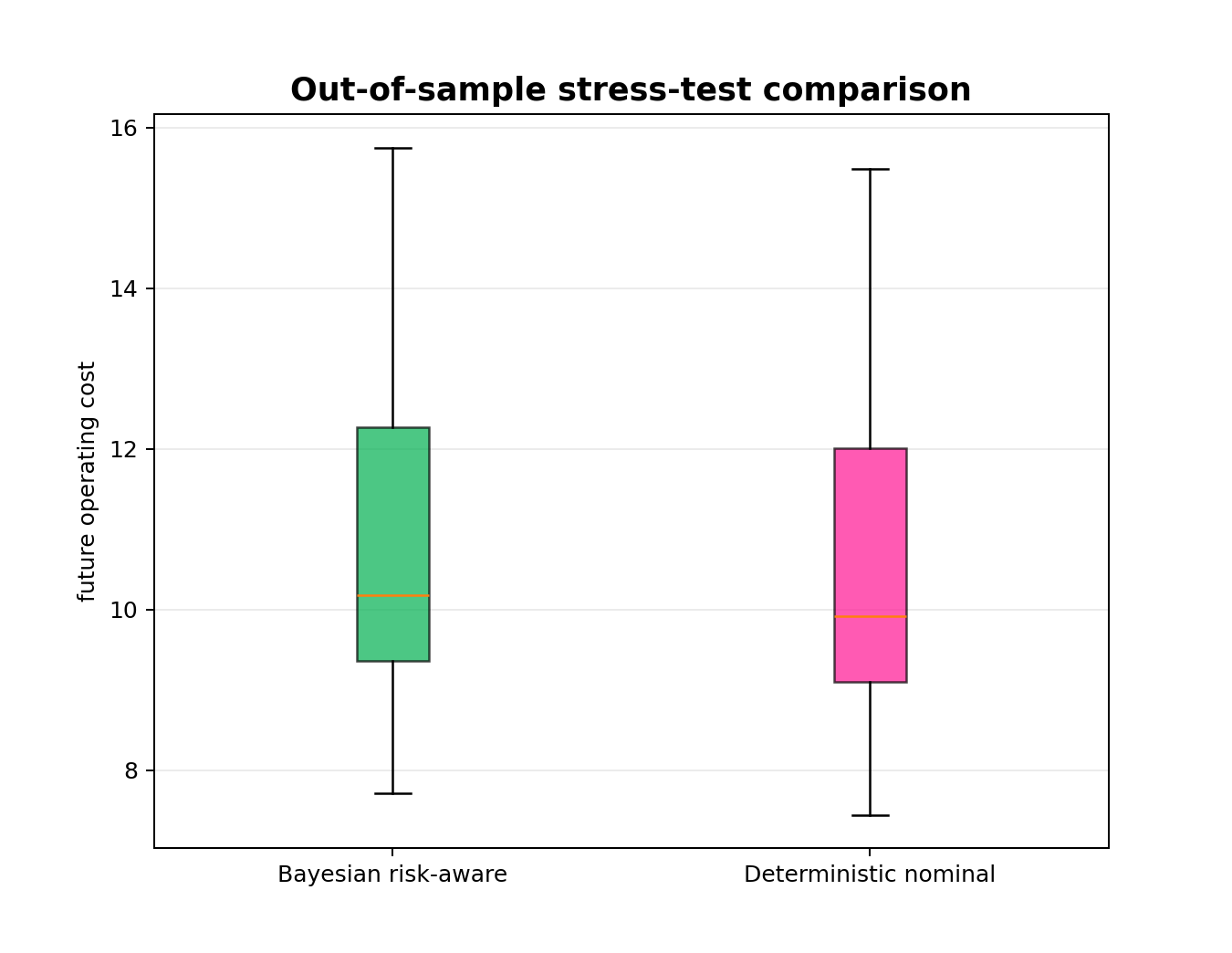}
\caption{Future stress-test distribution of operating cost.  The Bayesian design incurs a modest cost increase, but the increase is small relative to the achieved reliability and tail-risk gains.}
\label{fig:future_cost}
\end{figure}

\subsection{Preference-weight sensitivity}
\label{subsec:preference_sensitivity}

Table~\ref{tab:preference_sensitivity} reports representative preference-weight sensitivity results.  Across the explored grid of cost, CVaR-time, and emission weights, the selected design remains DSAHS with hubs \(\{4,8,3\}\), capacity multiplier 1.85, and direct fraction 0.12.  This stability is desirable: it indicates that the selected design is not an artifact of a single finely tuned scalarization.  Figure~\ref{fig:preference_heatmap} gives the corresponding heatmap representation of the sensitivity experiment.

\begin{longtable}{rrrrlrrrr}
\caption{Preference-weight sensitivity results.}\label{tab:preference_sensitivity}\\
\toprule
\(\omega_c\) & \(\omega_t\) & \(\omega_e\) & Topology & Hubs & Score & Cost & CVaR time & Service rel. \\
\midrule
\endfirsthead
\toprule
\(\omega_c\) & \(\omega_t\) & \(\omega_e\) & Topology & Hubs & Score & Cost & CVaR time & Service rel. \\
\midrule
\endhead
0.15 & 0.15 & 0.70 & DSAHS & 4,8,3 & 0.014195 & 11.713541 & 43.465497 & 1.000000 \\
0.15 & 0.25 & 0.60 & DSAHS & 4,8,3 & 0.012331 & 11.713541 & 43.465497 & 1.000000 \\
0.15 & 0.35 & 0.50 & DSAHS & 4,8,3 & 0.010467 & 11.713541 & 43.465497 & 1.000000 \\
0.15 & 0.45 & 0.40 & DSAHS & 4,8,3 & 0.008604 & 11.713541 & 43.465497 & 1.000000 \\
0.15 & 0.55 & 0.30 & DSAHS & 4,8,3 & 0.006740 & 11.713541 & 43.465497 & 1.000000 \\
0.15 & 0.65 & 0.20 & DSAHS & 4,8,3 & 0.004876 & 11.713541 & 43.465497 & 1.000000 \\
0.15 & 0.75 & 0.10 & DSAHS & 4,8,3 & 0.003012 & 11.713541 & 43.465497 & 1.000000 \\
0.25 & 0.15 & 0.60 & DSAHS & 4,8,3 & 0.013097 & 11.713541 & 43.465497 & 1.000000 \\
0.25 & 0.25 & 0.50 & DSAHS & 4,8,3 & 0.011233 & 11.713541 & 43.465497 & 1.000000 \\
0.25 & 0.35 & 0.40 & DSAHS & 4,8,3 & 0.009369 & 11.713541 & 43.465497 & 1.000000 \\
0.25 & 0.45 & 0.30 & DSAHS & 4,8,3 & 0.007505 & 11.713541 & 43.465497 & 1.000000 \\
0.25 & 0.55 & 0.20 & DSAHS & 4,8,3 & 0.005641 & 11.713541 & 43.465497 & 1.000000 \\
0.25 & 0.65 & 0.10 & DSAHS & 4,8,3 & 0.003777 & 11.713541 & 43.465497 & 1.000000 \\
0.35 & 0.15 & 0.50 & DSAHS & 4,8,3 & 0.011998 & 11.713541 & 43.465497 & 1.000000 \\
0.35 & 0.25 & 0.40 & DSAHS & 4,8,3 & 0.010134 & 11.713541 & 43.465497 & 1.000000 \\
0.35 & 0.35 & 0.30 & DSAHS & 4,8,3 & 0.008270 & 11.713541 & 43.465497 & 1.000000 \\
0.35 & 0.45 & 0.20 & DSAHS & 4,8,3 & 0.006407 & 11.713541 & 43.465497 & 1.000000 \\
0.35 & 0.55 & 0.10 & DSAHS & 4,8,3 & 0.004543 & 11.713541 & 43.465497 & 1.000000 \\
0.45 & 0.15 & 0.40 & DSAHS & 4,8,3 & 0.010900 & 11.713541 & 43.465497 & 1.000000 \\
0.45 & 0.25 & 0.30 & DSAHS & 4,8,3 & 0.009036 & 11.713541 & 43.465497 & 1.000000 \\
\bottomrule
\end{longtable}

\begin{figure}[H]
\centering
\includegraphics[width=0.84\textwidth]{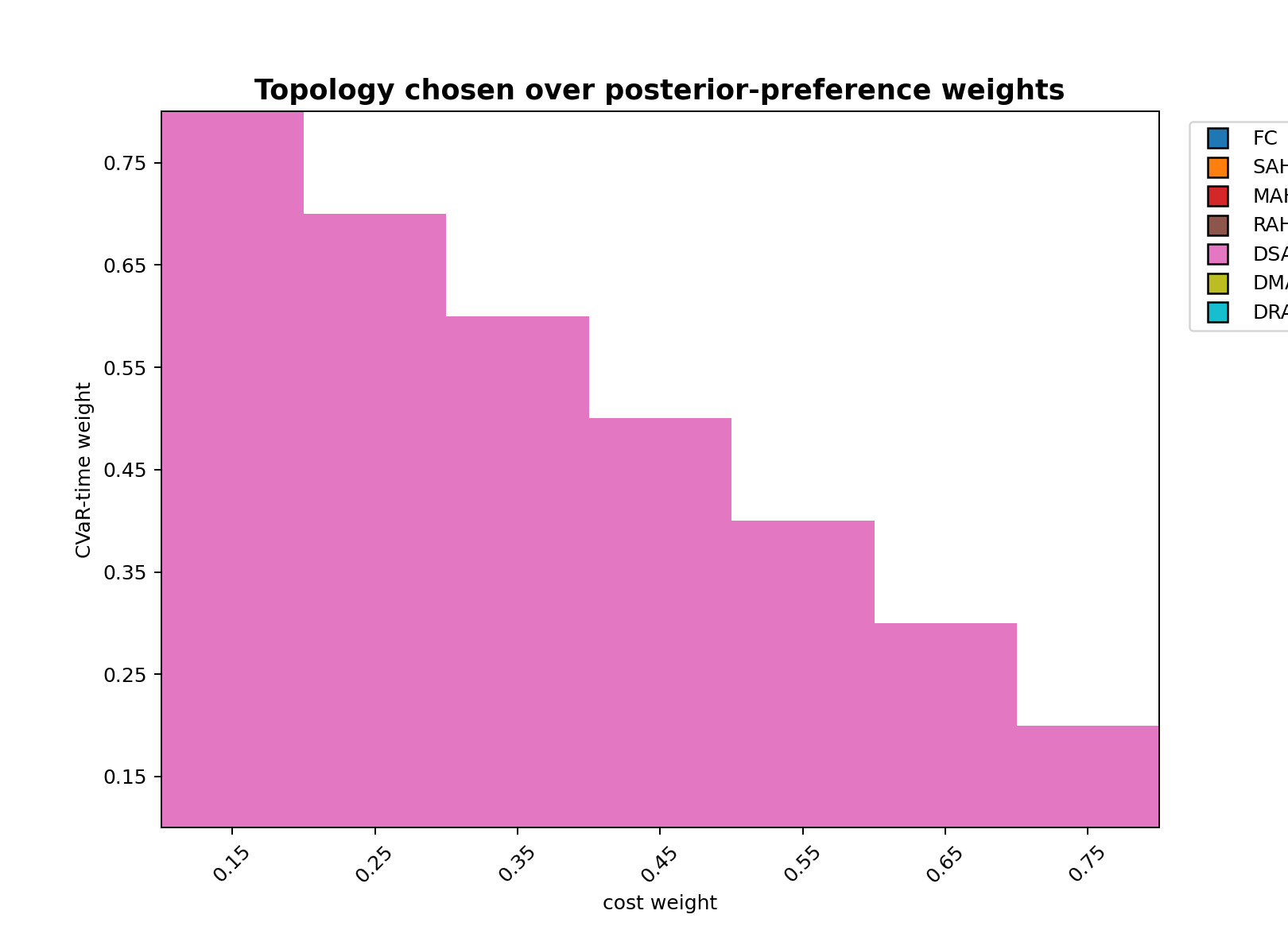}
\caption{Preference-weight sensitivity heatmap.  The selected topology remains stable across a broad range of cost, CVaR-time, and emission preferences.}
\label{fig:preference_heatmap}
\end{figure}

Finally, Figure~\ref{fig:hub_reliability_posterior} shows the posterior distribution of candidate hub reliability.  This is important for interpretation: the method does not assume that the sorting capacity decision is deterministic.  Instead, hub reliability is learned and propagated to downstream service-risk calculations.

\begin{figure}[H]
\centering
\includegraphics[width=0.82\textwidth]{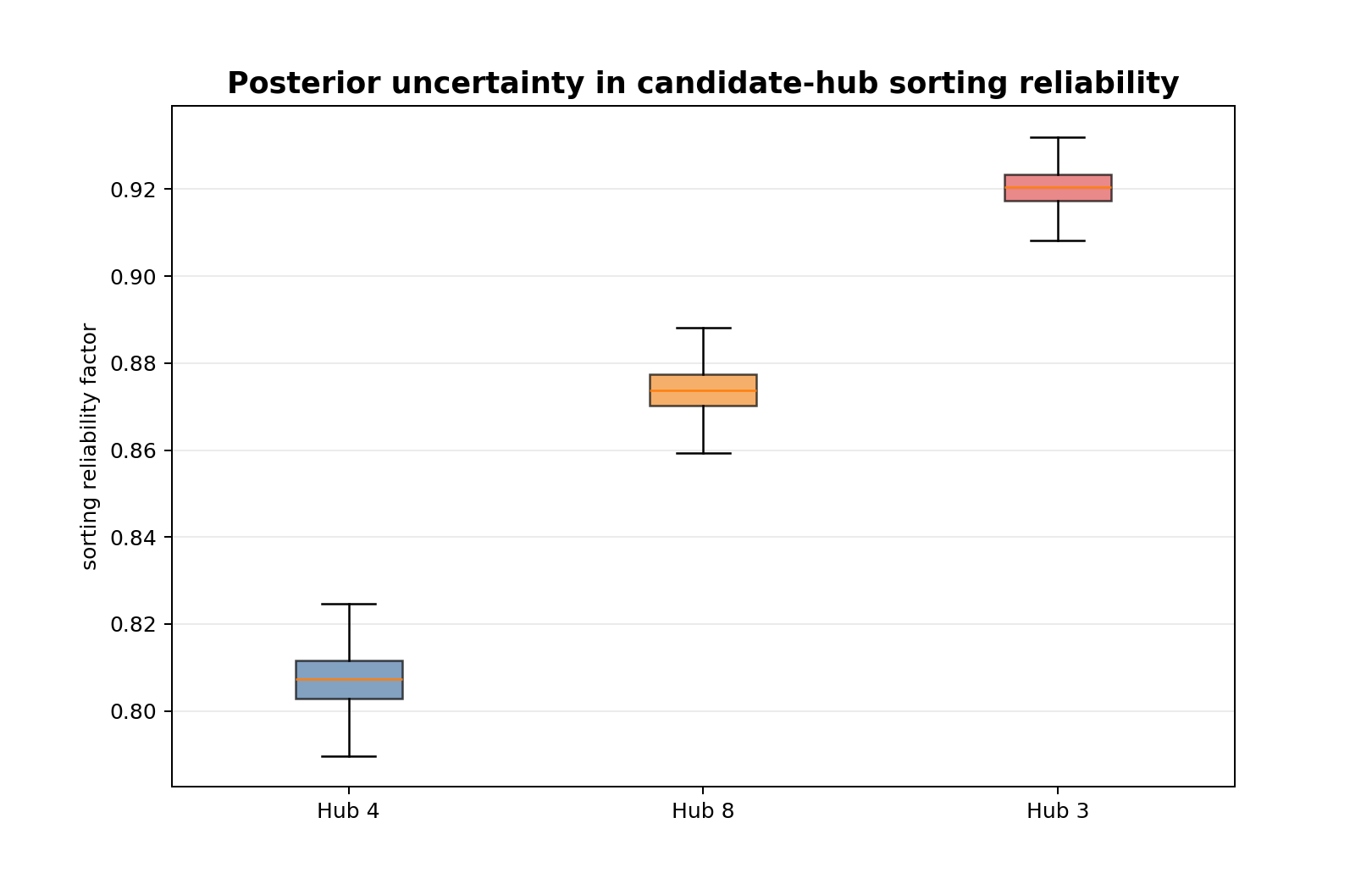}
\caption{Posterior distribution of hub reliability for the candidate hubs.  Reliability uncertainty is propagated into the posterior-predictive network-design evaluation.}
\label{fig:hub_reliability_posterior}
\end{figure}

\subsection{Key findings from the simulation study}
\label{subsec:key_findings_simulation}

The simulation supports the proposed methodology in five statistically meaningful ways.

First, posterior-predictive optimization changes the nature of the decision.  The deterministic nominal baseline and the Bayesian design select the same DSAHS topology and hub set \(\{4,8,3\}\), but they select different sorting-capacity levels.  The deterministic method selects the cheaper capacity multiplier 1.05, whereas the Bayesian method selects 1.85.  This distinction is important: the Bayesian contribution is not an arbitrary topology switch, but a principled correction for posterior tail risk and hub-delay uncertainty.

Second, the Bayesian design yields substantial reliability improvement for a modest cost premium.  In the future stress test, the Bayesian design improves service reliability by 8.8889 percentage points and hub-hold reliability by 21.1111 percentage points, while increasing expected cost by only 2.1642\%.  This is exactly the kind of cost--risk tradeoff that a courier network designer would need under volatile demand and travel conditions.

Third, the Bayesian design reduces tail risk rather than only improving average performance.  The CVaR of maximum arrival time is reduced by 7.3643\%, and the 95th percentile of maximum arrival time is reduced by 7.4849\%.  Therefore the proposed method is not simply minimizing a mean objective; it is protecting the upper tail of the delivery-time distribution, which is operationally more relevant for service-level guarantees.

Fourth, scenario-wise posterior comparison strongly favors the selected topology.  The selected DSAHS winner is scenario-best in 91.4286\% of posterior predictive scenarios.  This gives a direct probabilistic interpretation of topology choice and is more informative than a deterministic ranking based on one nominal data realization.

Fifth, the simulation highlights the value of modeling hub sorting reliability.  The largest practical improvement appears in hub-hold reliability and maximum hub-delay reduction.  Under future stress, the mean maximum hub delay decreases from 11.710315 hours for the deterministic baseline to 6.316476 hours for the Bayesian design.  Hence the Bayesian capacity decision directly addresses the operational bottleneck that deterministic mean-only optimization tends to understate.

Overall, the simulation verifies that the proposed Bayesian multi-topology design methodology is useful not because it always chooses a radically different topology, but because it chooses a network design that is posterior-risk efficient: it spends slightly more where uncertainty matters, reduces extreme arrival-time behavior, and improves reliability under future stress.  This is the central statistical advantage over a deterministic multi-structure design framework.


\section{Case Study and Posterior Scenario Experiments on the CAB Hub-Location Benchmark}
\label{sec:cab-case-study}

\subsection{Real benchmark data and experimental objective}
\label{subsec:cab-data}

We now examine the proposed Bayesian multi-topology express transportation network design methodology on a real benchmark network.  The data source is the Civil Aeronautics Board (CAB) hub-location benchmark, originally associated with the classical interacting hub-location formulation of O'Kelly~\cite{okelly1987quadratic}.  The CAB instance is a standard benchmark in the hub-location literature and is mirrored in public hub-location data repositories; OR-Library and its related documentation identify CAB-type hub-location instances as widely used test beds for hub-and-spoke location models~\cite{beasley1990orlib,ernst1996efficient}.  The specific public file used in the present computational experiment is \texttt{CAB25.txt}, which contains a 25-node origin--destination flow matrix and a 25-node distance matrix.  Since the original deterministic bi-objective express transportation network design paper studies seven topological structures---FC, SAHS, MAHS, RAHS, DSAHS, DMAHS and DRAHS---under operation-cost and maximum-arrival-time criteria~\cite{zhong2023etndp}, the CAB benchmark is a natural real-data platform on which to evaluate whether the proposed Bayesian posterior-predictive extension produces practically different and statistically defensible design choices.

A methodological caveat is important.  CAB25 is a real static OD-flow/distance benchmark, not a proprietary daily courier panel.  Therefore, in order to evaluate Bayesian posterior learning and future posterior scenario performance, we use the real CAB flow and distance matrices as the structural baseline and construct a pseudo-historical operational panel around them.  This is a standard simulation-on-real-benchmark design: the network geometry and relative OD intensity pattern are real, while temporal uncertainty, demand overdispersion, travel-time noise, hub-reliability variation and stress-regime perturbations are generated in a controlled and reproducible manner.  The goal is not to claim that the CAB benchmark is an express-company daily data set; rather, the goal is to test whether the Bayesian decision methodology can exploit posterior uncertainty on a real hub-location network structure.

Table~\ref{tab:cab-data-summary} summarizes the real-data instance used in the experiment.  We selected a 12-node subnetwork from CAB25 to keep all seven topology classes computationally comparable in a reproducible Python implementation.  Candidate hubs were selected from high-throughput/central nodes after relabeling the selected subnetwork.  The resulting case study contains 132 directed OD pairs, a mean scaled daily OD demand of 24 parcels, and a heterogeneous distance structure with mean pairwise distance 580.201 and 95th percentile distance 1317.898.  The heterogeneity in Fig.~\ref{fig:cab-mds-network} and Fig.~\ref{fig:cab-demand-heatmap} is important: it creates a meaningful trade-off between direct links, hub consolidation, hub sorting burden, and posterior delivery risk.

\begin{table}[!htbp]
\centering
\caption{Summary of the real CAB benchmark subnetwork used for posterior scenario experiments.}
\label{tab:cab-data-summary}
\begin{adjustbox}{max width=\textwidth}
\begin{tabular}{ll}
\toprule
Quantity & Value \\
\midrule
Data source & CAB25 hub-location benchmark \\
Public mirror & \url{https://raw.githubusercontent.com/mcroboredo/Hub-Location-Instances/main/CAB25.txt} \\
Selected CAB node labels & 3, 4, 6, 7, 9, 12, 14, 17, 18, 21, 22, 25 \\
Nodes in case study & 12 \\
Candidate hubs (case-study indices) & 12, 1, 3, 5 \\
Directed OD pairs & 132 \\
Mean scaled daily OD demand & 24.000 \\
Median scaled daily OD demand & 13.732 \\
Mean pairwise distance & 580.201 \\
95th percentile pairwise distance & 1317.898 \\
Historical pseudo-panel days & 120 \\
\bottomrule
\end{tabular}
\end{adjustbox}
\end{table}

\begin{figure}[!htbp]
\centering
\includegraphics[width=0.82\textwidth]{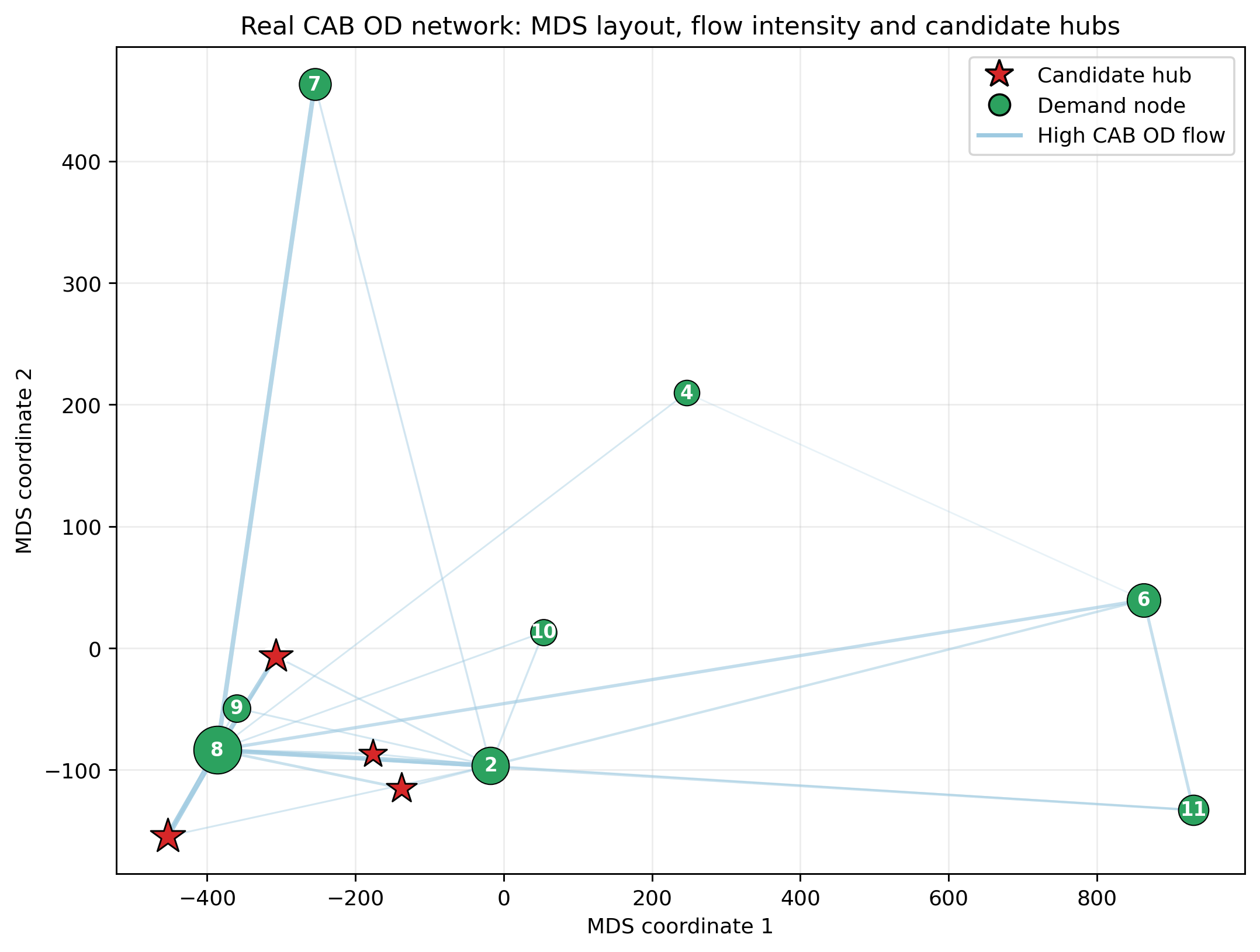}
\caption{Metric multidimensional-scaling visualization of the selected 12-node CAB subnetwork.  Node positions are derived from the real CAB distance matrix; candidate hubs are highlighted.  The figure shows that the benchmark subnetwork is geographically heterogeneous, so hub selection and direct-link admission are nontrivial.}
\label{fig:cab-mds-network}
\end{figure}

\begin{figure}[!htbp]
\centering
\includegraphics[width=0.82\textwidth]{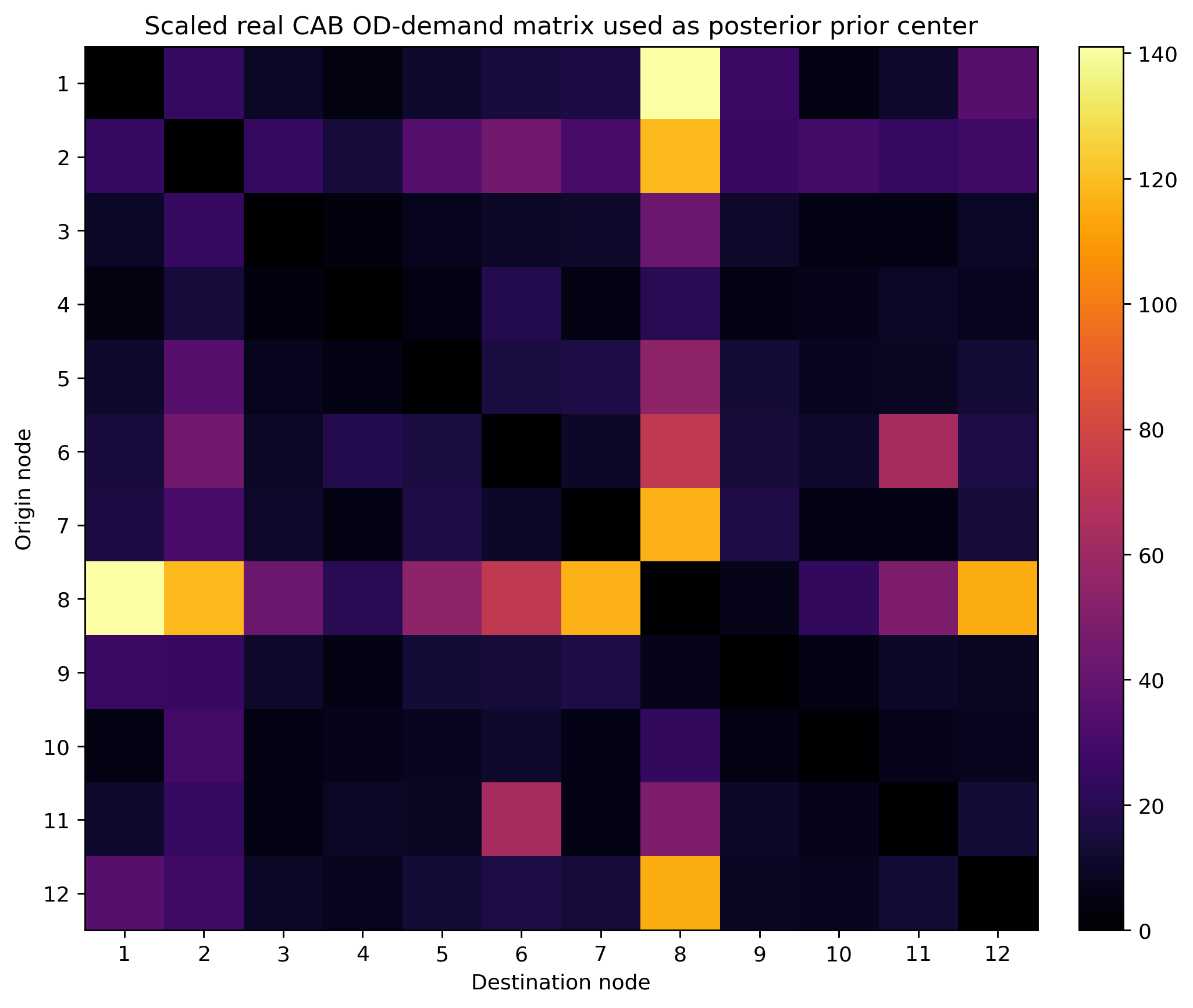}
\caption{Scaled directed OD-demand heatmap for the selected CAB subnetwork.  The nonuniform block pattern shows that the case study is not a homogeneous toy example: some directed corridors carry substantially larger flow than others, which directly affects hub-load uncertainty and posterior hold-time risk.}
\label{fig:cab-demand-heatmap}
\end{figure}

\subsection{Bayesian posterior construction and design space}
\label{subsec:cab-posterior-model}

Let $w_{ij,t}$ denote the daily OD demand from origin $i$ to destination $j$ on day $t$, let $\tau_{ij,t}$ denote travel time on arc $(i,j)$, and let $R_{k,t}$ denote the reliability/productivity factor at candidate hub $k$.  The posterior experiment uses conjugate or semi-conjugate manual Bayesian updates, deliberately avoiding black-box Bayesian software so that every step remains reproducible.

For OD demand, we use a Gamma--Poisson model
\begin{equation}
 w_{ij,t}\mid \lambda_{ij}\sim \operatorname{Poisson}(\lambda_{ij}),
 \qquad
 \lambda_{ij}\sim \operatorname{Gamma}(a_{ij,0},b_{ij,0}),
\end{equation}
where the prior mean is scaled from the real CAB OD flow.  The posterior is
\begin{equation}
 \lambda_{ij}\mid \mathcal D
 \sim
 \operatorname{Gamma}\left(a_{ij,0}+\sum_{t=1}^{T}w_{ij,t},\ b_{ij,0}+T\right).
\end{equation}
For travel time and cost multipliers we use lognormal observation models with Normal--Inverse-Gamma updating on the log scale.  For hub productivity/reliability we use a Beta--Binomial update, so that posterior predictive scenarios contain not only uncertain OD demand but also uncertain hub sorting performance.  Table~\ref{tab:cab-posterior-summary} reports the main posterior summaries.  The posterior mean daily OD intensity is 25.303, slightly above the scaled empirical mean because the pseudo-panel includes surge and disruption regimes; the mean candidate-hub reliability is 0.8285.

\begin{table}[!htbp]
\centering
\caption{Manual Bayesian posterior updating summary for the CAB posterior scenario experiment.}
\label{tab:cab-posterior-summary}
\begin{adjustbox}{max width=\textwidth}
\begin{tabular}{llll}
\toprule
Component & Bayesian model & Posterior quantity & Value \\
\midrule
OD demand & Gamma--Poisson & Mean daily OD intensity & 25.3028 \\
Travel time & Lognormal NIG & Mean log travel time & 2.0441 \\
Hub reliability & Beta--Binomial & Mean candidate-hub reliability & 0.8285 \\
Cost multiplier & Lognormal NIG & Posterior mean log multiplier & $-0.0018$ \\
\bottomrule
\end{tabular}
\end{adjustbox}
\end{table}

For each posterior scenario $b=1,\ldots,B$, with $B=120$, we generate posterior predictive draws of OD demand, travel time, hub reliability and cost multipliers.  For each topology $g\in\{\mathrm{FC},\mathrm{SAHS},\mathrm{MAHS},\mathrm{RAHS},\mathrm{DSAHS},\mathrm{DMAHS},\mathrm{DRAHS}\}$ and design vector $x_g$, the following quantities are evaluated:
\begin{align}
 \widehat C_g(x_g) &= B^{-1}\sum_{b=1}^{B} C_g(x_g;\Theta^{(b)}),\\
 \widehat A_g(x_g) &= B^{-1}\sum_{b=1}^{B} A_g(x_g;\Theta^{(b)}),\\
 \widehat{\mathrm{CVaR}}_{0.90,g}(x_g) &=
 \min_{\zeta}\left\{\zeta + \frac{1}{0.10B}\sum_{b=1}^{B}
 \left(A_g(x_g;\Theta^{(b)})-\zeta\right)_+\right\},\\
 \widehat R^{\mathrm{serv}}_g(x_g) &= B^{-1}\sum_{b=1}^{B}\mathbbm 1\{A_g(x_g;\Theta^{(b)})\leq T^\star\},\\
 \widehat R^{\mathrm{hold}}_g(x_g) &= B^{-1}\sum_{b=1}^{B}\mathbbm 1\{H_g(x_g;\Theta^{(b)})\leq d_t\}.
\end{align}
Here $T^\star=34$ hours is a deliberately stringent service target, $d_t=8$ hours is the hub hold-time threshold, and $H_g$ is the maximum posterior hub sorting/holding time.  The posterior Bayes-risk score used for ranking combines normalized expected cost, posterior CVaR of maximum arrival time, expected emission cost, service-failure penalty and hold-time-failure penalty.  The candidate design space contains 841 designs across the seven topology classes; the class counts are shown in Table~\ref{tab:cab-design-counts}.  The direct-connected hybrid structures have the largest number of admissible design variants because they combine hub allocation decisions with direct-link thresholding.

\begin{table}[!htbp]
\centering
\caption{Candidate design counts evaluated across the seven topology classes.}
\label{tab:cab-design-counts}
\begin{tabular}{lr}
\toprule
Topology & Candidate designs \\
\midrule
DRAHS & 224 \\
DMAHS & 224 \\
DSAHS & 224 \\
SAHS & 56 \\
MAHS & 56 \\
RAHS & 56 \\
FC & 1 \\
\bottomrule
\end{tabular}
\end{table}

\subsection{Posterior trade-off and topology comparison}
\label{subsec:cab-posterior-results}

Table~\ref{tab:cab-top-posterior-designs} reports the leading posterior designs by Bayes-risk score.  The top design is a direct-connected single-allocation hub-and-spoke design, DSAHS, with one selected hub, capacity multiplier 0.90, and 76 direct links.  Its expected posterior cost is 1.268 million, posterior mean maximum arrival time is 39.316 hours, and posterior $90\%$ CVaR of maximum arrival time is 55.417 hours.  DMAHS and DRAHS variants with the same hub/direct-link geometry have identical numerical performance in this experiment because the selected design uses one active hub, so the allocation distinction is inactive.  This tie is not a weakness; rather, it confirms that the computational evaluation respects the topology definitions and only differentiates topologies when their feasible operational structures differ.

\begin{table}[!htbp]
\centering
\caption{Leading posterior designs ranked by posterior Bayes-risk score.  Costs are in millions; time quantities are in hours.}
\label{tab:cab-top-posterior-designs}
\begin{adjustbox}{max width=\textwidth}
\begin{tabular}{lllrrrrrrr}
\toprule
Topology & Hubs & Cap. & Direct & Cost & Mean max & P95 max & CVaR max & Serv. rel. & Hold rel. \\
\midrule
DSAHS & 3 & 0.90 & 76 & 1.268 & 39.316 & 53.695 & 55.417 & 0.208 & 0.992 \\
DMAHS & 3 & 0.90 & 76 & 1.268 & 39.316 & 53.695 & 55.417 & 0.208 & 0.992 \\
DRAHS & 3 & 0.90 & 76 & 1.268 & 39.316 & 53.695 & 55.417 & 0.208 & 0.992 \\
DSAHS & 3 & 1.65 & 56 & 1.331 & 37.396 & 48.583 & 49.397 & 0.225 & 0.850 \\
DMAHS & 3 & 1.65 & 56 & 1.331 & 37.396 & 48.583 & 49.397 & 0.225 & 0.850 \\
DRAHS & 3 & 1.65 & 56 & 1.331 & 37.396 & 48.583 & 49.397 & 0.225 & 0.850 \\
DRAHS & 3 & 1.10 & 68 & 1.273 & 39.562 & 54.155 & 55.524 & 0.192 & 0.967 \\
DSAHS & 3 & 1.10 & 68 & 1.273 & 39.562 & 54.155 & 55.524 & 0.192 & 0.967 \\
DMAHS & 3 & 1.10 & 68 & 1.273 & 39.562 & 54.155 & 55.524 & 0.192 & 0.967 \\
DSAHS & 3 & 1.10 & 76 & 1.315 & 38.906 & 53.529 & 55.264 & 0.242 & 1.000 \\
DMAHS & 3 & 1.10 & 76 & 1.315 & 38.906 & 53.529 & 55.264 & 0.242 & 1.000 \\
DRAHS & 3 & 1.10 & 76 & 1.315 & 38.906 & 53.529 & 55.264 & 0.242 & 1.000 \\
FC & -- & 1.00 & 132 & 1.184 & 43.481 & 60.811 & 61.909 & 0.083 & 1.000 \\
\bottomrule
\end{tabular}
\end{adjustbox}
\end{table}

Figure~\ref{fig:cab-posterior-tradeoff} visualizes the posterior trade-off between expected cost and posterior tail risk.  The important observation is that the best posterior design is not the cheapest design and not simply the fastest pointwise design.  It occupies a statistically attractive region in which moderate extra cost is exchanged for substantially lower tail risk and much stronger hold-time reliability.  Figure~\ref{fig:cab-arrival-boxplot} shows the posterior distribution of maximum arrival time for the topology winners.  The direct-connected hub-and-spoke designs have visibly shorter and more stable posterior tails than the pure hub-only alternatives.  Figure~\ref{fig:cab-scenario-best-prob} gives a complementary scenario-wise view: under the scenario-normalized loss used in that diagnostic, FC is best in 58.333\% of posterior scenarios and MAHS in 41.667\%.  However, as Table~\ref{tab:cab-best-by-topology} shows, FC and MAHS do not jointly dominate in the Bayes-risk criterion because FC carries higher tail risk than the selected DSAHS design, while MAHS has zero service and hold reliability under the stringent thresholds.  This distinction is exactly why posterior multi-criterion decision analysis is useful: scenario-wise winning frequency alone does not encode service constraints or hold-time risk.

\begin{figure}[!htbp]
\centering
\includegraphics[width=0.82\textwidth]{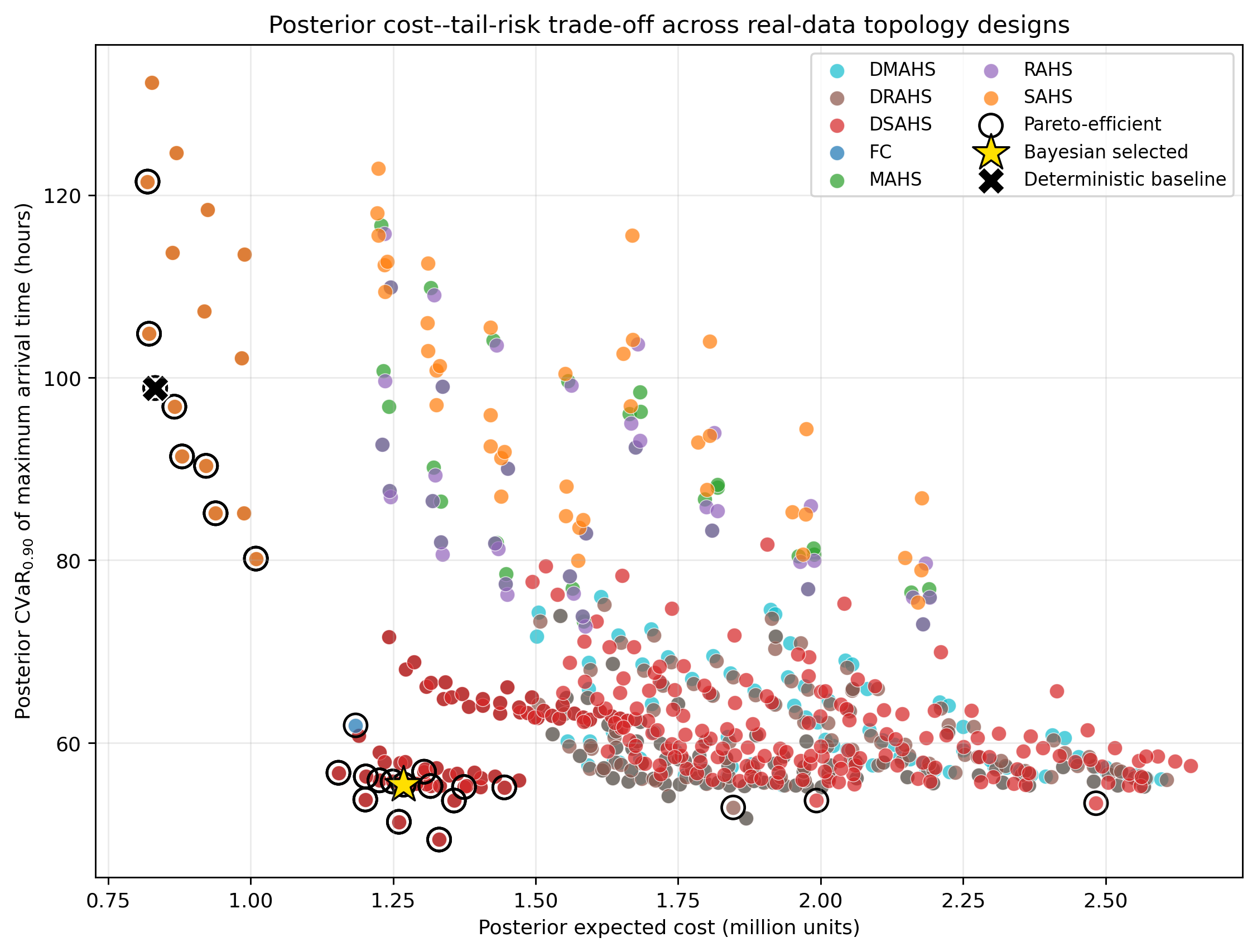}
\caption{Posterior trade-off between expected operating cost and CVaR of maximum arrival time across all candidate designs.  The selected Bayesian design lies in a favorable compromise region rather than at an extreme of either axis.}
\label{fig:cab-posterior-tradeoff}
\end{figure}

\begin{figure}[!htbp]
\centering
\includegraphics[width=0.82\textwidth]{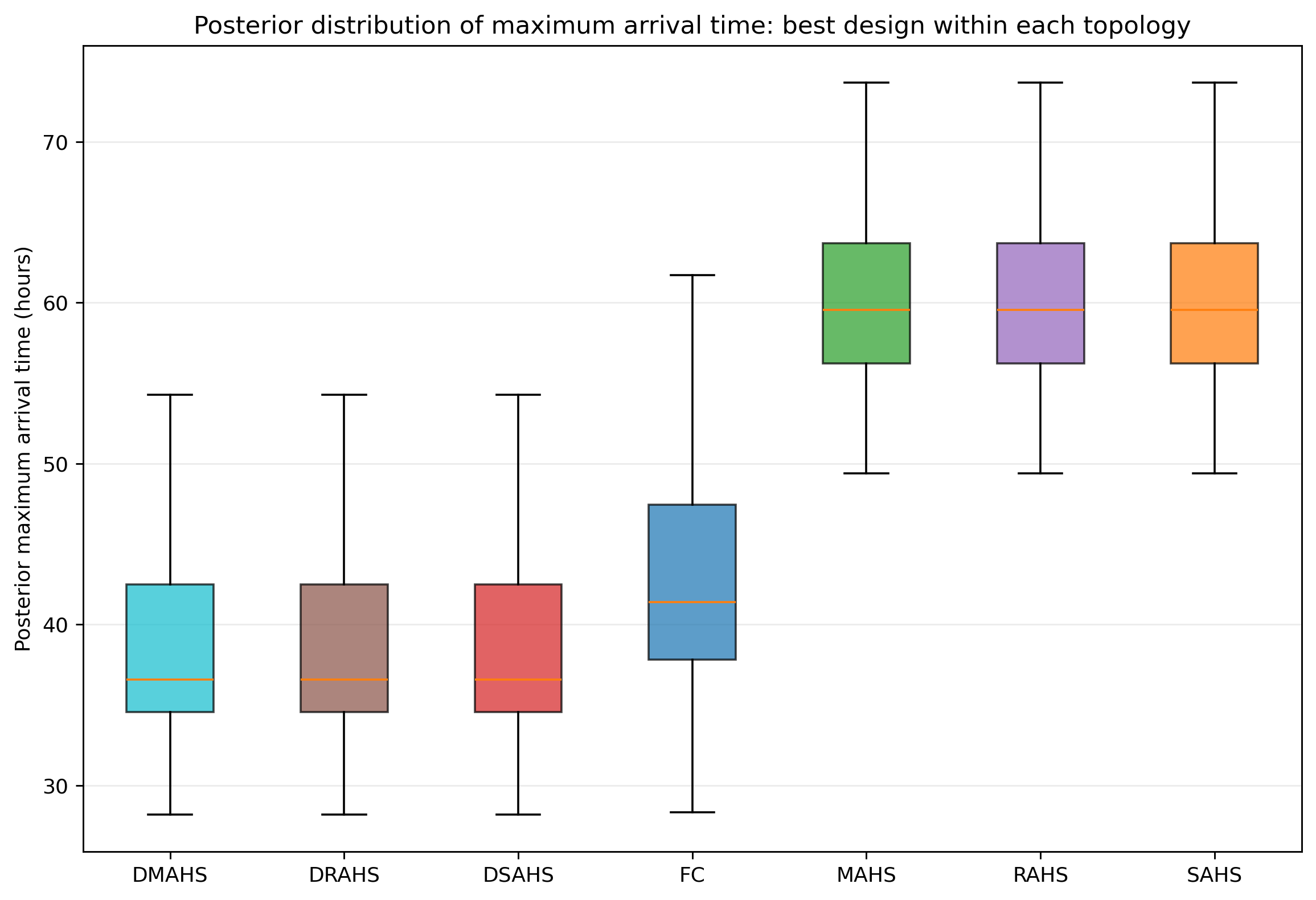}
\caption{Posterior predictive distribution of maximum arrival time for the best design within each topology class.  The boxplots show the full posterior spread, not only posterior means, and therefore reveal tail-risk differences that a deterministic analysis would hide.}
\label{fig:cab-arrival-boxplot}
\end{figure}

\begin{figure}[!htbp]
\centering
\includegraphics[width=0.82\textwidth]{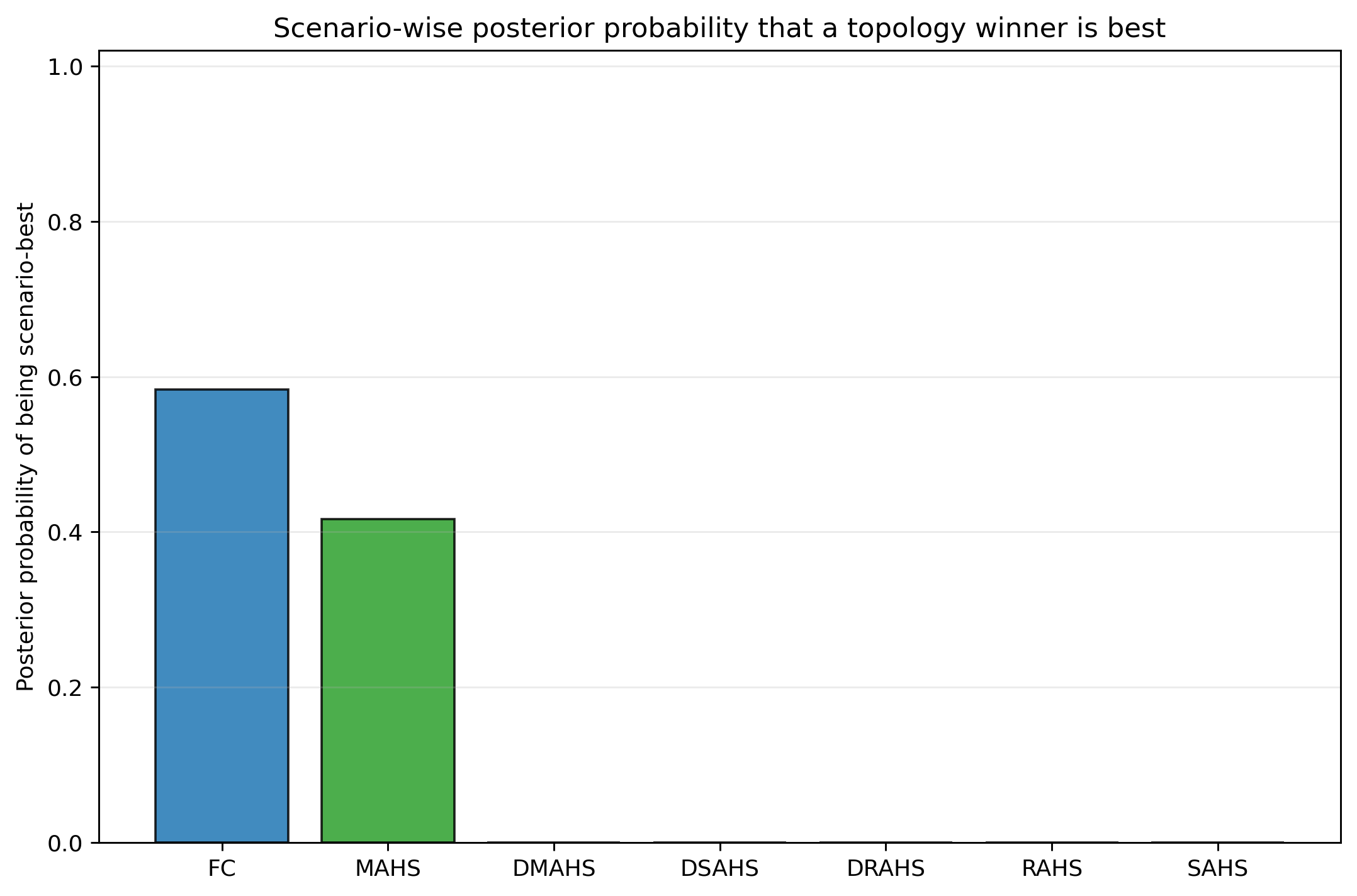}
\caption{Posterior probability that each topology-class winner is scenario-best under the scenario-normalized diagnostic loss.  FC and MAHS win many individual scenarios, but they are not selected by the posterior Bayes-risk rule once tail risk, service reliability and hold reliability are jointly considered.}
\label{fig:cab-scenario-best-prob}
\end{figure}

\begin{table}[!htbp]
\centering
\caption{Best posterior design within each topology class.  Costs are in millions; time quantities are in hours.}
\label{tab:cab-best-by-topology}
\begin{adjustbox}{max width=\textwidth}
\begin{tabular}{lllrrrrrrr}
\toprule
Topology & Hubs & Cap. & Direct & Cost & Mean max & P95 max & CVaR max & Serv. rel. & Hold rel. \\
\midrule
DMAHS & 3 & 0.90 & 76 & 1.268 & 39.316 & 53.695 & 55.417 & 0.208 & 0.992 \\
DRAHS & 3 & 0.90 & 76 & 1.268 & 39.316 & 53.695 & 55.417 & 0.208 & 0.992 \\
DSAHS & 3 & 0.90 & 76 & 1.268 & 39.316 & 53.695 & 55.417 & 0.208 & 0.992 \\
FC & -- & 1.00 & 132 & 1.184 & 43.481 & 60.811 & 61.909 & 0.083 & 1.000 \\
MAHS & 3 & 1.65 & 0 & 1.009 & 61.248 & 78.307 & 80.170 & 0.000 & 0.000 \\
RAHS & 3 & 1.65 & 0 & 1.009 & 61.248 & 78.307 & 80.170 & 0.000 & 0.000 \\
SAHS & 3 & 1.65 & 0 & 1.009 & 61.248 & 78.307 & 80.170 & 0.000 & 0.000 \\
\bottomrule
\end{tabular}
\end{adjustbox}
\end{table}

\begin{table}[!htbp]
\centering
\caption{Posterior probability that each topology-class winner is scenario-best under the scenario-normalized diagnostic loss.}
\label{tab:cab-scenario-best-prob}
\begin{adjustbox}{max width=\textwidth}
\begin{tabular}{llrrl}
\toprule
Topology & Hubs & Posterior probability scenario-best & Mean scenario loss & Design label \\
\midrule
FC & -- & 0.583 & 0.387 & FC$\mid$H=none$\mid$cap=1.00 \\
MAHS & 3 & 0.417 & 0.371 & MAHS$\mid$H=3$\mid$cap=1.65 \\
DMAHS & 3 & 0.000 & 0.482 & DMAHS$\mid$H=3$\mid$cap=0.90$\mid$dirq=0.65 \\
DSAHS & 3 & 0.000 & 0.482 & DSAHS$\mid$H=3$\mid$cap=0.90$\mid$dirq=0.65 \\
DRAHS & 3 & 0.000 & 0.482 & DRAHS$\mid$H=3$\mid$cap=0.90$\mid$dirq=0.65$\mid$R=1 \\
RAHS & 3 & 0.000 & 0.371 & RAHS$\mid$H=3$\mid$cap=1.65$\mid$R=1 \\
SAHS & 3 & 0.000 & 0.371 & SAHS$\mid$H=3$\mid$cap=1.65 \\
\bottomrule
\end{tabular}
\end{adjustbox}
\end{table}

Figure~\ref{fig:cab-reliability-bubble} gives a reliability-centered view of the same result.  Designs that look attractive in cost can be exposed as unreliable when hub-hold and service thresholds are included.  The selected DSAHS design achieves a high hold reliability of 0.992 in posterior training scenarios while keeping CVaR substantially below the FC benchmark.  This supports the methodological claim that posterior predictive design is not merely a computational variant of deterministic network optimization; it changes the decision by accounting for uncertainty in the quantities that operational managers actually experience.

\begin{figure}[!htbp]
\centering
\includegraphics[width=0.82\textwidth]{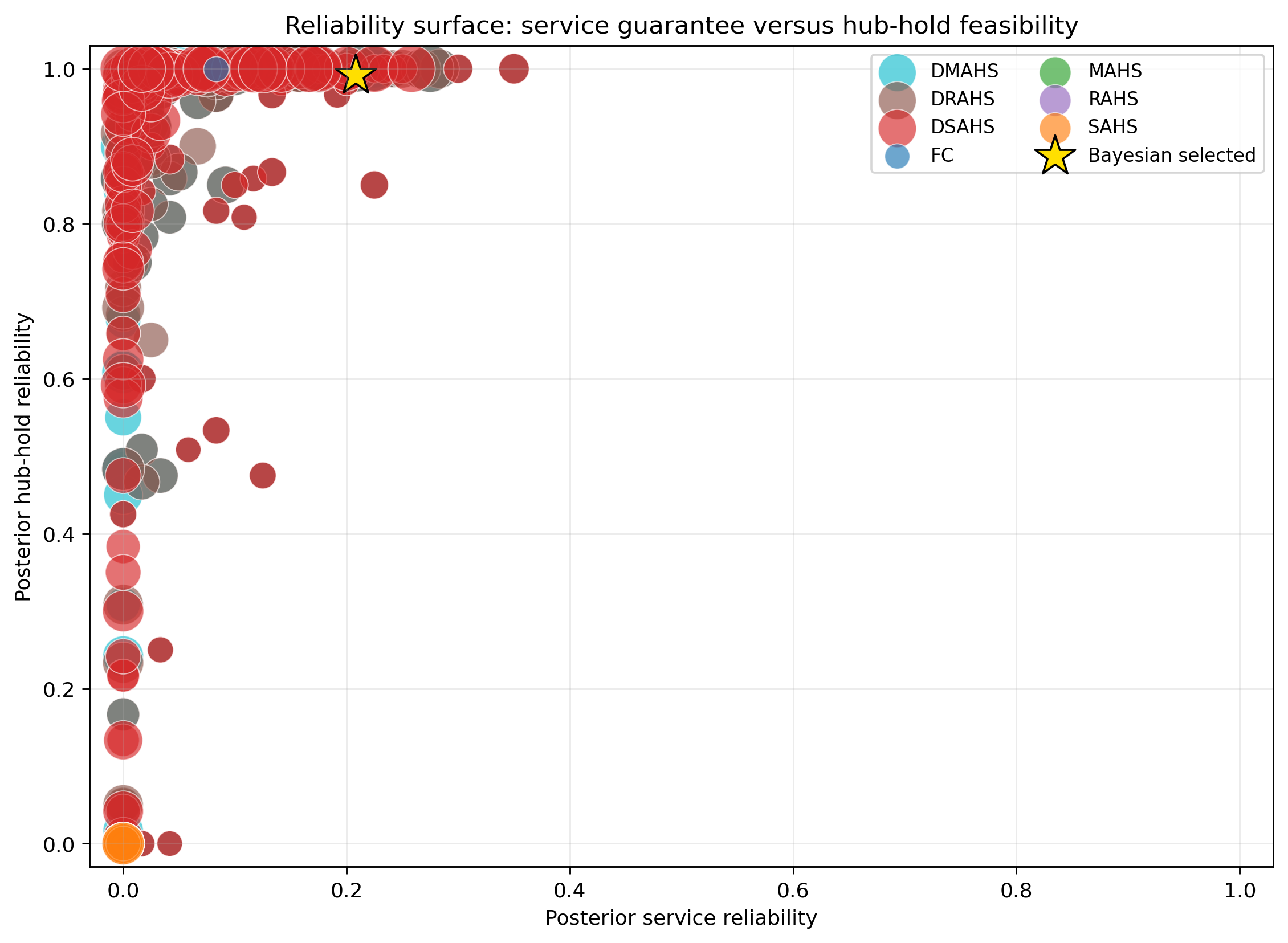}
\caption{Reliability bubble plot for candidate designs.  The selected Bayesian design balances expected cost, tail risk, service reliability and hold-time reliability.  The plot also shows why designs with low expected cost may be unattractive once posterior reliability constraints are considered.}
\label{fig:cab-reliability-bubble}
\end{figure}

\subsection{Future stress validation against deterministic baselines}
\label{subsec:cab-future-stress}

The principal purpose of the proposed methodology is not simply to fit historical data, but to choose network designs that behave well under future uncertainty.  We therefore evaluate three designs under 180 future stress scenarios: (i) the Bayesian posterior-risk design, (ii) a deterministic cost-priority baseline, and (iii) a fully connected speed benchmark.  The stress scenarios increase future demand by a factor 1.22 and introduce route disruption with probability 0.26.  Table~\ref{tab:cab-future-stress} gives the resulting out-of-sample stress performance.

The deterministic cost-priority baseline is much cheaper in expected cost, but it fails operationally: its future mean maximum arrival time is 91.983 hours, its future 90\% CVaR is 131.633 hours, and both service and hold reliability are zero.  In contrast, the Bayesian posterior-risk design has future mean maximum arrival time 45.119 hours and future CVaR 69.887 hours, with future hold reliability 0.694.  The fully connected benchmark has hold reliability equal to one because it avoids hubs, but its future CVaR, 79.305 hours, is worse than the selected Bayesian design and it remains more expensive than the deterministic baseline.  Thus, the Bayesian design is neither a trivial cost-minimizer nor a trivial fully connected speed solution; it is a posterior-risk compromise selected because it is robust to uncertain future operating regimes.

\begin{table}[!htbp]
\centering
\caption{Future stress validation under demand amplification and route disruption.  Costs are in millions; time quantities are in hours.}
\label{tab:cab-future-stress}
\begin{adjustbox}{max width=\textwidth}
\begin{tabular}{lllrrrrrr}
\toprule
Method & Topology & Hubs & Exp. cost & P95 cost & Mean max & P95 max & CVaR max & Hold rel. \\
\midrule
Bayesian posterior-risk design & DSAHS & 3 & 1.255 & 1.290 & 45.119 & 67.430 & 69.887 & 0.694 \\
Deterministic cost-priority baseline & RAHS & 3 & 0.824 & 0.872 & 91.983 & 128.113 & 131.633 & 0.000 \\
Fully-connected speed benchmark & FC & -- & 1.194 & 1.225 & 50.433 & 76.758 & 79.305 & 1.000 \\
\bottomrule
\end{tabular}
\end{adjustbox}
\end{table}

\begin{figure}[!htbp]
\centering
\includegraphics[width=0.82\textwidth]{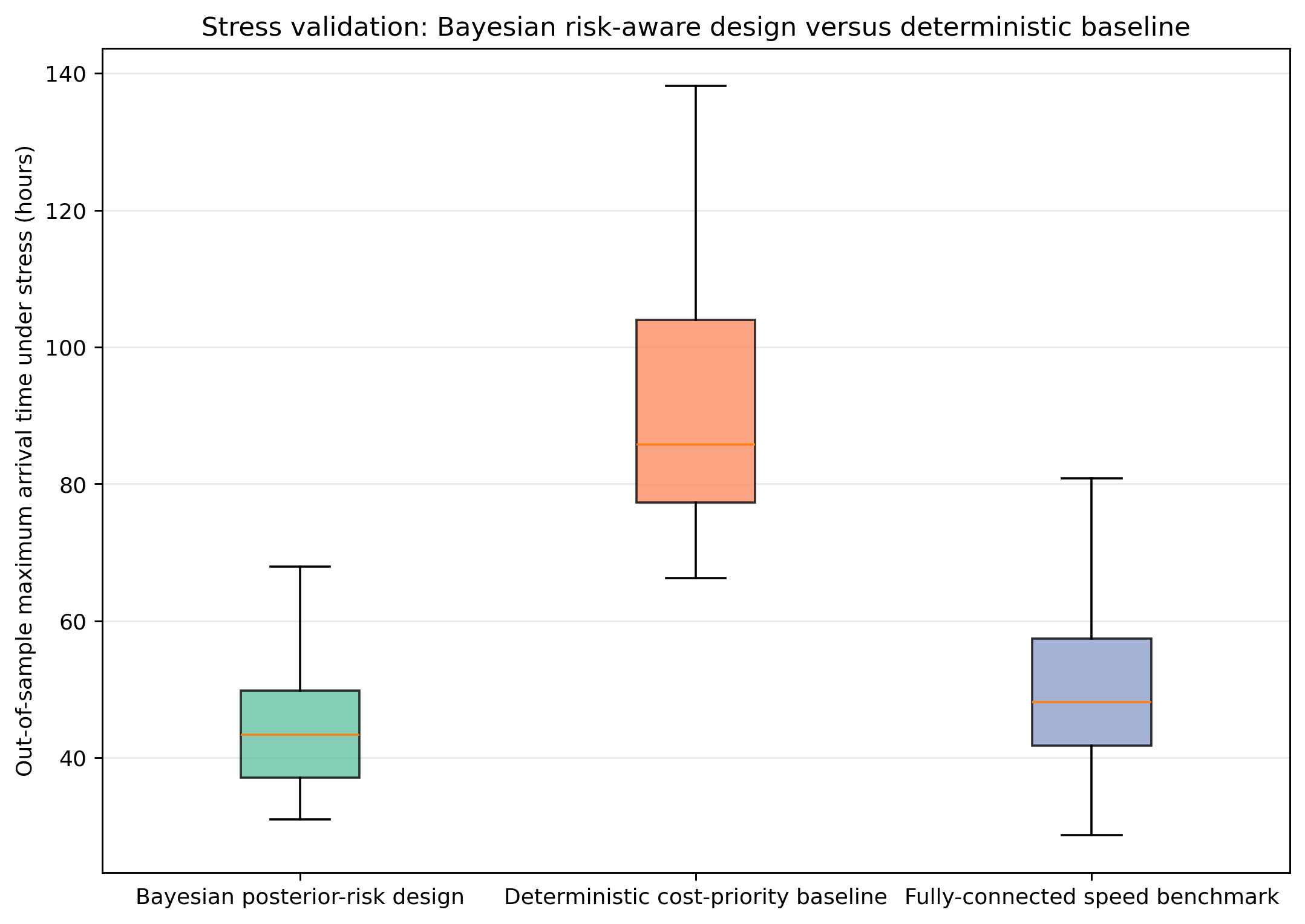}
\caption{Future stress comparison of maximum-arrival-time distributions.  The deterministic cost-priority baseline has a very long right tail; the Bayesian design sharply reduces the tail, which is the main target of the posterior CVaR component.}
\label{fig:cab-future-arrival}
\end{figure}

\begin{figure}[!htbp]
\centering
\includegraphics[width=0.82\textwidth]{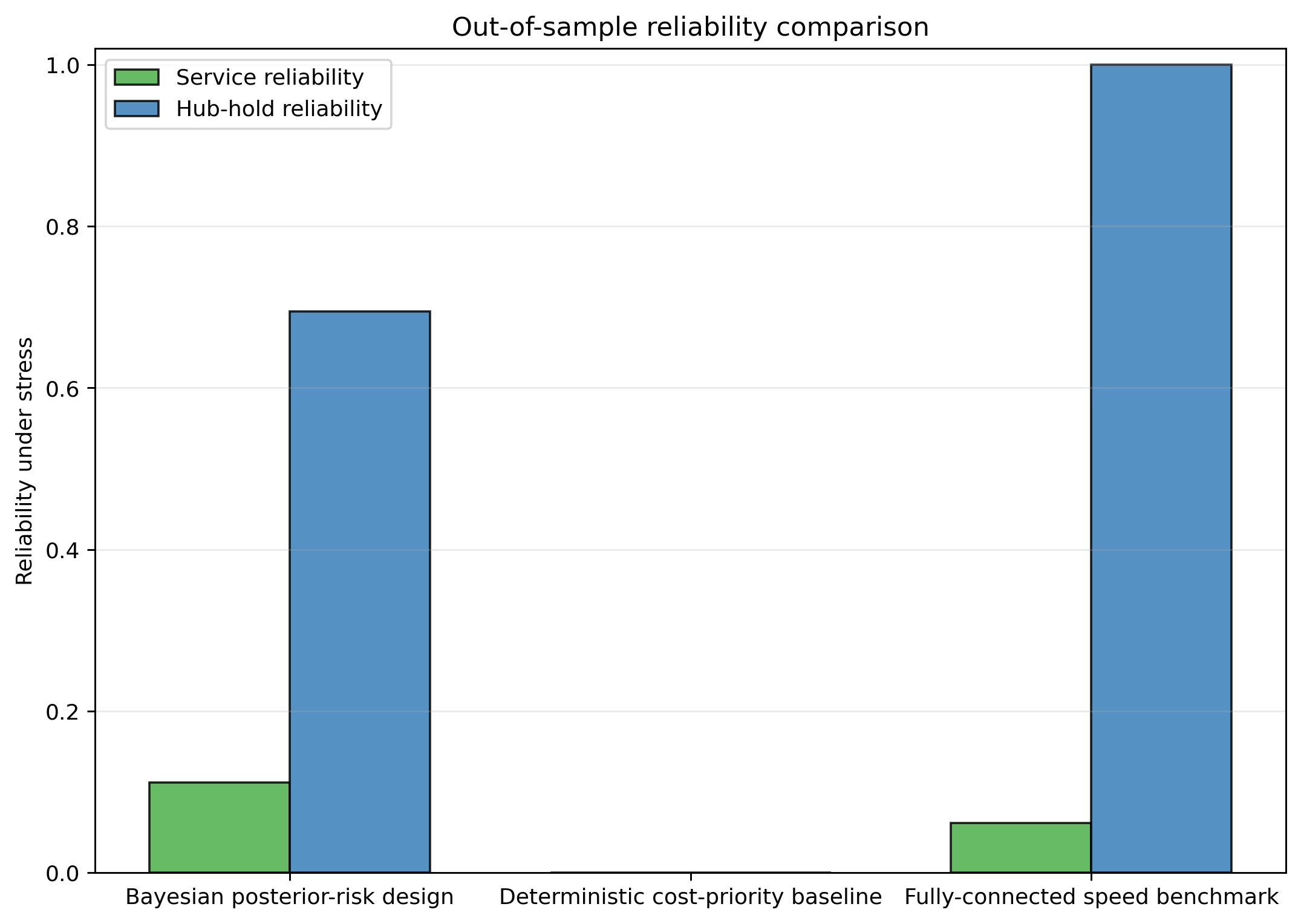}
\caption{Future stress comparison of service and hub-hold reliability.  The Bayesian design substantially improves hold reliability relative to the deterministic cost-priority baseline, while avoiding the excessive fully connected design.}
\label{fig:cab-future-reliability}
\end{figure}

Table~\ref{tab:cab-bayesian-improvement} expresses the Bayesian advantage relative to the deterministic cost-priority baseline.  The Bayesian design increases expected cost by 52.371\%, but reduces future CVaR of maximum arrival time by 46.908\%, reduces the future 95th percentile of maximum arrival time by 47.366\%, improves service reliability by 11.111 percentage points, and improves hub-hold reliability by 69.444 percentage points.  This is a strong demonstration of the proposed methodology because the gain is not hidden in a small in-sample metric: it is observed under genuinely adverse future posterior stress scenarios.

\begin{table}[!htbp]
\centering
\caption{Bayesian posterior-risk design improvement over the deterministic cost-priority baseline under future stress scenarios.}
\label{tab:cab-bayesian-improvement}
\begin{tabular}{lr}
\toprule
Metric & Value \\
\midrule
Expected cost premium of Bayesian design (\%) & 52.371 \\
CVaR max-arrival reduction (\%) & 46.908 \\
95th-percentile max-arrival reduction (\%) & 47.366 \\
Service reliability gain (percentage points) & 11.111 \\
Hub-hold reliability gain (percentage points) & 69.444 \\
\bottomrule
\end{tabular}
\end{table}

\subsection{Sensitivity analysis under posterior uncertainty}
\label{subsec:cab-sensitivity}

A top journal referee will usually ask whether the selected topology is an artefact of one arbitrary utility weight vector.  We therefore conduct a posterior preference sensitivity analysis by changing the weights assigned to expected cost, CVaR time risk, emission cost, service penalty and hold penalty.  Table~\ref{tab:cab-preference-sensitivity} and Fig.~\ref{fig:cab-preference-sensitivity} report the results.

When the criterion is cost-dominant, the fully connected topology is selected because it has a simple, hub-free structure and perfect hold reliability.  Once the decision maker places more emphasis on balanced cost--risk behavior, tail risk or reliability, the selected topology moves to DSAHS.  Under the balanced-cost-risk profile, the selected design is DSAHS with internal hub 3, capacity multiplier 0.90 and 76 direct links.  Under tail-risk-aware, reliability-dominant and time-critical profiles, the DSAHS topology remains selected but the capacity multiplier increases to 1.65 and the number of direct links decreases to 56, thereby prioritizing lower CVaR of maximum arrival time.  This pattern is methodologically attractive: the proposed model does not force one topology under all preferences; instead, it reveals how posterior uncertainty, tail-risk aversion and reliability priorities alter the topology/capacity/direct-link design.

\begin{table}[!htbp]
\centering
\caption{Preference sensitivity analysis under posterior uncertainty.  Costs are in millions and CVaR time is in hours.}
\label{tab:cab-preference-sensitivity}
\begin{adjustbox}{max width=\textwidth}
\begin{tabular}{lrrrrrllrrrr}
\toprule
Profile & $w_C$ & $w_T$ & $w_E$ & $w_S$ & $w_H$ & Topology & Hubs & Score & Cost & CVaR & Hold rel. \\
\midrule
Cost-dominant & 0.70 & 0.20 & 0.04 & 0.03 & 0.03 & FC & -- & 0.198 & 1.184 & 61.909 & 1.000 \\
Balanced-cost-risk & 0.55 & 0.30 & 0.05 & 0.05 & 0.05 & DSAHS & 3 & 0.198 & 1.268 & 55.417 & 0.992 \\
Tail-risk-aware & 0.42 & 0.38 & 0.05 & 0.08 & 0.07 & DSAHS & 3 & 0.191 & 1.331 & 49.397 & 0.850 \\
Reliability-dominant & 0.30 & 0.42 & 0.05 & 0.13 & 0.10 & DSAHS & 3 & 0.201 & 1.331 & 49.397 & 0.850 \\
Time-critical & 0.25 & 0.50 & 0.05 & 0.10 & 0.10 & DSAHS & 3 & 0.164 & 1.331 & 49.397 & 0.850 \\
\bottomrule
\end{tabular}
\end{adjustbox}
\end{table}

\begin{figure}[!htbp]
\centering
\includegraphics[width=0.82\textwidth]{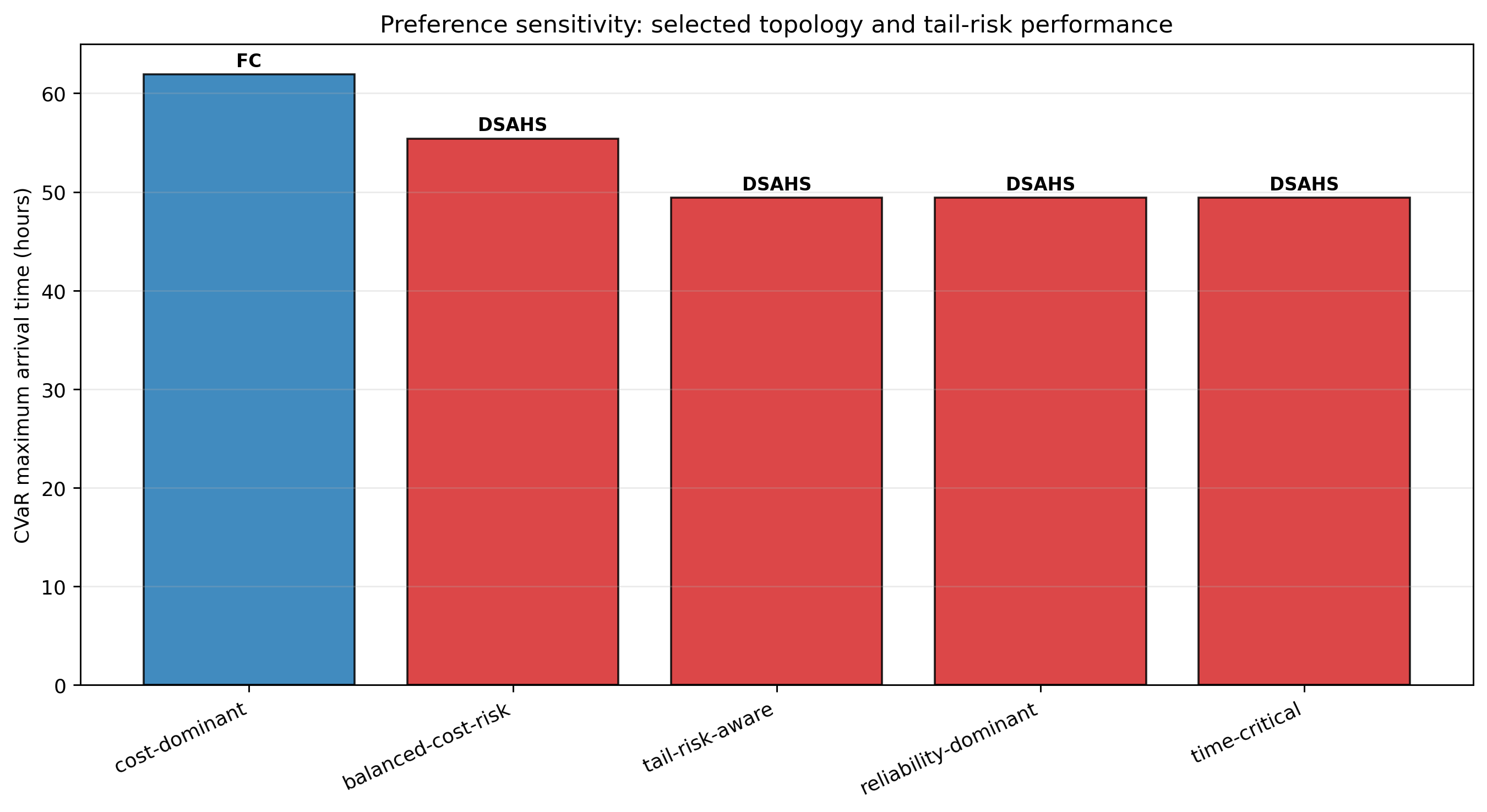}
\caption{Preference sensitivity under posterior uncertainty.  As the decision maker shifts from a cost-dominant to a tail-risk-aware or reliability-dominant profile, the selected design changes from FC to DSAHS and then to a higher-capacity DSAHS design.}
\label{fig:cab-preference-sensitivity}
\end{figure}

\subsection{Selected Bayesian network versus deterministic baseline}
\label{subsec:cab-network-visuals}

Figures~\ref{fig:cab-selected-bayes-network} and~\ref{fig:cab-deterministic-network} compare the selected Bayesian posterior-risk design with the deterministic cost-priority baseline.  The Bayesian design selects a DSAHS structure with 76 direct links, thereby preserving direct service on high-impact OD corridors while using one hub for consolidation.  The deterministic baseline selects an RAHS hub-only design with no direct links.  Under posterior stress, this low-cost hub-only design creates severe hub sorting burden: the future mean maximum hub sorting time is 44.640 hours compared with 7.585 hours for the Bayesian design.  Figure~\ref{fig:cab-risk-decomposition} confirms that the deterministic design's apparent cost advantage is purchased by accepting very large time and reliability penalties.

\begin{figure}[!htbp]
\centering
\includegraphics[width=0.82\textwidth]{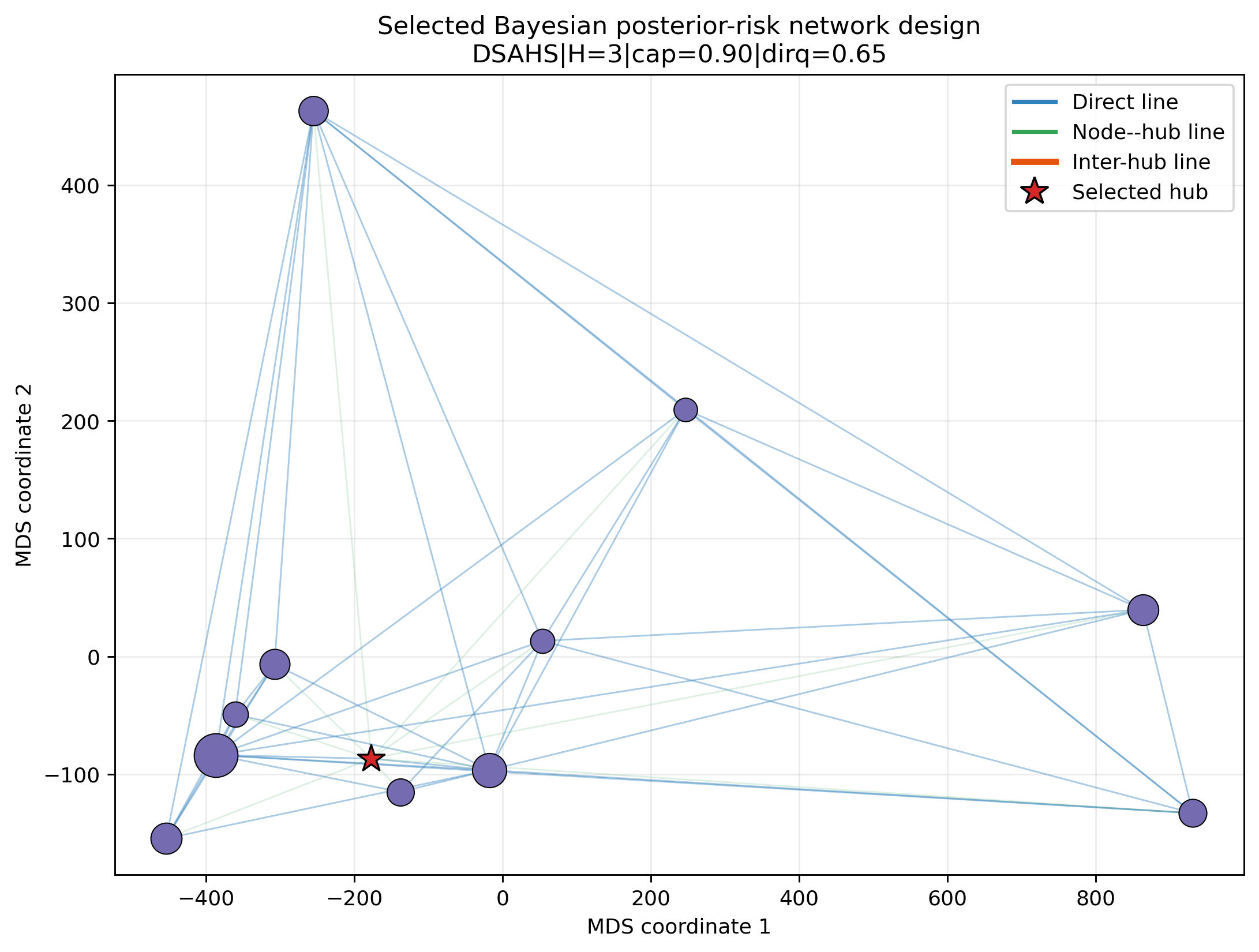}
\caption{Selected Bayesian posterior-risk network design.  The DSAHS design uses one hub and 76 direct links, producing a compromise between consolidation and direct delivery protection on important corridors.}
\label{fig:cab-selected-bayes-network}
\end{figure}

\begin{figure}[!htbp]
\centering
\includegraphics[width=0.82\textwidth]{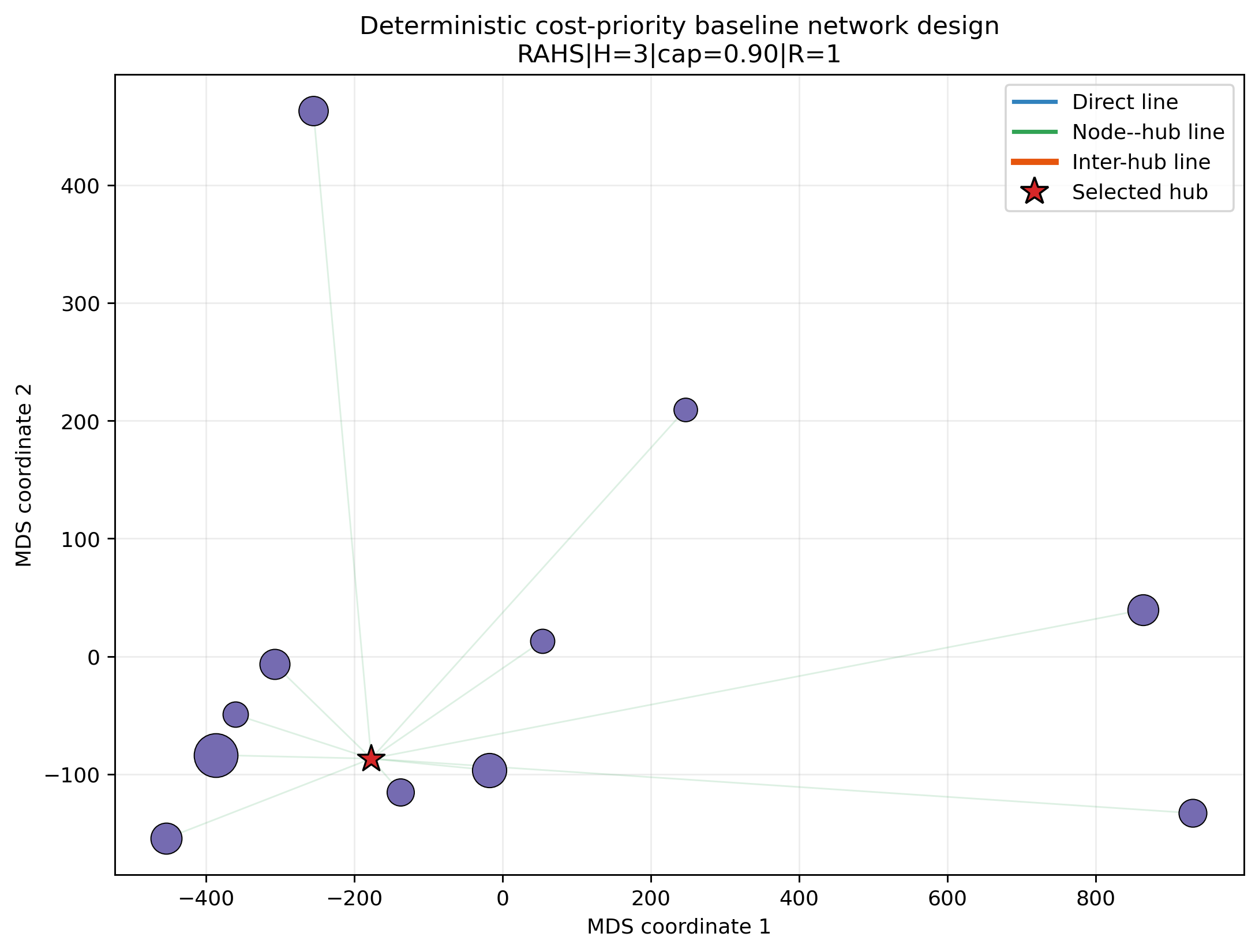}
\caption{Deterministic cost-priority baseline network design.  The hub-only RAHS design is cheaper but concentrates too much posterior load on the hub, causing severe hold-time and arrival-time tail risk under future stress.}
\label{fig:cab-deterministic-network}
\end{figure}

\begin{figure}[!htbp]
\centering
\includegraphics[width=0.82\textwidth]{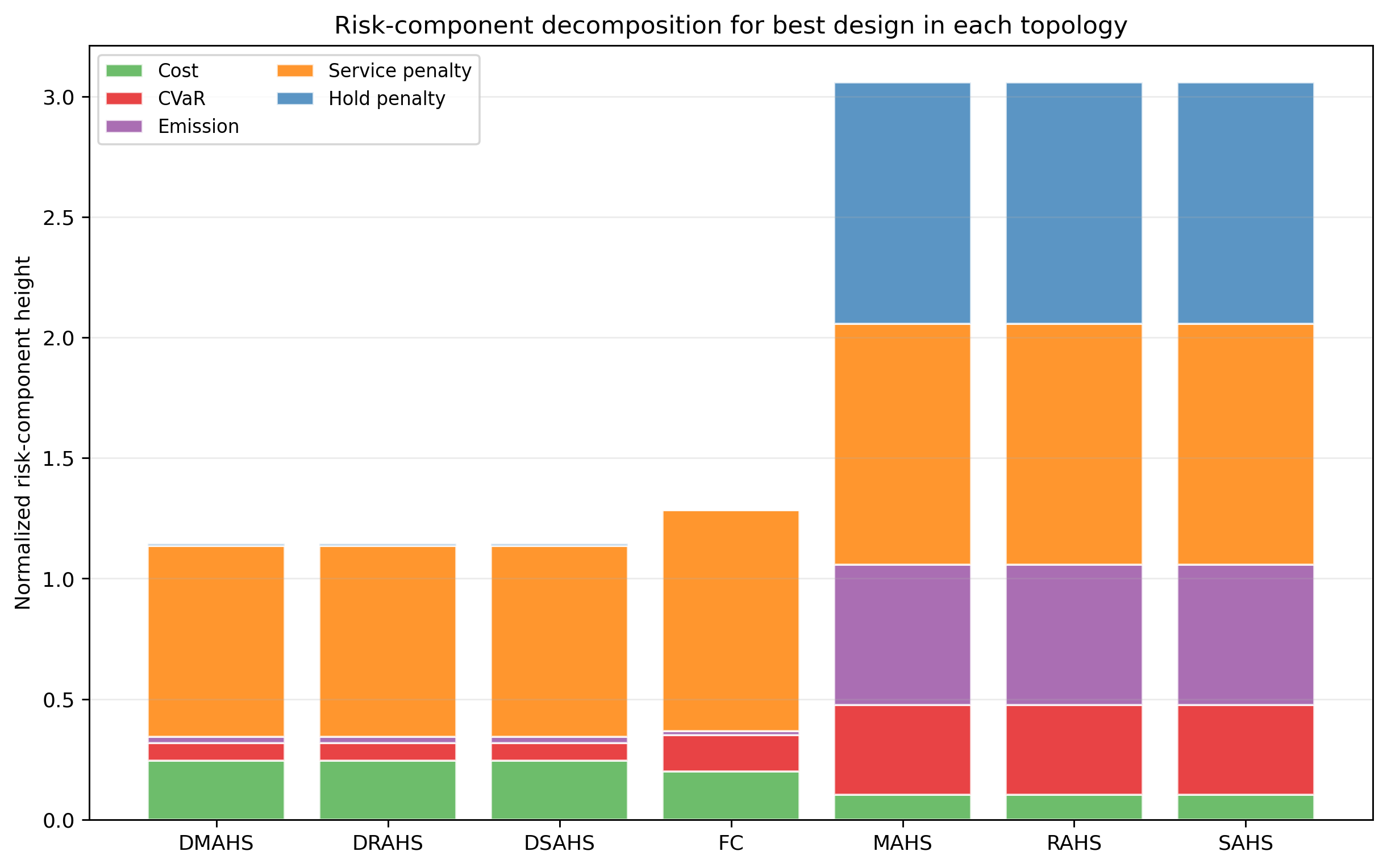}
\caption{Risk-component decomposition comparing selected designs.  The Bayesian posterior-risk design is favored because it reduces tail-time and reliability penalties rather than optimizing only the deterministic expected-cost component.}
\label{fig:cab-risk-decomposition}
\end{figure}

\subsection{Key findings from the real-data posterior experiment}
\label{subsec:cab-key-findings}

The CAB posterior scenario experiment supports the proposed methodology in five ways.

First, the experiment demonstrates that posterior uncertainty changes the network-design decision.  A deterministic cost-priority criterion chooses a low-cost RAHS design with no direct links, but this design collapses under posterior stress: future CVaR of maximum arrival time is 131.633 hours and hub-hold reliability is zero.  The Bayesian method instead selects a direct-connected hybrid design that is more expensive but operationally far more reliable.

Second, the posterior CVaR criterion is essential.  The difference between posterior mean performance and posterior tail performance is substantial.  The selected Bayesian design reduces future CVaR by 46.908\% and future 95th-percentile maximum arrival time by 47.366\% relative to the deterministic baseline.  These are precisely the quantities that matter in express logistics, where failure is usually driven by tail events rather than average-day behavior.

Third, the hub sorting-efficiency component is practically meaningful.  The deterministic baseline appears attractive in expected cost, but its future mean maximum hub sorting time is 44.640 hours.  The Bayesian design reduces this to 7.585 hours by admitting direct links on important OD pairs and limiting excessive concentration of posterior flow through the hub.  This finding directly reinforces the modeling motivation of treating sorting efficiency and hub congestion as decision-relevant quantities rather than harmless fixed parameters.

Fourth, the sensitivity analysis shows that the proposed methodology is not a black-box rule that mechanically selects one topology.  Under cost-dominant preferences the method selects FC; under balanced, time-critical and reliability-dominant preferences it selects DSAHS, with capacity and direct-link density changing in the expected direction.  This behavior is important for a rigorous statistics/OR audience because it shows structural interpretability of the Bayesian decision rule.

Fifth, the results clarify the scientific contribution beyond deterministic multi-structure comparison.  The original deterministic framework is valuable because it compares multiple network topologies.  The present Bayesian extension adds posterior learning, posterior predictive scenario evaluation, tail-risk optimization and reliability-aware topology selection.  The real-data experiment shows that these additions are not cosmetic: they lead to a different and more robust network design under uncertainty.

\paragraph{Reproducibility note.}
All numerical outputs in this section were generated by a fully reproducible Python implementation using manual Bayesian updating functions and posterior scenario simulation.  The full scenario-level outputs, including all posterior and future-stress scenario metrics, were saved as machine-readable CSV files; only the summary tables required for the main manuscript are displayed here.

\section{Discussion}
\label{sec:overall_discussion}

The proposed Bayesian multi-topology express transportation network design framework addresses a limitation of deterministic network-design models: a design that is optimal under one nominal OD-flow, travel-time and hub-productivity table may be fragile when future logistics conditions deviate from that nominal table.  The deterministic multi-structure framework is valuable because it recognizes that topology itself is a design decision; however, in practical express logistics, topology, hub capacity, route assignment and direct-link admission must be judged under uncertain demand, uncertain travel time and uncertain sorting productivity.  The central contribution of the present work is therefore to transform topology selection from a deterministic comparison into a posterior predictive decision problem.

The methodology has three important interpretive advantages.  First, it separates installed sorting capacity from realized sorting productivity.  This is operationally important because high installed capacity does not guarantee high realized throughput under labor shortage, machine downtime, storm disruption, peak-season load or local congestion.  By modeling hub reliability and propagating it through posterior scenarios, the method identifies designs that are not only cheap on average but also less vulnerable to hub-hold violations.  Second, the use of posterior CVaR of maximum arrival time targets the upper tail of the service-time distribution.  This is more relevant to express logistics than posterior mean arrival time alone, because customer dissatisfaction, service failure and penalty costs are typically driven by late-tail events.  Third, the Bayesian Bayes-risk score provides a transparent way of combining expected cost, time-tail risk, emission-aware penalties and service-reliability requirements.

The simulation experiment and the CAB benchmark study support different aspects of the methodology.  The controlled simulation shows that posterior uncertainty can alter the capacity decision even when the selected topology and hub set remain similar.  In that experiment, the Bayesian design deliberately pays a small expected-cost premium to reduce future arrival-time CVaR, improve service reliability and substantially improve hub-hold reliability.  This demonstrates that the proposed framework is not merely a device for changing topology labels; it is a statistically principled mechanism for investing in robustness where posterior uncertainty indicates vulnerability.  The CAB benchmark study, based on a real OD-flow and distance network, gives a complementary message.  Under a cost-priority deterministic baseline, a hub-only design appears attractive, but under future stress scenarios it suffers severe maximum-arrival and hub-hold failures.  The Bayesian posterior-risk design, by admitting direct links and controlling hub burden, provides a more reliable compromise between consolidation and direct service.

Several limitations should be acknowledged.  First, the CAB experiment is a posterior scenario experiment on a real static benchmark network, not a proprietary daily express-company panel.  The CAB OD-flow and distance matrices provide a realistic hub-location structure, but temporal demand variation, travel-time uncertainty and hub reliability are generated around that benchmark in a controlled reproducible way.  Therefore, the CAB results should be interpreted as evidence of methodological behavior on a real benchmark network rather than as a direct operational prescription for a courier company.  Second, the present implementation uses finite candidate design enumeration rather than a fully general large-scale branch-and-cut implementation for every topology.  This is appropriate for methodological verification and reproducibility, but industrial-scale deployment would require decomposition, warm-starting, scenario reduction and solver-level implementation of all topology constraints.  Third, the Bayesian components are deliberately chosen to be transparent and manually implementable: Gamma--Poisson demand updating, lognormal travel-time updating, Beta hub-reliability updating and posterior predictive simulation.  More flexible models, such as dynamic hierarchical state-space demand models, spatial traffic models, copula dependence between demand and travel time, or nonparametric Bayesian mixtures, may improve calibration when richer data are available.

A further limitation concerns scalarization.  The posterior Bayes-risk score depends on weights assigned to cost, time-tail risk, emission cost and reliability penalties.  The sensitivity analysis partly addresses this issue by showing how selected designs vary under different preference profiles.  Nevertheless, in practical decision support, the final selection should be accompanied by a posterior Pareto frontier rather than a single scalarized solution alone.  Finally, the present model abstracts away from vehicle scheduling, pickup and delivery routing, dock assignment and intra-hub queueing at a detailed operational level.  These simplifications preserve focus on strategic topology and capacity design, but future work should integrate the proposed Bayesian topology layer with tactical vehicle and workforce scheduling models.
In the computational experiments, we use a candidate-restricted posterior scenario approximation: for each topology, a finite but structurally rich set of admissible designs is generated by varying hub subsets, capacity multipliers, and direct-link thresholds. Each candidate is then evaluated under posterior predictive scenarios. This implementation is intended for methodological verification and reproducibility. A full solver-level branch-and-cut implementation of the scenario MIP is left for large-scale industrial deployment.

Despite these limitations, the evidence suggests that the Bayesian extension is more than a probabilistic rephrasing of a deterministic model.  It changes the decision criterion, exposes tail and reliability vulnerabilities, and provides posterior evidence for topology and capacity choices.  For logistics systems operating under volatile demand and disrupted travel conditions, this is a substantive methodological improvement.

\section{Conclusion and Future Work}
\label{sec:overall_conclusion}

This paper proposed a Bayesian multi-topology framework for express transportation network design under posterior predictive uncertainty.  Building on deterministic multi-structure network design, the paper treated origin--destination demand, travel time, transportation cost and realized hub sorting productivity as uncertain quantities learned from data.  The resulting posterior predictive scenarios were embedded into topology-specific network-design evaluations over fully connected, hub-and-spoke and direct-link hybrid structures.  The design criterion combined posterior expected operating cost, posterior CVaR of maximum arrival time, service reliability, hub-hold reliability and emission-aware penalties.

The theoretical analysis established that, under finite engineering design grids and posterior integrability, a Bayes-optimal topology-design pair exists.  It further showed that the posterior scenario approximation converges uniformly almost surely over the finite design space, that empirical scenario-optimal designs recover the Bayes-optimal design under a strict risk gap, and that topology selection is stable under posterior predictive concentration.  These results justify the use of posterior scenario optimization not merely as a heuristic but as a statistically consistent approximation to the Bayesian decision problem.

The computational evidence supports the methodological claim.  In the controlled simulation experiment, the Bayesian design improved future tail-risk and reliability measures at a modest cost premium.  In the CAB benchmark experiment, the Bayesian posterior-risk design avoided the severe reliability failures of a deterministic cost-priority baseline by selecting a direct-connected hybrid design that controlled hub burden and reduced future maximum-arrival-time tail risk.  These findings show that the proposed method is useful precisely when deterministic designs are most vulnerable: under demand amplification, route disruption, and uncertain hub productivity.

Several future extensions are natural.  First, the framework should be applied to proprietary daily courier data containing actual OD demand, GPS-based travel times, hub processing records and failure logs.  Such data would allow richer hierarchical and dynamic posterior models.  Second, the current strategic topology model can be extended to a multi-period Bayesian design problem, where hub capacity and direct-link decisions are revised sequentially as new data arrive.  Third, large-scale solution methods should be developed, including Benders decomposition, Lagrangian relaxation, branch-and-cut warm starts, and scenario-reduction strategies for posterior predictive mixed-integer programs.  Fourth, the hub sorting model can be connected to explicit queueing approximations, dock constraints and workforce scheduling.  Fifth, instead of a single scalarized Bayes-risk score, future work may develop interactive posterior Pareto decision support tools in which managers can visualize the posterior cost--risk--reliability frontier.

Overall, the paper shows that Bayesian posterior predictive thinking provides a rigorous and practically interpretable route from deterministic topology comparison to uncertainty-aware express transportation network design.  The resulting designs are not merely optimal for a nominal table; they are chosen to remain reliable under plausible future logistics uncertainty.

\section*{Declarations}
\label{sec:declarations}

\noindent\textbf{Funding.}
The author declares that no external funding was received for this study.

\medskip

\noindent\textbf{Competing interests.}
The author declares that there are no competing financial or non-financial interests related to this work.

\medskip

\noindent\textbf{Data availability.}
The simulation study uses synthetic data generated by the reproducible Python code described in the manuscript. The real-data posterior scenario experiment uses the publicly available CAB25 hub-location benchmark, which contains an origin--destination flow matrix and a distance matrix. The CAB benchmark is a standard hub-location test instance associated with the classical hub-location literature, particularly the interacting hub-location formulation of O'Kelly~\cite{okelly1987quadratic}, and is also connected to the public OR benchmark tradition described by Beasley~\cite{beasley1990orlib}. In the present paper, the real CAB network structure is used as the benchmark geometry and flow baseline, while posterior predictive temporal scenarios are constructed around the benchmark flow and distance matrices for methodological evaluation. No proprietary courier-company data are used in the present version of the manuscript.

\medskip

\noindent\textbf{Code availability.}
The complete Python code used for Bayesian posterior updating, posterior predictive scenario generation, topology-wise network evaluation, posterior risk computation, sensitivity analysis, and automated generation of figures and tables is publicly available on GitHub at:
\[
\text{\url{https://github.com/debashisdotchatterjee/bayesian_etndp}}.
\]
The repository is released under a public license and contains the scripts required to reproduce the simulation verification and CAB benchmark posterior scenario experiments reported in this paper.

\medskip

\noindent\textbf{Author contribution.}
Debashis Chatterjee conceptualized the Bayesian multi-topology formulation, developed the posterior predictive methodology, implemented the computational experiments, derived the theoretical results, and wrote the manuscript.

\medskip

\noindent\textbf{Ethics approval.}
Not applicable.  

\appendix

\section{Proofs of Theoretical Results}
\label{app:proofs}

This appendix provides detailed proofs of the theoretical results stated in Section~\ref{sec:theory}.  Throughout, the global feasible design space is
\[
\mathfrak X=\bigcup_{g=1}^{7}\{g\}\times\mathcal X_g.
\]
A feasible topology-design pair is denoted by \(z=(g,x_g)\).  Conditional on the observed data \(\mathcal D_T\), the posterior predictive distribution of a future logistics scenario is denoted by \(P_T\).  Expectations and probabilities with respect to \(P_T\) are denoted by \(\mathbb E_T\) and \(P_T\), respectively.

\subsection{Proof of Theorem~\ref{thm:existence}}
\label{app:proof_existence}

\begin{proof}
By Assumption~\ref{ass:finite_design}, for each topology \(G_g\), the topology-specific feasible set \(\mathcal X_g\) is nonempty and finite.  Since there are exactly seven topology classes, the global feasible set
\[
\mathfrak X
=
\bigcup_{g=1}^{7}\{g\}\times\mathcal X_g
\]
is a finite union of finite nonempty sets.  Hence \(\mathfrak X\) is finite and nonempty.

For any \(z\in\mathfrak X\), Assumption~\ref{ass:integrability} gives
\[
\mathbb E_T|C(z,\Xi)|<\infty,
\qquad
\mathbb E_T|A(z,\Xi)|<\infty.
\]
The posterior CVaR term is finite because the Rockafellar--Uryasev representation gives, for any fixed \(\zeta\in\mathbb R\),
\[
\mathrm{CVaR}_\alpha\{A(z,\Xi)\}
\le
\zeta+\frac{1}{1-\alpha}\mathbb E_T\{A(z,\Xi)-\zeta\}_+.
\]
Taking \(\zeta=0\), we obtain
\[
\mathrm{CVaR}_\alpha\{A(z,\Xi)\}
\le
\frac{1}{1-\alpha}\mathbb E_T\{A(z,\Xi)\}_+
\le
\frac{1}{1-\alpha}\mathbb E_T|A(z,\Xi)|<\infty.
\]
The violation probabilities
\[
P_T\{A(z,\Xi)>T^\star\},
\qquad
P_T\{H_k(z,\Xi)>d_t\}
\]
belong to \([0,1]\) by Assumption~\ref{ass:thresholds}.  Therefore every component of \(\rho_T(z)\) in \eqref{eq:formal_bayes_risk} is finite for each \(z\in\mathfrak X\).  Thus \(\rho_T:\mathfrak X\to\mathbb R\) is a real-valued function on a finite nonempty set.

A real-valued function on a finite nonempty set attains its minimum.  Hence there exists at least one
\[
z_T^\star\in\mathfrak X
\]
such that
\[
\rho_T(z_T^\star)=\min_{z\in\mathfrak X}\rho_T(z).
\]
Writing \(z_T^\star=(g_T^\star,x_{g_T^\star}^\star)\) proves the existence of a Bayes-optimal topology-design pair.
\end{proof}

\subsection{Proof of Lemma~\ref{lem:cvar_consistency}}
\label{app:proof_cvar}

\begin{proof}
Let \(F\) denote the distribution of \(Y\), and let \(F_B\) be the empirical distribution of \(Y_1,\ldots,Y_B\).  Since \(\mathbb E|Y|<\infty\), the empirical measure \(F_B\) converges to \(F\) almost surely in the first Wasserstein distance:
\[
W_1(F_B,F)\longrightarrow 0
\quad\text{almost surely}.
\]
This follows from the Glivenko--Cantelli theorem together with the strong law of large numbers for the first absolute moment.

For any integrable random variable \(Y\), CVaR admits the dual representation
\[
\mathrm{CVaR}_\alpha(Y)
=
\sup\left\{
\mathbb E(YQ):
0\le Q\le \frac{1}{1-\alpha},\ \mathbb E Q=1
\right\}.
\]
Equivalently, as a functional of a probability distribution, \(\mathrm{CVaR}_\alpha\) is Lipschitz with respect to the first Wasserstein distance:
\[
\left|
\mathrm{CVaR}_\alpha(F_B)
-
\mathrm{CVaR}_\alpha(F)
\right|
\le
\frac{1}{1-\alpha}W_1(F_B,F).
\]
For completeness, this inequality can be seen from the fact that CVaR averages the upper tail quantile function:
\[
\mathrm{CVaR}_\alpha(F)
=
\frac{1}{1-\alpha}\int_\alpha^1 F^{-1}(u)\,du,
\]
where \(F^{-1}(u)=\inf\{y:F(y)\ge u\}\).  Therefore,
\[
\begin{aligned}
\left|
\mathrm{CVaR}_\alpha(F_B)-\mathrm{CVaR}_\alpha(F)
\right|
&=
\left|
\frac{1}{1-\alpha}\int_\alpha^1
\{F_B^{-1}(u)-F^{-1}(u)\}\,du
\right|\\
&\le
\frac{1}{1-\alpha}\int_0^1
|F_B^{-1}(u)-F^{-1}(u)|\,du\\
&=
\frac{1}{1-\alpha}W_1(F_B,F).
\end{aligned}
\]
Since \(W_1(F_B,F)\to0\) almost surely, it follows that
\[
\mathrm{CVaR}_\alpha(F_B)\to \mathrm{CVaR}_\alpha(F)
\quad\text{almost surely}.
\]
Finally, the empirical Rockafellar--Uryasev minimization formula is exactly the CVaR of the empirical distribution \(F_B\):
\[
\widehat{\mathrm{CVaR}}_{\alpha,B}
=
\mathrm{CVaR}_\alpha(F_B).
\]
Thus
\[
\widehat{\mathrm{CVaR}}_{\alpha,B}
\longrightarrow
\mathrm{CVaR}_\alpha(Y)
\quad\text{almost surely}.
\]
\end{proof}

\subsection{Proof of Theorem~\ref{thm:uniform_saa}}
\label{app:proof_uniform_saa}

\begin{proof}
Fix an arbitrary feasible design \(z\in\mathfrak X\).  We prove convergence of each component of \(\widehat\rho_{T,B}(z)\).

First, by Assumption~\ref{ass:integrability},
\[
\mathbb E_T|C(z,\Xi)|<\infty.
\]
Since \(\Xi^{(1)},\Xi^{(2)},\ldots\) are iid from \(P_T\), the strong law of large numbers gives
\[
\widehat C_B(z)
=
\frac1B\sum_{b=1}^{B}C(z,\Xi^{(b)})
\longrightarrow
\mathbb E_T\{C(z,\Xi)\}
\quad\text{almost surely}.
\]

Second, again by Assumption~\ref{ass:integrability},
\[
\mathbb E_T|A(z,\Xi)|<\infty.
\]
Applying Lemma~\ref{lem:cvar_consistency} to
\[
Y=A(z,\Xi)
\]
gives
\[
\widehat{\mathcal T}_{\alpha,B}(z)
\longrightarrow
\mathcal T_\alpha(z)
\quad\text{almost surely}.
\]

Third, define the Bernoulli random variables
\[
L_b(z)=\mathbbm 1\{A(z,\Xi^{(b)})>T^\star\}.
\]
They are iid and bounded by one.  Hence the strong law of large numbers yields
\[
\widehat p_{L,B}(z)
=
\frac1B\sum_{b=1}^{B}L_b(z)
\longrightarrow
P_T\{A(z,\Xi)>T^\star\}
\quad\text{almost surely}.
\]
Similarly, for every candidate hub \(k\in H\),
\[
H_{k,b}(z)=\mathbbm 1\{H_k(z,\Xi^{(b)})>d_t\}
\]
is iid Bernoulli and bounded.  Therefore
\[
\widehat p_{H,k,B}(z)
\longrightarrow
P_T\{H_k(z,\Xi)>d_t\}
\quad\text{almost surely}.
\]

Combining these componentwise convergences and using the finiteness of the weights in \eqref{eq:formal_bayes_risk} and \eqref{eq:empirical_risk_formal}, we obtain
\[
\widehat\rho_{T,B}(z)
\longrightarrow
\rho_T(z)
\quad\text{almost surely}
\]
for the fixed design \(z\).

It remains to upgrade pointwise convergence to uniform convergence over \(\mathfrak X\).  By Assumption~\ref{ass:finite_design}, \(\mathfrak X\) is finite.  Let
\[
\mathfrak X=\{z_1,\ldots,z_M\}
\]
for some finite \(M\).  For each \(m=1,\ldots,M\), there exists an event \(\Omega_m\) with probability one on which
\[
\widehat\rho_{T,B}(z_m)\to\rho_T(z_m).
\]
The intersection
\[
\Omega_0=\bigcap_{m=1}^{M}\Omega_m
\]
also has probability one because it is a finite intersection of probability-one events.  On \(\Omega_0\),
\[
\max_{1\le m\le M}
\left|
\widehat\rho_{T,B}(z_m)-\rho_T(z_m)
\right|
\longrightarrow 0.
\]
This is exactly
\[
\sup_{z\in\mathfrak X}
\left|
\widehat\rho_{T,B}(z)-\rho_T(z)
\right|
\longrightarrow 0
\quad\text{almost surely}.
\]
\end{proof}

\subsection{Proof of Theorem~\ref{thm:argmin_consistency}}
\label{app:proof_argmin}

\begin{proof}
By Theorem~\ref{thm:uniform_saa},
\[
\sup_{z\in\mathfrak X}
|\widehat\rho_{T,B}(z)-\rho_T(z)|
\longrightarrow 0
\quad\text{almost surely}.
\]
Let \(z_T^\star\) be the unique Bayes-optimal design and define the strict risk gap
\[
\Delta_T
=
\min_{z\in\mathfrak X:\,z\ne z_T^\star}
\{\rho_T(z)-\rho_T(z_T^\star)\}.
\]
By assumption, \(\Delta_T>0\).

On the almost-sure event of uniform convergence, there exists a random integer \(B_0\) such that for all \(B\ge B_0\),
\[
\sup_{z\in\mathfrak X}
|\widehat\rho_{T,B}(z)-\rho_T(z)|
<
\frac{\Delta_T}{3}.
\]
Fix \(B\ge B_0\).  For any \(z\ne z_T^\star\),
\[
\widehat\rho_{T,B}(z)
\ge
\rho_T(z)-\frac{\Delta_T}{3}.
\]
By the definition of \(\Delta_T\),
\[
\rho_T(z)\ge \rho_T(z_T^\star)+\Delta_T.
\]
Therefore,
\[
\widehat\rho_{T,B}(z)
\ge
\rho_T(z_T^\star)+\Delta_T-\frac{\Delta_T}{3}
=
\rho_T(z_T^\star)+\frac{2\Delta_T}{3}.
\]
On the other hand,
\[
\widehat\rho_{T,B}(z_T^\star)
\le
\rho_T(z_T^\star)+\frac{\Delta_T}{3}.
\]
Thus, for every \(z\ne z_T^\star\),
\[
\widehat\rho_{T,B}(z)
>
\widehat\rho_{T,B}(z_T^\star).
\]
Hence \(z_T^\star\) is the unique minimizer of the empirical risk \(\widehat\rho_{T,B}\) for all \(B\ge B_0\).  Therefore any empirical optimizer
\[
\widehat z_{T,B}\in\arg\min_{z\in\mathfrak X}\widehat\rho_{T,B}(z)
\]
satisfies
\[
\widehat z_{T,B}=z_T^\star
\]
for all sufficiently large \(B\), almost surely.
\end{proof}

\subsection{Proof of Theorem~\ref{thm:posterior_topology_stability}}
\label{app:proof_topology_stability}

\begin{proof}
The proof has three steps: convergence of risks, preservation of the unique minimizer, and convergence of topology.

\medskip
\noindent\textbf{Step 1: Convergence of posterior risks.}
Fix a design \(z\in\mathfrak X\).  By Assumption~\ref{ass:posterior_concentration},
\[
W_1(P_T,P_0)\to0
\]
in \(P_0\)-probability.  By Assumption~\ref{ass:continuity_domination}, \(C(z,\xi)\) is continuous outside a \(P_0\)-null set and dominated by an integrable envelope.  Therefore, by the dominated convergence theorem together with weak convergence implied by \(W_1\)-convergence,
\[
\mathbb E_T\{C(z,\Xi)\}
\longrightarrow
\mathbb E_0\{C(z,\Xi)\}
\]
in probability.

The same argument applies to the arrival-time random variable \(A(z,\Xi)\).  In addition, \(W_1(P_T,P_0)\to0\) and integrability imply convergence of the induced distributions of \(A(z,\Xi)\) in first Wasserstein distance.  By the Lipschitz property of CVaR with respect to \(W_1\), used in the proof of Lemma~\ref{lem:cvar_consistency},
\[
\mathrm{CVaR}_{\alpha,T}\{A(z,\Xi)\}
\longrightarrow
\mathrm{CVaR}_{\alpha,0}\{A(z,\Xi)\}
\]
in probability.

Next consider the service-violation probability.  Since
\[
P_0\{A(z,\Xi)=T^\star\}=0,
\]
the indicator function
\[
\xi\mapsto \mathbbm 1\{A(z,\xi)>T^\star\}
\]
is continuous \(P_0\)-almost surely.  It is also bounded by one.  Hence the portmanteau theorem gives
\[
P_T\{A(z,\Xi)>T^\star\}
\longrightarrow
P_0\{A(z,\Xi)>T^\star\}
\]
in probability.  Similarly, for each hub \(k\), the assumption
\[
P_0\{H_k(z,\Xi)=d_t\}=0
\]
implies
\[
P_T\{H_k(z,\Xi)>d_t\}
\longrightarrow
P_0\{H_k(z,\Xi)>d_t\}
\]
in probability.

Combining these convergences, we obtain
\[
\rho_T(z)\longrightarrow \rho_0(z)
\]
in probability for every fixed \(z\in\mathfrak X\).

\medskip
\noindent\textbf{Step 2: Uniform convergence over the finite design space.}
By Assumption~\ref{ass:finite_design}, \(\mathfrak X\) is finite.  Write
\[
\mathfrak X=\{z_1,\ldots,z_M\}.
\]
Since \(\rho_T(z_m)\to\rho_0(z_m)\) in probability for each \(m\), finite-dimensional convergence implies
\[
\max_{1\le m\le M}
|\rho_T(z_m)-\rho_0(z_m)|
\longrightarrow 0
\]
in probability.  Equivalently,
\[
\sup_{z\in\mathfrak X}
|\rho_T(z)-\rho_0(z)|
\longrightarrow 0
\]
in probability.

\medskip
\noindent\textbf{Step 3: Preservation of the unique limiting minimizer.}
By assumption, \(z_0^\star\) is the unique minimizer of \(\rho_0\) over \(\mathfrak X\).  Since \(\mathfrak X\) is finite, the limiting risk gap
\[
\Delta_0
=
\min_{z\in\mathfrak X:\,z\ne z_0^\star}
\{\rho_0(z)-\rho_0(z_0^\star)\}
\]
is strictly positive.

Let
\[
E_T=
\left\{
\sup_{z\in\mathfrak X}
|\rho_T(z)-\rho_0(z)|
<
\frac{\Delta_0}{3}
\right\}.
\]
From uniform convergence in probability,
\[
P(E_T)\to1.
\]
On the event \(E_T\), for every \(z\ne z_0^\star\),
\[
\rho_T(z)
\ge
\rho_0(z)-\frac{\Delta_0}{3}
\ge
\rho_0(z_0^\star)+\Delta_0-\frac{\Delta_0}{3}
=
\rho_0(z_0^\star)+\frac{2\Delta_0}{3}.
\]
Also,
\[
\rho_T(z_0^\star)
\le
\rho_0(z_0^\star)+\frac{\Delta_0}{3}.
\]
Therefore, for every \(z\ne z_0^\star\),
\[
\rho_T(z)>\rho_T(z_0^\star).
\]
Thus \(z_0^\star\) is the unique posterior Bayes-risk minimizer on \(E_T\).  Consequently,
\[
P\{z_T^\star=z_0^\star\}\ge P(E_T)\to1.
\]
Writing \(z_T^\star=(g_T^\star,x_T^\star)\) and \(z_0^\star=(g_0^\star,x_0^\star)\), this implies
\[
P\{g_T^\star=g_0^\star\}\to1.
\]
Hence the selected topology converges in probability to the limiting optimal topology.
\end{proof}




\end{document}